\definecolor{shadecolor}{gray}{0.8}
\definecolor{lgray}{gray}{0.5}
\DeclareMathOperator{\Res}{Res}         
\DeclareMathOperator{\Tr}{Tr}             
\DeclareMathOperator{\sign}{sign}         
\DeclareMathOperator{\oo}{\scriptstyle{\mathcal{O}}}             
\DeclareMathOperator{\OO}{\mathcal{O}}             
\newcommand{\oz}{\oo_0}             
\newcommand{\Oz}{\OO_0}             
\newcommand{\oinf}{\oo_{\infty}}             
\newcommand{\Oinf}{\OO_{\infty}}             
\DeclareMathOperator{\Dom}{Dom}			
\newcommand{\C}{\mathbb{C}}              
\newcommand{\DD}{\mathcal{D}}           
\renewcommand{\H}{\mathcal{H}}          
\newcommand{\N}{\mathbb{N}}            
\newcommand{\floor}[1]{\lfloor#1\rfloor} 
\newcommand{\R}{\mathbb{R}}             
\newcommand{\wt}{\widetilde}                 
\newcommand{\Z}{\mathbb{Z}}                 
\renewcommand{\.}{\cdot}                         
\newcommand{\vc}{\vcentcolon =}             
\def\<#1,#2>{\langle#1\,,\,#2\rangle}      
\newbox\ncintdbox \newbox\ncinttbox
\newcommand{\oh}{{\tfrac{1}{2}}}          
\renewcommand{\<}{\left\langle}		
\renewcommand{\Re}{\mathfrak{R}}			
\newcommand{\zP}{\zeta_{P}}
\newcommand{\PP}{\mathfrak{P}}
\newcommand{\Rez}[1]{\underset{#1}{\Res} \;}		
\def\R{\mathbb{R}}
\def\C{\mathbb{C}}
\def\N{\mathbb{N}}
\def\Z{\mathbb{Z}}
\newcommand{\Mellin}{\mathcal{M}}
\newcommand{\Fourier}{\mathcal{F}}
\DeclareMathOperator{\heat}{htr}
\newcommand{\abs}[1]{\left\lvert#1\right\rvert}
\def\oh{\frac{1}{2}}
\newcommand{\tzero}{\; \underset{t \downarrow 0}{\sim} \;}	
\newcommand{\fa}{\mathrm{deg} \, A}
\newcommand{\fb}{\mathrm{deg} \, B}
\newcommand{\Rnn}{\R^+ \cup \{0\}}		
\newcommand{\Np}{\N^+}
\newcommand{\ZZ}{\mathcal{Z}}								
\newcommand{\trunc}[1]{\hat{#1}}
\newcommand{\zT}{\trunc{\zeta}_{P}^{N}}
\newcommand{\cv}{= \vcentcolon}  
\DeclareMathOperator{\Li}{Li}		
\def \bangle{ \atopwithdelims \langle \rangle}
\newtheorem{assumption}{Assumption}[section]
\newtheorem{theorem}[assumption]{Theorem}
\newtheorem{Cor}[assumption]{Corollary}
\newtheorem{lem}[assumption]{Lemma}
\newtheorem{defn}[assumption]{Definition}
\newtheorem{Def}[assumption]{Definition}
\newtheorem{definition}[assumption]{Definition}
\newtheorem{prop}[assumption]{Proposition}
\newtheorem{Prop}[assumption]{Proposition}
\newtheorem{remark}[assumption]{Remark}
\newtheorem{example}[assumption]{Example}
\setlist[enumerate,1]{label=\upshape(\textit{\roman*})}
\begin{document}

{
\makeatletter\def\@fnsymbol{\@arabic}\makeatother 

\title{Asymptotic and exact expansions of heat traces}
\author{
Micha\l\  Eckstein\footnote{Faculty of Physics, Astronomy and Applied Computer Science, Jagellonian University}, 
Artur Zaj\k{a}c\footnote{Faculty of Mathematics and Computer Science, Jagellonian University}
}


\maketitle
}

\begin{abstract}
We study heat traces associated with positive unbounded operators with compact inverses. With the help of the inverse Mellin transform we derive necessary conditions for the existence of a short time asymptotic expansion. The conditions are formulated in terms of the meromorphic extension of the associated spectral zeta-functions and proven to be verified for a large class of operators. We also address the problem of convergence of the obtained asymptotic expansions. General results are illustrated with a number of explicit examples.
\end{abstract}


\section{Introduction}

Given a positive, possibly unbounded, operator $P$ with a compact resolvent, acting on a separable infinite-dimensional Hilbert space $\H$ one can define the associated heat operator $e^{-t\, P}$ for $t >0$. The latter, under some mild conditions on $P$, is trace class for any $t>0$. It turns out that a close inspection of its kernel, i.e. the function $t \mapsto \Tr e^{-t\, P}$, reveals a lot of information of geometrical nature. For instance, if $P$ is a differential operator of Laplace type defined on a closed Riemannian manifold then the classical results \cite{Gilkey1} show that there exists an asymptotic expansion of the form
\begin{align} \label{GilkeyHeat}
\Tr e^{-t\, P} \tzero \sum_{k \geq 0} a_k(P) t^{(k-d)/2},
\end{align}
where $d$ is the dimension of the manifold and $a_k$'s -- called Seeley-deWitt coefficients -- are given by the integrals over the manifold of some geometrical invariants. Moreover, the coefficients $a_k$ can be expressed as
\begin{align*}
a_k(P) = \Rez{s=(d-k)/2} \Gamma(s) \zP(s),
\end{align*}
where $\zP(s) \vc \Tr P^{-s}$ is the (meromorphic extension of the) spectral zeta-function associated with $P$.

The existence of an asymptotic expansion of $\Tr e^{-t\, P}$ was proven for $P$ being a classical positive elliptic pseudodifferential operator of order $m \in \N$ (see \cite{GilkeyGrubb} and references therein). In this case
\begin{align*}
\Tr  e^{-t P} \tzero \sum_{k = 0}^\infty a_k(P) t^{(-d+k)/m} + \sum_{l=1}^{\infty} b_l(P) t^l \log t
\end{align*}
and
\begin{align*}
a_k(P) & = \Rez{s=(d-k)/m} \Gamma \left( s \right) \zeta_{P}(s),\\
b_l(P) & = - \Rez{s=-l} (s+l) \Gamma \left( s \right) \zeta_{P}(s).
\end{align*}
In fact, this result can be extended to certain classes of \textit{nonclassical} pseudodifferential operators. In \cite{Lesch} for instance, the heat traces of pseudodifferential operators with log-polyhomogeneous symbols have been studied.

With the development of noncommutative geometry \cite{ConnesNCG}, the need came to investigate heat traces associated with positive functions of Dirac operators in the framework of spectral triples. Unfortunately, there is no analogue of the formula \eqref{GilkeyHeat} for a general spectral triple \cite{Vassilevich2}. In fact, the existence of an asymptotic expansion of the heat trace is assumed whenever needed in applications (see for instance \cite[Section 11]{ConnesMarcolli}, \cite[Section 2.1]{ConnesModular}) and has been proven rigorously only for a few specific examples \cite{HeatNCT,Vassilevich3,PodlesSA}. Whence the results of \cite{HeatNCT,Vassilevich3} essentially used the methods of pseudodifferential calculus, the casus of the standard Podle\'s sphere \cite{PodlesSA} required completely different tools (see Section \ref{sec:exp}).

We note that the interplay between heat traces and spectral zeta-functions has been investigated in a very general framework of von Neumann algebras by a number of authors \cite{ConnesNCG,DixZeta6,DixZeta4,DixZeta5}. However, the efforts of the latter focused on the leading behaviour of heat traces and its consequences for measurability.

The purpose of this paper is to study heat traces associated with general densely defined positive operators $P$ with compact inverses. In particular, we formulate sufficient conditions for the existence of a small $t$ asymptotic expansion of $\Tr e^{-t\, P}$. Having an asymptotic expansion at hand, a natural question one may pose is that of its convergence. This issue has not been studied in full generality even in the context of $P$ being a differential operator (see however \cite{PodlesSA,ILVGlobal}). We show how the conditions on $P$ shall be refined in order to get an exact formula for the heat trace valid on some open interval $(0,T)$. The motivation behind our work comes from noncommutative geometry, but the framework of the studies is even wider.

The heat trace methods have multifarious applications in theoretical physics (see \cite{VassilevichReport} for a review). They are in common use in quantum field theory \cite{HeatQFT,Anomalies,Elizalde89,Fulling}, also in its noncommutative version \cite{Wulkenhaar}. In general, one only disposes of an asymptotic expansion of the heat trace as  $t \downarrow 0$. This implies that the field-theoretic calculations performed with the help of this method are only perturbative. Needless to say that a control on the convergence of a perturbative expansion is of crucial importance.

In noncommutative geometry, the heat trace is the cornerstone of bosonic spectral action computations \cite{ConnesMarcolli,ConnesSA,ILVGlobal,ILVTorsion,ILVWeak}. The large energies expansion of the latter is based on the asymptotic expansion of the heat trace associated with the relevant Dirac operator. Recently, also the exact computations of the spectral action gained interest \cite{ConnesFLRW,PhD,PodlesSA,Marcolli1,Marcolli2,Piotrek1,TehPHD} due to their possible application to the study of cosmic topology.

The plan of the paper presents itself as follows: In Section \ref{Prelim} we recollect some basic notions on spectral functions associated with positive operators. Then, in Section \ref{sec:general} we discuss in details the interplay between the meromorphic extension of the spectral zeta function $\zP$ and the asymptotic expansion of the associated heat trace $\Tr e^{-t\, P}$, by gathering results on general Dirichlet series \cite{Hardy,Hardy_div} and the Mellin transform \cite{Flajolet,Paris}. Moreover, we present a set of sufficient assumptions on $P$ so that the associated heat trace is controlled for $t$ in some non-empty open interval. Section \ref{sec:examples} illustrates the general theorems with various special cases and examples coming from Dirac-type operators on both classical manifolds and noncommutative spaces. 
We end with an outlook on the possible generalisations and applications of our results. We also discuss the limitations of the method and compare its usefulness with the Tauberian theorems commonly used in this domain.

\section{Preliminaries}\label{Prelim}

\subsection{Notations}

Let us first fix some notations:
\begin{itemize}
\item $\N$ denotes the non-negative integers, $\Np$ the positive ones, $\Z^*$ stands for the non-zero integers and $\R^+$ for positive reals.
\item Unless stated otherwise, $t$ will always denote a positive parameter and $s$ a complex one.
\item $f(s) \approx g(s)$ means that $\lim_{\abs{s} \to \infty} \, f(s)/g(s) = 1$.
\item $f(x) = \OO_{x\to x_0} \left( g(x) \right)$ means that $\limsup_{x \to x_0} \abs{f(x)/g(x)} < \infty$, for $x,x_0 \in \R$. The notation $f(x) = \OO_{x_0} \left( g(x) \right)$ will be used when the variable is obvious.
\item $f(x) = \oo_{x\to x_0} \left( g(x) \right)$ means that $\limsup_{x \to x_0} \abs{f(x)/g(x)} = 0$, for $x,x_0 \in \R$. The notation $f(x) = \oo_{x_0} \left( g(x) \right)$ will be used when the variable is obvious.
\item $f(t) \tzero \sum_n \phi_n(t)$ denotes an asymptotic expansion (see Definition \ref{DefAsymp}) of $f$ as $t$ tends to 0 from above.
\item Unless stated otherwise, $P$ will be a positive densely defined operator with a compact inverse, acting on a separable infinite-dimensional Hilbert space.
\end{itemize}

\subsection{Heat traces}

\begin{definition}
The heat trace of the operator $P$ is the function $\heat_P:\R^+\to\R^+$, defined as
\begin{equation*}
\heat_P(t)\vc\Tr e^{-t\,P}.
\end{equation*}
We say that the heat trace is \emph{well-defined} if  $e^{-t \,P}$ is a trace class operator for any $t>0$.
\end{definition}

As $P$ has a compact inverse, its spectrum $\sigma(P)$ is a discrete subset of $\R$, which can be ordered into a sequence increasing to infinity 
\[
\sigma(P)=(\lambda_n)_{n=0}^\infty\,, \qquad 0<\lambda_0<\lambda_1<\ldots, \qquad \lim_{n\to\infty}\lambda_n=\infty.
\]
We will denote the multiplicity of the eigenvalue $\lambda_n\in\sigma(P)$ by $M_n$. For further purposes, we also define the spectral growth function as
\begin{align}\label{spectral_growth}
N(\lambda)\vc\sum_{\{n\,:\,\lambda_n\leq\lambda\}} M_n .
\end{align}

With these conventions, the heat trace can be written as
\begin{equation}
\label{hk_sum}
\heat_P(t)=\sum_{n=0}^{\infty}M_n\,e^{-t\,\lambda_n}.
\end{equation}

The sum is of the form of a general Dirichlet series, which is defined as
\begin{equation}
\label{gen_Dirichlet_series}
\sum_{n=0}^\infty a_n \,e^{-s\,b_n},
\end{equation}
for $s$ in some (possibly empty) subset of $\C$, $a_n\in \C$ and $(b_n)_{n=0}^{\infty}$ a sequence of real numbers increasing to infinity. The convergence of \eqref{gen_Dirichlet_series} is governed by the following theorem:

\begin{theorem}[\cite{Hardy}, Theorem 7 with the footnote ]
\label{thm:abscissa}
If $\sum_{n=0}^\infty a_n=\infty$ then the general Dirichlet series \eqref{gen_Dirichlet_series} converges for $\Re (s)>L$ and diverges for $\Re (s)<L$, where $L$ is given by
\begin{equation*}
L=\limsup_{n\to\infty}\, b_n^{-1}\,\log\left(a_0+\ldots+a_n\right)
\end{equation*}
and $L\geq0$.
\end{theorem}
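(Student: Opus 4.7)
The plan is to adapt Hardy's classical proof for general Dirichlet series, based on Abel's summation formula applied in two directions. Since the definition of $L$ involves $\log(a_0 + \cdots + a_n)$, the partial sums must be eventually positive; I therefore treat the $a_n$ as non-negative reals, which also fits the paper's application with $a_n = M_n$.

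For the convergence half-plane, fix $\sigma \vc \Re(s) > L$ and pick an intermediate $\sigma' \in (L,\sigma)$. The defining property of the limsup yields $A_n \vc a_0 + \cdots + a_n \leq e^{\sigma' b_n}$ for $n$ large. Abel summation on $\sum_{n=M}^{N} a_n e^{-s b_n}$ produces a boundary term $A_N e^{-s b_N}$ of order $e^{-(\sigma-\sigma')b_N}$ plus a telescoped sum over differences $e^{-s b_n} - e^{-s b_{n+1}}$. Writing these differences as integrals $-s\int_{b_n}^{b_{n+1}} e^{-su}\,du$ and inserting $A_n \leq e^{\sigma' u}$ on the integration interval (legal because $\sigma' > L \geq 0$) collapses everything to $|s|\int_{b_M}^{b_N} e^{(\sigma'-\sigma)u}\,du$, bounded by $\frac{|s|}{\sigma-\sigma'} e^{-(\sigma-\sigma')b_M}$. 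Both bounds vanish as $M,N\to\infty$, so Cauchy's criterion yields convergence.

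For the divergence half-plane I argue by contradiction. If $\sigma \vc \Re(s) < 0$, then $a_n e^{-\sigma b_n}$ cannot tend to zero (infinitely many $a_n$ are positive and $e^{|\sigma|b_n} \to \infty$), so divergence is automatic. For $0 \leq \sigma < L$, assume the series converges; the partial sums $S_n \vc \sum_{k \leq n} a_k e^{-s b_k}$ are then bounded, and inverting Abel summation gives
\begin{equation*}
A_N = S_N e^{s b_N} - \sum_{n=0}^{N-1} S_n \bigl(e^{s b_{n+1}} - e^{s b_n}\bigr).
\end{equation*}
An analogous integral bound produces $|A_N| \leq C e^{\sigma b_N}$, so $b_N^{-1}\log A_N \leq \sigma + \oo(1)$; taking limsup contradicts $L > \sigma$. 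The inequality $L \geq 0$ is immediate because non-negative $a_n$ with $\sum a_n = \infty$ force $A_n \to \infty$, hence $\log A_n$ is eventually positive, and dividing by positive $b_n \to \infty$ preserves non-negativity of the limsup.

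The main technical point is the integral-estimate trick, which controls the telescoped Abel sum without any hypothesis on the spacing or growth rate of the $b_n$'s — this is exactly what lets the argument work for an arbitrary increasing sequence rather than only the classical case $b_n = \log n$. Once that estimate is in hand, both directions reduce to Cauchy's criterion and elementary manipulations of the limsup.
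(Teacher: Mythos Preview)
The paper does not give a proof of this theorem at all: it is quoted verbatim from Hardy's book (Theorem 7 with its footnote) and simply used as a tool. So there is no ``paper's own proof'' to compare against; your proposal is essentially a reconstruction of the classical Abel-summation argument from Hardy--Riesz, and the overall strategy is exactly the right one.

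That said, there is a genuine (if easily repaired) error in your divergence argument for $\sigma = \Re(s) < 0$. You claim that the terms $a_n e^{-\sigma b_n}$ cannot tend to zero because infinitely many $a_n$ are positive and $e^{|\sigma|b_n}\to\infty$. This is false: take $a_n = 1/n$, $b_n = \log n$, $\sigma = -\tfrac12$; then $a_n e^{-\sigma b_n} = n^{-1/2} \to 0$, yet $\sum a_n = \infty$ (and indeed the series $\sum n^{-1/2}$ diverges for the right reason, just not the one you gave). The fix is simply to drop this case split: your contradiction argument for $0 \le \sigma < L$ works verbatim for $\sigma < 0$ as well --- the inverted Abel summation still yields $|A_N| \le C e^{\sigma b_N}$, which for $\sigma<0$ forces $A_N\to 0$, contradicting $A_N\to\infty$. (Alternatively, one can invoke the standard fact that convergence of a general Dirichlet series at $s_0$ implies convergence for all $s$ with $\Re(s)>\Re(s_0)$, so convergence at some $s$ with $\Re(s)<0$ would force convergence at $s=0$, contradicting $\sum a_n = \infty$.)

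One further small point: in the convergence half you use $A_n \le e^{\sigma' b_n}$ ``for $n$ large'', and then insert this into the integral bound. Strictly speaking you need the bound to hold on the whole range $[b_M,b_N]$; since you are proving a Cauchy criterion this is harmless (take $M$ past the threshold), but it is worth saying explicitly.
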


We will call such $L$ the \emph{abscissa of convergence} of the general Dirichlet series \eqref{gen_Dirichlet_series}. The inequality $L\geq0$ follows from the fact that for $s=0$ \eqref{gen_Dirichlet_series} is equal to $\sum_{n=0}^\infty a_n$, which is divergent. Note that the abscissa can be infinite, what means that the series is nowhere convergent. 

\begin{Prop}
\label{cor:hk_abscissa}
The heat trace of the operator $P$ is well-defined:
\begin{enumerate}
\item \label{item:hk_abscissa2} if and only if $N(\lambda_n)=\OO_{n\to\infty}(e^{\epsilon\lambda_n})$ 
for any positive $\epsilon$;
\item \label{item:hk_abscissa3} if there exist $\alpha \geq 0$ such that $M_n =\Oinf(n^\alpha)$ and $\log n=\oinf(\lambda_n)$ (i.e. $\lambda_n$ grow faster than $\log n$).
\end{enumerate}
\end{Prop}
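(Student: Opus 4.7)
The plan is to read off both statements directly from Theorem~\ref{thm:abscissa}, applied to the Dirichlet series \eqref{hk_sum} with $a_n = M_n$ and $b_n = \lambda_n$. Since $\H$ is infinite-dimensional, $\sum_n M_n = \infty$, so the hypothesis of Theorem~\ref{thm:abscissa} is satisfied and the abscissa of convergence is
\[
L = \limsup_{n\to\infty}\, \lambda_n^{-1} \log\!\left(M_0 + M_1 + \cdots + M_n\right) = \limsup_{n\to\infty}\, \lambda_n^{-1} \log N(\lambda_n),
\]
where the last equality uses the definition \eqref{spectral_growth} and the fact that $\lambda_0 < \lambda_1 < \cdots$, so $N(\lambda_n) = \sum_{k=0}^{n} M_k$. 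The heat trace $\heat_P(t) = \sum_n M_n e^{-t\lambda_n}$ is well-defined exactly when the series converges for every $t > 0$, i.e.\ when $L \leq 0$; combined with the inequality $L \geq 0$ from Theorem~\ref{thm:abscissa}, this is equivalent to $L = 0$.

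For part (\textit{i}), I would then translate $L = 0$ into the stated asymptotic condition. Since $N(\lambda_n) \geq 1$ and $\lambda_n > 0$, the $\limsup$ above is non-negative, so $L = 0$ is equivalent to $\lambda_n^{-1} \log N(\lambda_n) \to 0$, i.e.\ to the statement that for every $\epsilon > 0$ one has $\log N(\lambda_n) \leq \epsilon\, \lambda_n$ for all sufficiently large $n$. This is precisely $N(\lambda_n) = \OO_{n\to\infty}(e^{\epsilon \lambda_n})$ for every $\epsilon > 0$, giving both implications at once.

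For part (\textit{ii}), the assumption $M_n = \Oinf(n^\alpha)$ yields
\[
N(\lambda_n) = \sum_{k=0}^{n} M_k = \Oinf\!\left(n^{\alpha+1}\right),
\]
so $\log N(\lambda_n) = (\alpha+1)\log n + \Oinf(1)$. Combined with the second hypothesis $\log n = \oinf(\lambda_n)$, this gives $\lambda_n^{-1} \log N(\lambda_n) \to 0$, hence $L = 0$, and the conclusion follows from part (\textit{i}). There is no real obstacle here; the only point requiring mild care is the quantifier order when unpacking $L = 0$ into the $\epsilon$-uniform bound, and the elementary estimate turning a bound on $M_n$ into a bound on its partial sums $N(\lambda_n)$.
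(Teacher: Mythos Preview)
Your proposal is correct and follows essentially the same approach as the paper: apply Theorem~\ref{thm:abscissa} with $a_n=M_n$, $b_n=\lambda_n$ to obtain $L=\limsup_{n}\lambda_n^{-1}\log N(\lambda_n)$, then unpack $L=0$ into the $\epsilon$-bound for~(\textit{i}) and estimate $N(\lambda_n)=\Oinf(n^{\alpha+1})$ for~(\textit{ii}). One very minor quibble: in part~(\textit{ii}) you write $\log N(\lambda_n)=(\alpha+1)\log n+\Oinf(1)$ as an equality, whereas only the upper bound is actually available (and needed); this does not affect the argument.
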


\begin{proof}
In order to have $\heat_P$ well-defined, we need the abscissa of convergence $L$ of the series \eqref{hk_sum} to be 0. Taking $a_n=M_n, b_n=\lambda_n$ for all $n\in\N$ we see that $\sum_{n=0}^\infty a_n=\infty$ as $P$ has an infinite number of eigenvalues. Now, by Theorem \ref{thm:abscissa} we get that $\heat_P$ is well-defined iff $L=\limsup_{n\to\infty}\,\lambda_n^{-1}\log N(\lambda_n)$ equals to 0. This is equivalent to the statement that for any $\epsilon>0$ there exists $n_0$ such that for any $n\geq n_0$ we have $\lambda_n^{-1}\log N(\lambda_n) \leq \epsilon$ or, equivalently, $N(\lambda_n) \leq e^{\epsilon\lambda_n}$. This in turn is equivalent to $N(\lambda_n)=\Oinf(e^{\epsilon\lambda_n})$ for any $\epsilon>0$ and thus \ref{item:hk_abscissa2} follows.

For statement \ref{item:hk_abscissa3} let us first note that since $M_n = \Oinf(n^{\alpha})$ for some $\alpha\geq0$, then $N(\lambda_n)=\Oinf(n^{\alpha+1})$. Take a positive constant $C$ such that $N(\lambda_n)\leq Cn^{\alpha+1}$ for $n\geq N$ for some $N\in\N$. Then
\begin{align*}
L&=\limsup_{n\to\infty}\,\lambda_n^{-1} \log\left(N(\lambda_n)\right)
\leq\limsup_{n\to\infty}\,\lambda_n^{-1} \log\left(Cn^{\alpha+1}\right)\\
&=\limsup_{n\to\infty}\,\lambda_n^{-1} \left(\log C+(\alpha+1)\log n \right),
\end{align*}
what tends to 0 under the hypothesis of \ref{item:hk_abscissa3}.
\end{proof}

Note that the condition in \ref{item:hk_abscissa3} is only sufficient: consider e.g. $\lambda_n=n^2,\text{ and } M_n=2^n$ which satisfies the hypothesis of \ref{item:hk_abscissa2} but not the one of \ref{item:hk_abscissa3}.

To conclude this subsection, we remark that heat traces can be defined for operators bounded from below, possibly with a non-trivial kernel. However, the zeta-functions to be described below apply only to positive invertible operators (see, however, Section \ref{sec:truncated}).

\subsection{Spectral zeta-functions}

\begin{definition}
The \emph{zeta-function} associated with the operator $P$ is a complex function
\begin{equation*}
\C \supset \Dom(\zeta_P) \ni s \mapsto \zeta_P(s)=\Tr P^{-s}.
\end{equation*}
We say that $\zeta_P$ is \emph{well-defined} if $\Dom(\zeta_P)$ is non-empty.
\end{definition}

Using the spectral theorem this can be written as
\begin{equation}
\label{zeta_sum}
\zeta_P(s)=\sum_{n=0}^{\infty} M_n\,\lambda_n^{-s},
\end{equation}
which is again a general Dirichlet series \eqref{gen_Dirichlet_series} with $a_n = M_n$ and $b_n = \log \lambda_n$. Using Theorem \ref{thm:abscissa} we get

\begin{Prop}
\label{cor:zeta_abscissa}
The abscissa of convergence of the zeta-series \eqref{zeta_sum} is given by 
\begin{equation}
\label{zeta_abscissa_inf}
L=\inf\{\alpha\in\R: N(\lambda_n)=\OO_{n\to\infty}(\lambda_n^\alpha)\}.
\end{equation}
\end{Prop}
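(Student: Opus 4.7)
The plan is to apply Theorem \ref{thm:abscissa} directly to the general Dirichlet series \eqref{zeta_sum}, with the identification $a_n = M_n$ and $b_n = \log \lambda_n$. The hypothesis $\sum_n a_n = \infty$ is immediate because each multiplicity satisfies $M_n \geq 1$, so the series $\sum_n M_n$ diverges. Moreover, since $(\lambda_n)$ is strictly increasing and tends to infinity, the sequence $b_n = \log \lambda_n$ is (eventually) an increasing sequence of reals tending to infinity, as required by the theorem. Plugging these in, and using the definition \eqref{spectral_growth} of the spectral growth function, which gives $M_0 + \cdots + M_n = N(\lambda_n)$, Theorem \ref{thm:abscissa} yields
\begin{equation*}
L = \limsup_{n \to \infty} \frac{\log N(\lambda_n)}{\log \lambda_n}.
\end{equation*}

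It then remains to identify this $\limsup$ with the infimum appearing in \eqref{zeta_abscissa_inf}. Call the infimum $L''$. I would prove the two inequalities separately. For $L \leq L''$: fix any $\alpha$ in the set over which the infimum is taken, so that there exists $C > 0$ and $n_0$ with $N(\lambda_n) \leq C \lambda_n^{\alpha}$ for all $n \geq n_0$. Taking logarithms and dividing by $\log \lambda_n$ (positive for large $n$) gives
\begin{equation*}
\frac{\log N(\lambda_n)}{\log \lambda_n} \leq \alpha + \frac{\log C}{\log \lambda_n} \xrightarrow[n\to\infty]{} \alpha,
\end{equation*}
hence $L \leq \alpha$, and since $\alpha$ is arbitrary in the relevant set, $L \leq L''$.

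For the reverse inequality $L'' \leq L$, pick any $\alpha > L$. By definition of $\limsup$ there exists $n_1$ such that $\log N(\lambda_n) / \log \lambda_n < \alpha$ for all $n \geq n_1$, which rearranges to $N(\lambda_n) < \lambda_n^{\alpha}$ for $n \geq n_1$. This shows $N(\lambda_n) = \OO_{n\to\infty}(\lambda_n^{\alpha})$, so $\alpha$ belongs to the set in \eqref{zeta_abscissa_inf}, and therefore $L'' \leq \alpha$. Letting $\alpha \downarrow L$ gives $L'' \leq L$, completing the argument. No step here is a real obstacle: the only mild subtlety is checking the two hypotheses of Theorem \ref{thm:abscissa} (divergence of $\sum M_n$ and the $b_n$ being an increasing sequence tending to $+\infty$), which both follow at once from the standing assumption that $P$ has infinitely many eigenvalues accumulating only at infinity.
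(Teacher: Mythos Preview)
Your proposal is correct and follows essentially the same route as the paper: apply Theorem \ref{thm:abscissa} with $a_n=M_n$, $b_n=\log\lambda_n$ to obtain $L=\limsup_n \log N(\lambda_n)/\log\lambda_n$, then identify this with the infimum by the two inequalities. The only cosmetic difference is that the paper phrases the reverse inequality as a contradiction (assume $L<K$ and derive a contradiction) whereas you argue directly and let $\alpha\downarrow L$; these are equivalent.
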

Note that it may happen that $L=+\infty$, which means that the zeta-function is not well-defined (e.g. when $\lambda_n=\log n, M_n=1$ for $n\geq2$, then $N(\lambda)\approx\exp\lambda$).

\begin{proof}
Take any $\alpha$ such that there exists a constant $c$ satisfying $N(\lambda_n)\leq c\,\lambda_n^\alpha$. Then by setting $a_n=M_n, b_n=\log\lambda_n$ we have by Theorem \ref{thm:abscissa}
\begin{equation}
\label{limsup_lambda}
L=\limsup_{n\to\infty}\tfrac{\log N(\lambda_n)}{\log\lambda_n}\leq\limsup_{n\to\infty}\tfrac{\log (c\,\lambda_n^\alpha)}{\log\lambda_n}=\alpha.
\end{equation}
Thus $L\leq K$, where $K$ denotes the RHS of \eqref{zeta_abscissa_inf}.

On the other hand, if we suppose that $L<K$ then there exists $\alpha$ such that $L<\alpha<K$. Using first equality of \eqref{limsup_lambda} we can find such $n_0\in\N$ that $\log N(\lambda_n) (\log\lambda_n)^{-1}<\alpha$ for any $n\geq n_0$. Then we get that $N(\lambda_n)<\lambda_n^\alpha$ for $n\geq n_0$, which contradicts the assumption $\alpha<K$. Thus $L=K$ and the proposition is proved.
\end{proof}

\begin{Prop}
\label{cor:hk_order}
If an operator $P$ is such that its zeta-function is well-defined with finite abscissa of convergence $L$, then its heat trace is also well-defined and
\begin{equation*}
\heat_P(t)=\Oz(t^{-\alpha})\qquad\text{for all }\alpha>L.
\end{equation*}
\end{Prop}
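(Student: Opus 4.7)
The plan is to exploit the finiteness of the zeta abscissa to obtain a polynomial bound on the counting function $N$, and then convert the heat sum into an integral amenable to rescaling.

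First, I would invoke Proposition \ref{cor:zeta_abscissa}: finiteness of the abscissa $L$ means that for every $\alpha > L$ there exists a constant $C > 0$ with $N(\lambda_n) \leq C \lambda_n^{\alpha}$ for all $n$ large enough. Because $N$ is nondecreasing and constant on each interval $[\lambda_n, \lambda_{n+1})$, this bound extends, possibly with an enlarged constant $C'$, to $N(\lambda) \leq C' \lambda^{\alpha}$ for all $\lambda \geq \lambda_0$, while $N(\lambda) = 0$ for $\lambda < \lambda_0$. Since $\lambda^{\alpha}$ is dominated by $e^{\epsilon \lambda}$ at infinity for any $\epsilon > 0$, Proposition \ref{cor:hk_abscissa}(i) immediately gives that $\heat_P$ is well-defined for all $t > 0$.

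Next, I would rewrite the heat sum as a Riemann--Stieltjes integral against the step function $N$ and integrate by parts; the boundary term at $0$ vanishes because $N(\lambda) = 0$ for $\lambda < \lambda_0$, and at infinity because the polynomial growth of $N$ is killed by $e^{-t\lambda}$ for any $t > 0$. This yields
\begin{equation*}
\heat_P(t) \;=\; \sum_{n=0}^{\infty} M_n\, e^{-t \lambda_n} \;=\; \int_0^\infty e^{-t\lambda}\, dN(\lambda) \;=\; t \int_0^\infty N(\lambda)\, e^{-t\lambda}\, d\lambda.
\end{equation*}
Inserting the bound $N(\lambda) \leq C' \lambda^{\alpha}$ and then substituting $u = t\lambda$ gives
\begin{equation*}
\heat_P(t) \;\leq\; C'\, t \int_0^\infty \lambda^{\alpha}\, e^{-t\lambda}\, d\lambda \;=\; C'\, \Gamma(\alpha+1)\, t^{-\alpha},
\end{equation*}
valid for every $t > 0$, which is exactly the bound $\heat_P(t) = \Oz(t^{-\alpha})$ for any $\alpha > L$.

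I do not foresee a genuine obstacle here: the essential content is the polynomial bound on $N$ coming for free from Proposition \ref{cor:zeta_abscissa}, and the only mildly technical step is the Abel summation / Stieltjes integration by parts, which is routine given the polynomial growth of $N$ and the exponential cutoff. The natural scale $u = t \lambda$ appearing in the resulting integral is precisely what translates the zeta-abscissa exponent into the heat-trace order at $t = 0$.
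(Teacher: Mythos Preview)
Your proof is correct, but it takes a different route from the paper's. The paper's argument is more direct: for any $\alpha > L$ it uses the scalar bound $x^{\alpha} e^{-x} \leq C(\alpha)$ (valid for $x \geq 0$) with $x = t\lambda_n$ to obtain, termwise,
\[
t^{\alpha} \heat_P(t) = \sum_{n=0}^{\infty} M_n\, t^{\alpha} e^{-t\lambda_n} \leq C(\alpha) \sum_{n=0}^{\infty} M_n\, \lambda_n^{-\alpha} = C(\alpha)\,\zeta_P(\alpha) < \infty,
\]
which is the desired bound in one line, using only the convergence of $\zeta_P$ at $\alpha$ and no integration by parts or counting-function estimate. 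Your approach instead passes through the polynomial bound on $N(\lambda)$ furnished by Proposition~\ref{cor:zeta_abscissa} and an Abel/Stieltjes integration, arriving at the explicit constant $C'\,\Gamma(\alpha+1)$. Both arguments are valid; the paper's is shorter and keeps everything at the level of the zeta-series itself, while yours makes the Tauberian-flavoured link between the growth of $N$ and the small-$t$ behaviour of $\heat_P$ more visible and yields an explicit Gamma constant.
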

\begin{proof}
Comparing Proposition \ref{cor:hk_abscissa} \ref{item:hk_abscissa2} with Proposition \ref{cor:zeta_abscissa} we get that $\heat_P$ is well-defined. 

Now, let us take any $\alpha > L \geq 0$. Then, there exists a positive constant $C(\alpha)$ such that
\begin{align*}
x^{\alpha} e^{-x} \leq C(\alpha),
\end{align*}
for any $x >0$, as the function $x \mapsto x^{\alpha} e^{-x}$ is bounded on $\Rnn$ for any $\alpha \geq 0$.
Therefore, for any $\alpha > L$ we have
\begin{align*}
0 \leq t^{\alpha} \heat_P(t) = \sum_{n=0}^{\infty} M_n \, t^{\alpha} e^{-t \lambda_n} \leq C(\alpha) \sum_{n=0}^{\infty} M_n \lambda_n^{-\alpha} = C(\alpha) \zeta_P(\alpha) < \infty.
\end{align*}
Hence $\heat_P(t)=\Oz(t^{-\alpha})$.
\end{proof}

\subsection{Mellin transform}

\begin{definition} (see \cite{Paris} for instance)
The Mellin transform of a locally Lebesgue integrable function $f$ defined over $\R^+$ is a complex function $\Mellin[f]$ given by
\begin{equation*}
\Mellin[f](s)=\int_0^\infty f(t)\,t^{s-1}\,dt
\end{equation*}
The inverse Mellin transform of a meromorphic function $g$, denoted by $\Mellin^{-1}[g]$, reads
\begin{equation*}
\Mellin^{-1}[g](t)=\tfrac1{2\pi i}\int_{c-i\infty}^{c+i\infty}g(s)\,t^{-s}\,ds,
\end{equation*}
for some real $c$ such that the integral exists for all $t>0$.
\end{definition}
In general, the Mellin transform is defined only in some region of the complex plane. This region turns out to be a strip,
called the fundamental strip (see \cite[Definition 1]{Flajolet}). If $f(t)=\Oz(t^\alpha)$ and $f(t)=\Oinf(t^\beta)$, then $\Mellin[f](s)$ exists at least in the strip $-\alpha<\Re (s)<-\beta$ (cf. \cite{Flajolet}, Lemma 1). The invertibility of Mellin transform is addressed by the following theorem.

\begin{theorem}[\cite {Flajolet}, Theorem 2 ]
\label{thm:Mellin_inverse}
Let $f$ be a continuous function. If $c$ is a real number belonging to the fundamental strip of $\Mellin[f]$ and $\R \ni y \mapsto\Mellin[f](c+iy)$ is Lebesgue integrable, then for any $t\in\R^+$
\begin{equation*}
f(t)=\Mellin^{-1}\Big[\Mellin[f]\Big](t)=\tfrac1{2\pi i}\int_{c-i\infty}^{c+i\infty}\Mellin[f](s)\,t^{-s}\,ds.
\end{equation*}
\end{theorem}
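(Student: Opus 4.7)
The plan is to reduce the inversion formula to the classical Fourier inversion theorem via an exponential change of variable. The substitution $t=e^{-x}$ converts the multiplicative structure on $\R^+$ underlying the Mellin transform into the additive structure on $\R$ underlying the Fourier transform, and the two inversion formulae then coincide up to bookkeeping.

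First, I would introduce the auxiliary function $h(x)\vc f(e^{-x})\,e^{-cx}$ on $\R$. Substituting $t=e^{-x}$ in the defining integral of $\Mellin[f]$ gives, for $s=c+iy$,
\begin{equation*}
\Mellin[f](c+iy) \,=\, \int_{-\infty}^{\infty} f(e^{-x})\,e^{-(c+iy)x}\,dx \,=\, \int_{-\infty}^{\infty} h(x)\,e^{-ixy}\,dx,
\end{equation*}
so that $\Mellin[f](c+iy)$ coincides, up to a normalisation convention, with the Fourier transform $\wh{h}(y)$.

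Next I would translate the three hypotheses of the theorem into the setting of $h$. Membership of $c$ in the fundamental strip of $\Mellin[f]$ amounts to absolute convergence of $\int_{0}^{\infty}|f(t)|\,t^{c-1}\,dt$, which after the substitution becomes $\int_{\R}|h(x)|\,dx<\infty$, i.e.\ $h\in L^{1}(\R)$. The assumption that $y\mapsto\Mellin[f](c+iy)$ is Lebesgue integrable reads $\wh{h}\in L^{1}(\R)$. Finally, continuity of $f$ on $\R^{+}$ entails continuity of $h$ on $\R$. Under these three properties the Fourier inversion theorem applies \emph{pointwise}:
\begin{equation*}
h(x) \,=\, \tfrac{1}{2\pi}\int_{-\infty}^{\infty} \wh{h}(y)\,e^{ixy}\,dy \qquad \text{for every } x\in\R.
\end{equation*}

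Reverting to $t=e^{-x}$ (so that $x=-\log t$ and $e^{ixy}=t^{-iy}$), the inversion formula for $h$ reads $f(t)\,t^{c}=\tfrac{1}{2\pi}\int_{-\infty}^{\infty}\Mellin[f](c+iy)\,t^{-iy}\,dy$; dividing by $t^{c}$ and parametrising the contour $s=c+iy$ (so that $ds=i\,dy$ and $t^{-s}=t^{-c}\,t^{-iy}$) yields exactly the claimed formula. The main obstacle I anticipate is invoking the correct pointwise version of Fourier inversion: under $h\in L^{1}(\R)$, $\wh{h}\in L^{1}(\R)$ and continuity of $h$, the inversion formula holds for every $x\in\R$ and not merely almost everywhere. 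This is a standard fact, but it must be cited precisely, since the more elementary $L^{1}$-inversion only guarantees equality almost everywhere; the remaining steps---the change of variable and the rewriting of $\int_{-\infty}^{\infty}dy$ as the contour integral $\int_{c-i\infty}^{c+i\infty}ds$---are routine.
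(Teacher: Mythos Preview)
Your argument is correct: the exponential substitution $t=e^{-x}$ converts the Mellin transform into a Fourier transform, the three hypotheses become precisely $h\in L^1(\R)$, $\wh{h}\in L^1(\R)$, and continuity of $h$, and under these the classical $L^1$-Fourier inversion theorem (with the continuity upgrade from a.e.\ to pointwise equality) yields the stated formula after unwinding the substitution. The bookkeeping you outline is accurate.

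As for comparison with the paper: there is none to make. The paper does not prove this theorem; it is quoted verbatim as \cite[Theorem 2]{Flajolet} and used as a black box. Your reduction to Fourier inversion is in fact the standard route to this result (and is essentially what one finds in Flajolet--Gourdon--Dumas or in Paris--Kaminski), so you have supplied the argument the paper omits.
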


We have established the framework and now we are ready to formulate and prove the main results.

\section{General results}\label{sec:general}

The Mellin transform has a direct application to the study of the asymptotic expansions of heat traces. Let us start with the following lemma.

\begin{lem}
\label{lem:hk_Mellin}
Let $P$ be an operator such that its zeta-function is well-defined with abscissa of convergence $L$.
Then for $\Re(s)>L$
\begin{equation}\label{MellinFund}
\Mellin[\heat_P](s)=\Gamma(s)\,\zeta_P(s).
\end{equation}
\end{lem}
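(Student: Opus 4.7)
The plan is to unfold the Mellin integral using the eigenvalue expansion \eqref{hk_sum} of $\heat_P$, swap the sum with the integral, and then evaluate each resulting integral by a change of variables to recognise a Gamma factor times $\lambda_n^{-s}$.

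\textbf{Step 1: Fix the setting.} Take $s\in\C$ with $\sigma\vc\Re(s)>L\geq 0$. By Proposition \ref{cor:hk_order}, $\heat_P$ is well-defined; since $P>0$ has discrete spectrum bounded away from zero, the series \eqref{hk_sum} also shows $\heat_P(t)=\Oinf(e^{-t\lambda_0})$, so the integrand $\heat_P(t)\,t^{s-1}$ is locally integrable on $\R^+$ and decays faster than any polynomial at infinity, while at $t\downarrow 0$ it behaves like $\Oz(t^{\sigma-\alpha-1})$ for any $\alpha>L$. Choosing $L<\alpha<\sigma$ secures the integrability of $\heat_P(t)\,t^{s-1}$ on $(0,\infty)$, so $\Mellin[\heat_P](s)$ exists as an absolutely convergent integral.

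\textbf{Step 2: Apply Tonelli.} Insert \eqref{hk_sum} into the definition of the Mellin transform to obtain, in the sense of non-negative measurable functions,
\begin{equation*}
\int_0^\infty \sum_{n=0}^\infty M_n\, e^{-t\lambda_n}\, t^{\sigma-1}\, dt
= \sum_{n=0}^\infty M_n \int_0^\infty e^{-t\lambda_n}\, t^{\sigma-1}\, dt
= \Gamma(\sigma)\sum_{n=0}^\infty M_n\, \lambda_n^{-\sigma},
\end{equation*}
where the last identity uses the substitution $u=t\lambda_n$. The right-hand side equals $\Gamma(\sigma)\zeta_P(\sigma)<\infty$ since $\sigma>L$, by Proposition \ref{cor:zeta_abscissa} and absolute convergence of the zeta series on its half-plane of convergence.

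\textbf{Step 3: Upgrade to complex $s$.} The computation in Step 2 shows that the double integral/sum of the absolute value $\abs{M_n\,e^{-t\lambda_n}t^{s-1}}=M_n\,e^{-t\lambda_n}t^{\sigma-1}$ is finite. Hence Fubini's theorem applies and the same interchange is legitimate for complex $s$:
\begin{equation*}
\Mellin[\heat_P](s) = \sum_{n=0}^\infty M_n \int_0^\infty e^{-t\lambda_n}\, t^{s-1}\, dt = \sum_{n=0}^\infty M_n\, \lambda_n^{-s}\,\Gamma(s) = \Gamma(s)\,\zeta_P(s),
\end{equation*}
which is the desired identity \eqref{MellinFund}.

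The only genuine issue is the justification of the exchange of summation and integration. That is handled cleanly by Tonelli on the positive-function version in Step 2, which then legitimises Fubini for the complex version in Step 3; everything else reduces to the standard evaluation of $\int_0^\infty e^{-t\lambda}\,t^{s-1}\,dt = \Gamma(s)\lambda^{-s}$.
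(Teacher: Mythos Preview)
Your proof is correct and follows essentially the same route as the paper: establish convergence of the Mellin integral via $\heat_P(t)=\Oz(t^{-\alpha})$ and $\heat_P(t)=\Oinf(e^{-t\lambda_0})$, then interchange sum and integral and evaluate each term by the substitution $u=t\lambda_n$. The only difference is that you make the justification of the interchange more explicit by invoking Tonelli on the non-negative version and then Fubini for complex $s$, whereas the paper simply appeals to positivity of the terms; this is a minor expository refinement, not a different argument.
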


\begin{proof}
For $\Re(s)>L$ we pick any $\alpha$ such that $\Re(s)>\alpha>L$. From Corollary \ref{cor:hk_order} we know that $\heat_P(t)=\Oz(t^{-\alpha})$ and the integral 
\begin{equation*}
\Mellin[\heat_P](s)=\int_0^\infty\heat_P(t)\,t^{s-1}\,dt
\end{equation*}
converges (absolutely) at 0.
It also converges absolutely at $\infty$ for any $s$, because
\[
\heat_P(t)=e^{-t\lambda_0}\big[M_0+\sum_{n=1}^\infty M_n\,e^{-t(\lambda_n-\lambda_0)}\big]=\Oinf(e^{-t\lambda_0}).
\]

As the series \eqref{hk_sum} has all its terms positive, it is absolutely convergent for $t>0$, so we can exchange the sum with integral in the following calculation
\begin{align}
\Mellin[\heat_P](s)
&=\int_0^\infty \sum_{n=0}^{\infty}M_n\,e^{-t\lambda_n}\,t^{s-1}\,dt
=\sum_{n=0}^{\infty}M_n\int_0^\infty e^{-t\lambda_n} \,t^{s-1}\,dt\notag\\
&=\sum_{n=0}^{\infty}M_n\,\lambda_n^{-s}\int_0^\infty e^{-y}\,y^{s-1}\,dy
=\zeta_P(s)\,\Gamma(s).\label{Mellin[hk]}
\end{align}
\end{proof}

For further convenience we adopt the notation
\begin{align*}
\ZZ(s) \vc \Gamma(s) \zeta_P(s),
\end{align*}
for $\Re(s) > L$.  If moreover, $\zeta_P$ extends to a meromorphic function on some larger region $D \subset \C$ then $\ZZ$ also has a meromorphic extension to $D$, since $\Gamma$ is meromorphic on $\C$. For a meromorphic function $f$ we also denote by $\PP_f(D)$ the set of its poles contained in the region $D \subset \C$.

The inverse of the relation \eqref{MellinFund} (compare \cite[Theorem 5]{Flajolet}) produces an expansion of $\heat_P$:

\begin{theorem}
\label{thm:hk_expansion}
Let $P$ be an operator such that: 
\begin{enumerate}
\item $\zeta_P$ is well-defined with abscissa of convergence $L\in\R$.
\item $\zeta_P$ has a meromorphic continuation to the half-plane $\Re (s)>L'$ for some real $L'<L$.
\item \label{assum:zeta_bound_vert} There exist real numbers $c, R$, such that $L'<-R<L<c$, and $\ZZ$ is regular and Lebesgue integrable on lines $\Re(s)=-R$ and $\Re(s)=c$.
\item \label{assum:zeta_bound_horiz} 
There exists an increasing sequence $(y_k)_{k \in \Z}$, with $y_0=0$ and $y_k\to\pm \infty$ as $k \to \pm \infty$, such that
\begin{equation}\label{assum:zeta_b}
\sup_{x\in[-R,c]}\abs{\ZZ(x + iy_k)}\to0,
\end{equation}
as $k\to\pm\infty$ and the suprema for all $k \in \Z^*$ are finite.
\end{enumerate}
Let $D_k$ denote a rectangle $\{x+iy\,|\,-R\leq x\leq c, \; y_{-k}\leq y\leq y_k\}\subset\C$ for $k\in\Np$ and $D_0=\emptyset$ and let $S_k \vc \PP_\ZZ(D_k\setminus D_{k-1})$ (see Figure \ref{f1}).

Then, for $t>0$, we have
\begin{equation}
\label{theorem_claim}
\heat_P(t)=\sum_{k=1}^\infty \sum_{s\in S_k} r_s(t)+F_R(t),
\end{equation}
where the, possibly infinite, series over $k$ is convergent with
\begin{align*}
r_s(t)&\vc\Rez{s'=s}\left(\Gamma(s')\,\zeta_P(s')\,t^{-s'}\right),\\
F_R(t)&\vc\tfrac1{2\pi i}\int_{-R-i\infty}^{-R+i\infty}\Gamma(s)\,\zeta_P(s)\,t^{-s}\,ds.
\end{align*}
Moreover, $F_R(t) = \oz(t^R)$.
\end{theorem}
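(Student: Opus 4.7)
The plan is to combine Mellin inversion (Theorem \ref{thm:Mellin_inverse}) with a contour shift from the line $\Re(s)=c$ to the line $\Re(s)=-R$, collecting the residues of $\ZZ$ along the way, and then to derive the little-$\oo$ bound from the Riemann--Lebesgue lemma.

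First I would verify that Theorem \ref{thm:Mellin_inverse} applies on the line $\Re(s)=c$. The heat trace $\heat_P$ is continuous on $\R^+$, since the series \eqref{hk_sum} converges uniformly on compacta by the Weierstrass $M$-test (as $M_n e^{-t\lambda_n}\leq M_n e^{-t_0\lambda_n}$ for $t\geq t_0$ and the bounding series converges), and by hypothesis \ref{assum:zeta_bound_vert} the map $y\mapsto\ZZ(c+iy)$ is Lebesgue integrable. Combined with Lemma \ref{lem:hk_Mellin}, this yields
\[
\heat_P(t)=\tfrac{1}{2\pi i}\int_{c-i\infty}^{c+i\infty}\ZZ(s)\,t^{-s}\,ds,\qquad t>0.
\]

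Next I would fix $t>0$ and, for each $k\in\Np$, apply the residue theorem to $s\mapsto\ZZ(s)\,t^{-s}$ on the rectangle $D_k$. The boundary $\partial D_k$ avoids the poles of $\ZZ$ thanks to \ref{assum:zeta_bound_vert}--\ref{assum:zeta_bound_horiz}, so
\[
\sum_{j=1}^{k}\sum_{s\in S_j}r_s(t)=V_c(k)-V_{-R}(k)+H_{-k}(t)-H_{k}(t),
\]
where $V_c(k)$, $V_{-R}(k)$ denote the integrals of $\tfrac{1}{2\pi i}\ZZ(s)\,t^{-s}$ along the two vertical sides (oriented from bottom to top) and $H_{\pm k}(t)$ the analogous integrals along $\Im s=y_{\pm k}$ (oriented from left to right). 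Letting $k\to\infty$: Lebesgue integrability on the lines $\Re s=c$ and $\Re s=-R$ forces $V_c(k)\to \heat_P(t)$ (by the inversion formula above) and $V_{-R}(k)\to F_R(t)$ (by definition of $F_R$), while
\[
|H_{\pm k}(t)|\leq \tfrac{(c+R)\max(t^R,t^{-c})}{2\pi}\,\sup_{x\in[-R,c]}|\ZZ(x+iy_{\pm k})|\longrightarrow 0
\]
by \ref{assum:zeta_bound_horiz}. The convergence of the (possibly infinite) series over $k$ in \eqref{theorem_claim} is therefore a \emph{consequence} of these limits rather than an extra hypothesis: the partial sums on the left-hand side converge to $\heat_P(t)-F_R(t)$, yielding the claimed identity.

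For the final estimate I would parametrise $s=-R+iy$ in the definition of $F_R$ to obtain
\[
t^{-R}F_R(t)=\tfrac{1}{2\pi}\int_{-\infty}^{\infty}\ZZ(-R+iy)\,e^{-iy\log t}\,dy,
\]
which, up to a factor of $\tfrac{1}{2\pi}$, is the Fourier transform of the $L^1$-function $y\mapsto\ZZ(-R+iy)$ (integrability granted by \ref{assum:zeta_bound_vert}) evaluated at $\log t$. Since $|\log t|\to\infty$ as $t\downarrow 0$, the Riemann--Lebesgue lemma gives $t^{-R}F_R(t)\to 0$, hence $F_R(t)=\oz(t^R)$. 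The main obstacle I anticipate is the bookkeeping of the contour shift, in particular the uniform-in-$k$ control of the horizontal pieces $H_{\pm k}(t)$; once this is dispatched, each remaining step of the argument is an explicit invocation of one of the labelled hypotheses.
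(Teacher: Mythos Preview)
Your proposal is correct and follows essentially the same route as the paper: residue theorem on the rectangles $D_k$, vanishing of the horizontal integrals via assumption \ref{assum:zeta_bound_horiz}, identification of the vertical limits via Mellin inversion and the definition of $F_R$, and the Riemann--Lebesgue lemma for the $\oz(t^R)$ estimate. The only cosmetic differences are that you invoke Mellin inversion \emph{before} the contour shift rather than after, and you bound $\int_{-R}^c t^{-x}\,dx$ by $(c+R)\max(t^R,t^{-c})$ instead of leaving it as an integral; neither affects the argument.
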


\begin{figure}[t]
\begin{center}
\begin{picture}(300,300)
	\put(0,0){\includegraphics[scale=1]{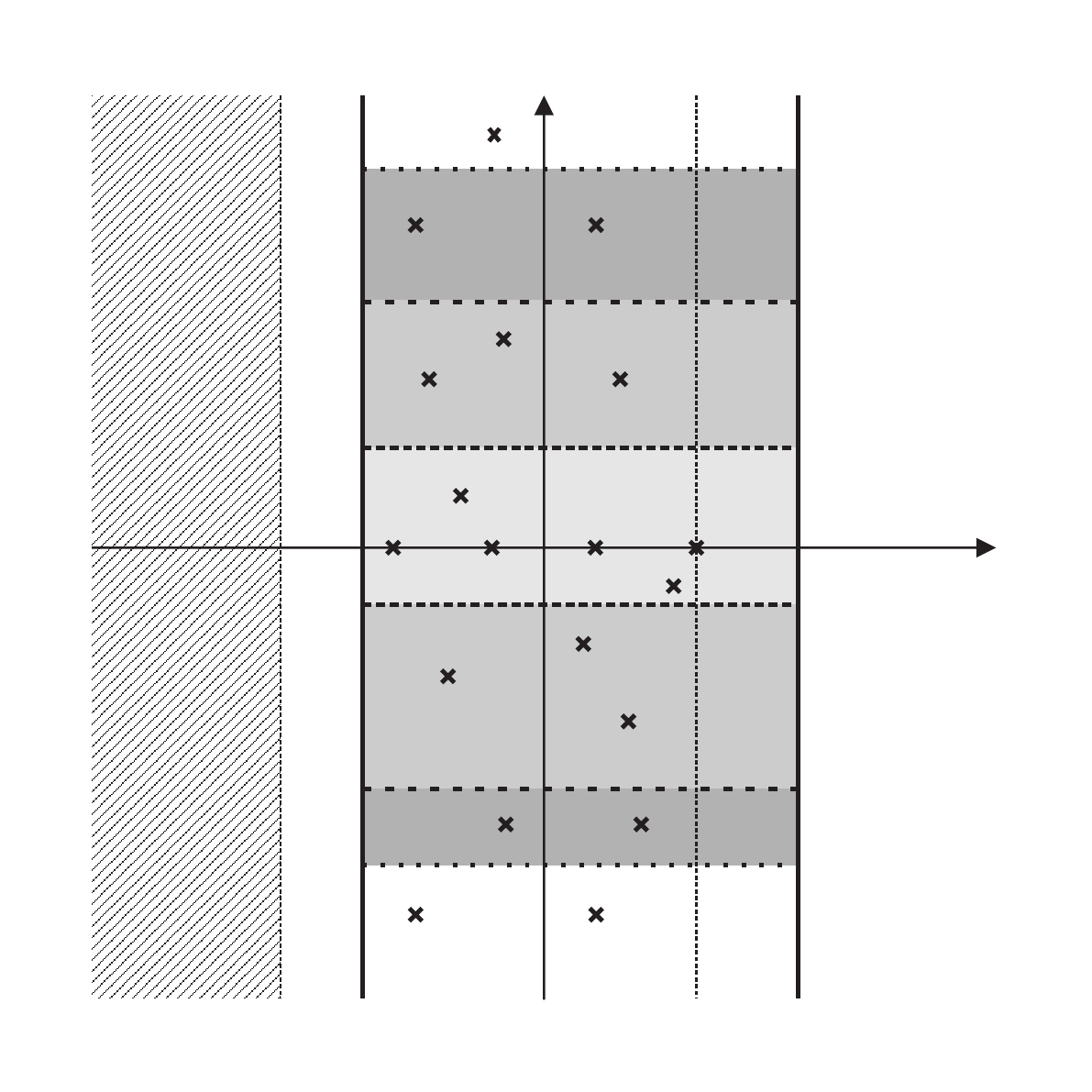}}
  	\put(300,150){$\Re(s)$}
  	\put(160,320){$\Im(s)$}
  	\put(80,10){$L'$}
  	\put(105,10){$-R$}
  	\put(217,10){$L$}
  	\put(248,10){$c$}
  	\put(260,67){$y_{-3}$}
  	\put(260,93){$y_{-2}$}
  	\put(260,152){$y_{-1}$}
  	\put(260,200){$y_{1}$}
  	\put(260,247){$y_{2}$}
  	\put(260,289){$y_{3}$}
\end{picture}
\caption{Illustration of Theorem \ref{thm:hk_expansion}. The crosses stand for poles of the function $\ZZ$. \label{f1}}
\end{center}
\end{figure}

\begin{proof}
On the strength of assumptions \ref{assum:zeta_bound_vert} and \ref{assum:zeta_bound_horiz} the function $\ZZ$ is regular at the boundary of $D_k$. Thus, by the residue theorem,
\begin{equation}
\label{finite_contour}
\tfrac1{2\pi i}\int_{\partial D_k}\ZZ(s)\,t^{-s}\,ds
=\sum_{s\in \PP_\ZZ(D_k)}r_s(t)
=\sum_{m=1}^k\sum_{s\in S_m} r_s(t) ,
\end{equation}
where the contour $\partial D_k$ is oriented counter-clockwise. In the sum above only a finite number of residues is taken into account, as the region $D_k$ is bounded and the set of poles $\PP_\ZZ(D_k)$ has no accumulation points. 
 Let us decompose the boundary of the rectangle into four sides
\begin{gather*}
I_{-R}(k)\vc\int_{-R+i y_{-k}}^{-R+i y_k}\ZZ(s)\,t^{-s}\,ds,\qquad 
I_c(k)\vc\int_{c+i y_{-k}}^{c+i y_k}\ZZ(s)\,t^{-s}\,ds,\\
I_H^\pm(k)\vc \int_{-R+i y_{\pm k}}^{c+i y_{\pm k}}\ZZ(s)\,t^{-s}\,ds,
\end{gather*}
so that
\begin{equation*}
\int_{\partial D_k}\ZZ(s)\,t^{-s}\,ds=I_{c}(k)-I_H^+(k)-I_{-R}(k)+I_H^-(k).
\end{equation*}

We now estimate 
\begin{equation}
\label{horiz_integral_estimate}
\abs{I_H^\pm(k)}
=\abs{\int_{-R}^c\ZZ(x+ iy_{\pm k})\,t^{-x - iy_{\pm k}}\,dx}
\leq \sup_{x\in[-R,c]}\abs{\ZZ(x + iy_{\pm k})}\,\int_{-R}^ct^{-x}\,dx\xrightarrow{k\to\infty}0.
\end{equation}

To analyse the integrals $I_{-R},I_c$ we will use the assumption \ref{assum:zeta_bound_vert}. Integrability over the line $\Re (s)=c$ allows us to apply Theorem \ref{thm:Mellin_inverse}, which, together with Lemma \ref{lem:hk_Mellin}, gives us the following limit
\begin{equation}
\label{lim_Ic}
\tfrac1{2\pi i}\,\lim_{k\to\infty}I_c(k)=\heat_P(t).
\end{equation}
On the other hand, integrability of $\ZZ$ along the line $\Re (s)=-R$ allows us to write
\begin{equation*}
\tfrac1{2\pi i}\,\lim_{k\to\infty}I_{-R}(k)=F_R(t).
\end{equation*}
Thus, by taking the limit $k\to\infty$ of equation \eqref{finite_contour} we get
\begin{equation*}
\sum_{m=0}^\infty \sum_{s\in S_m} r_s(t)=\lim_{k\to\infty}\tfrac1{2\pi i}\int_{\partial D_k}\ZZ(s)t^{-s}ds=\heat_P(t)-F_R(t).
\end{equation*}
To finish the proof we observe that
\begin{multline*}
F_R(t)/t^R
=\tfrac1{2\pi}\int_{-\infty}^\infty \ZZ(-R+iy)\,t^{-iy}\,dy
=\tfrac1{2\pi}\Fourier[y\mapsto \ZZ(-R+iy)] \left(-\tfrac{\log t}{2\pi} \right)\xrightarrow{t\to0}0,
\end{multline*}
where $\Fourier$ denotes the Fourier transform, and the limit is a consequence of the Riemann--Lebesgue lemma.
\end{proof}

The series in \eqref{theorem_claim} is just the sum of residues of the function $\ZZ(s)t^{-s}$ over the poles contained in the strip $-R<\Re(s)<c$ and one could be tempted to write it down as
\begin{equation}
\label{residue_sum}
\sum_{s\in S} r_s(t),
\end{equation} 
where summation goes over $S=\PP_\ZZ(\{s\in\C:-R<\Re(s)<c\})$. Whilst this second form looks simpler, it ignores the information about grouping and arrangement of terms, which may be significant. Indeed, Theorem \ref{thm:hk_expansion} states that the series over $k$ in formula \eqref{theorem_claim} is convergent, and in general this may only be a conditional convergence. That is why residues have to be grouped into (finite) sums over $S_k$, and then summed in the order given by index $k$. The grouping and order are consequences of the choice of the sequence $(y_k)$ and the assumption \eqref{assum:zeta_b} may fail for a different choice of sequence.

However, if the function $\ZZ$ has only a finite number of residues in the strip $-R<\Re(s)<c$ or the sum \eqref{residue_sum} is absolutely convergent then rearrangements of terms are allowed and one can safely write
\begin{equation}
\label{theorem_claim_simplified}
\heat_P(t)=\sum_{s\in S} r_s(t) +F_R(t),
\end{equation}
instead of \eqref{theorem_claim}. One clue about the absolute convergence is given by the following result:
\begin{prop}\label{rem:absolute_conv}
If the hypothesis of Theorem \ref{thm:hk_expansion} is fulfilled with the assumption \eqref{assum:zeta_b} altered for a stronger one:
\begin{equation}
\label{hypo_absolute}
\sum_{k\in\Z^*}\sup_{x\in[-R,c]}\abs{\ZZ(x+iy_k)}<\infty,
\end{equation}
then for any $t>0$
\begin{equation*}
\sum_{k=1}^\infty \abs{\sum_{s\in S_k} r_s(t)}<\infty.
\end{equation*}
\end{prop}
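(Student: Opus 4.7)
The plan is to bound each absolute summand $\abs{\sum_{s\in S_k} r_s(t)}$ by a sum of absolute values of contour integrals over straight segments forming the boundary of $D_k\setminus D_{k-1}$, and then to verify that summing these bounds over $k \ge 1$ produces a finite quantity, using the new hypothesis \eqref{hypo_absolute} for the horizontal segments and the Lebesgue integrability on the vertical lines $\Re(s) = c$ and $\Re(s) = -R$ (assumption \ref{assum:zeta_bound_vert}) for the vertical ones.

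First, I would apply the residue theorem to $D_k \setminus D_{k-1}$, exactly as in the proof of Theorem \ref{thm:hk_expansion}. For $k=1$ this region is simply $D_1$; for $k\ge 2$ it splits into two disjoint rectangles of imaginary-part ranges $[y_{k-1}, y_k]$ and $[y_{-k}, y_{-(k-1)}]$. Since $\ZZ$ is regular on the boundary and only the poles in $S_k$ lie in the interior,
\[
\sum_{s\in S_k} r_s(t) = \tfrac{1}{2\pi i}\oint_{\partial(D_k\setminus D_{k-1})} \ZZ(s)\,t^{-s}\,ds.
\]

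Second, I would decompose this contour into its horizontal and vertical segments and invoke the triangle inequality on each. The horizontal segments at height $y_m$ are bounded, as in \eqref{horiz_integral_estimate}, by $\sup_{x\in[-R,c]}\abs{\ZZ(x+iy_m)} \cdot \int_{-R}^c t^{-x}\,dx$. The vertical segments on $\Re(s) = c$ are bounded by $t^{-c}\int_{y_{k-1}}^{y_k}\abs{\ZZ(c+iy)}\,dy$ (and similarly for the corresponding $y$-range on the lower rectangle), and analogously on $\Re(s) = -R$ with $t^{-c}$ replaced by $t^R$.

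Finally, I would sum these bounds over $k \ge 1$. Each index $y_m$, $m \in \Z^*$, appears as a horizontal boundary in at most two consecutive summands, so the horizontal contributions give a total bounded by $2\int_{-R}^c t^{-x}\,dx \cdot \sum_{k\in\Z^*}\sup_{x\in[-R,c]}\abs{\ZZ(x+iy_k)}$, finite by \eqref{hypo_absolute}. The vertical contributions telescope into $t^{-c}\int_{-\infty}^{\infty}\abs{\ZZ(c+iy)}\,dy + t^R\int_{-\infty}^{\infty}\abs{\ZZ(-R+iy)}\,dy$, finite by assumption \ref{assum:zeta_bound_vert}.

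The principal obstacle is organizational rather than analytical: one must check that in the summation the shared horizontal sides are counted at most twice, and that the vertical pieces glue into full integrals along $\Re(s)=c$ and $\Re(s)=-R$. A related subtlety is that the basepoint $y_0 = 0$ must never enter the bounds, since no control on $\abs{\ZZ(x)}$ for $x\in[-R,c]$ is available (there may well be poles of $\ZZ$ on the real axis); this is handled by treating $D_1$ as a single rectangle rather than splitting it at the real axis. Once this bookkeeping is in place, absolute convergence follows immediately.
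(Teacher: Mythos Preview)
Your proposal is correct and follows essentially the same route as the paper's own proof: apply the residue theorem on each $D_k\setminus D_{k-1}$, bound the contour integral by the sum of absolute values over the four (or, for $k\ge 2$, eight) straight edges, and observe that the vertical pieces partition the lines $\Re(s)=c$ and $\Re(s)=-R$ while each horizontal edge at height $y_m$, $m\in\Z^*$, is counted at most twice. Your explicit remark that the level $y_0=0$ must be avoided (handled by treating $D_1$ as a single rectangle) is exactly the paper's convention $I_H^\pm(0)\vc 0$ to comply with $D_0=\emptyset$.
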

\begin{proof}
First note that for any $t>0$
\begin{align*}
\sum_{k=1}^\infty \abs{\sum_{s\in S_k} r_s(t)}&=\tfrac1{2\pi i}\sum_{k=1}^\infty \abs{\int_{\partial (D_k\setminus D_{k-1})}\ZZ(s)\,t^{-s}\,ds}.
\end{align*}
Now, for any $k \geq 1$ we decompose the boundary $\partial (D_k\setminus D_{k-1})$ as in the proof of Theorem \ref{thm:hk_expansion} and estimate
\begin{multline*}
\abs{\int_{\partial (D_k\setminus D_{k-1})}\ZZ(s)\,t^{-s}\,ds} \leq \abs{I_{-R}(k) - I_{-R}(k-1)} + \abs{I_{c}(k)-I_{c}(k-1)} + \\
+ \abs{I_{H}^+(k-1)} + \abs{I_{H}^-(k-1)} + \abs{I_{H}^+(k)} + \abs{I_{H}^-(k)}, 
\end{multline*}
with the convention $I_H^{\pm}(0) \vc 0$ to comply with $D_0 = \emptyset$. So the contributions of the horizontal integrals $I_H^{\pm}(k)$, $I_H^{\pm}(k+1)$ add together instead of canceling out as it happened in the proof of Theorem \ref{thm:hk_expansion}.

Therefore,
\begin{multline*}
\sum_{k=1}^\infty \abs{\int_{\partial (D_k\setminus D_{k-1})}\ZZ(s)\,t^{-s}\,ds} \leq \int_{-R - i \infty}^{-R+i \infty} \abs{\ZZ(s) t^{-s}} ds + \\
+ \int_{c - i \infty}^{c+i \infty} \abs{\ZZ(s) t^{-s}} ds + 2 \sum_{k=1}^\infty \left( \abs{I_{H}^+(k)} + \abs{I_{H}^-(k)} \right). 
\end{multline*}

The convergence of the integrals along the vertical lines follows since $\ZZ$ is Lebesgue integrable (assumption \ref{assum:zeta_bound_vert} of Theorem \ref{thm:hk_expansion}).

On the other hand, the sum over horizontal contributions can be estimated as in \eqref{horiz_integral_estimate}:
\begin{align*}
\sum_{k=1}^\infty \left(\abs{I_H^+(k)} + \abs{I_H^-(k)}\right) \leq \sum_{k \in \Z^*} \sup_{x\in[-R,c]}\abs{\ZZ(x + iy_{\pm k})}\,\int_{-R}^ct^{-x}\,dx,
\end{align*}
which is finite by assumption \eqref{hypo_absolute}.
\end{proof}

In most cases (see Section \ref{sec:examples}), one can also avoid the grouping of the residues into $S_k$'s, by finding some denser sequence $(y_k)_{k\in\Z}$ such that each $S_k$ contains only one pole of $\ZZ$. Also, if there are poles of $\ZZ$ lying on a common line $\Im(s) = \text{const.}$, one can resort to the more general Theorem \ref{thm:hk_asymptotic_general}.  However, for a denser sequence the assumptions \eqref{assum:zeta_b} or \eqref{hypo_absolute} of the Theorem \ref{thm:hk_expansion} may not be fulfilled.

Unfortunately, we were not able to tailor an example, where the analytic structure of $\zP$ is such that the grouping or arrangement of the terms are important. It might turn out that the operatorial aspect of heat traces, which leads to a specific subclass of general Dirichlet series, implies that one can always replace formula \eqref{theorem_claim} with \eqref{theorem_claim_simplified}. However, as the problem is open, we emphasise once again that in general formulae the series in \eqref{theorem_claim} is conditionally convergent only. 

It is instructive to write down explicitly an individual term $r_s(t)$. If the function $\ZZ$ has a pole of order $n$ at $s_0$, then it has a Laurent expansion $\ZZ(s) = \sum_{k=-n}^\infty b_k(s_0) \,(s-s_0)^k$ in some open punctured disc with the center at $s_0$. On the other hand,
\begin{align*}
t^{-s}=e^{-s_0\log t}e^{-(s-s_0)\log t}=t^{-s_0}\sum_{k=0}^\infty \frac{(-\log t)^k}{k!}(s-s_0)^k, && \forall \; s_0, \, s \in \C, \; t >0.
\end{align*}
Therefore, the residue $r_{s_0}(t)$ being the coefficient of $(s-s_0)^{-1}$ in the Laurent expansion of $\ZZ(s)t^{-s}$ at $s=s_0$ reads
\begin{align}
\label{r_s(t)_asymptotics}
r_{s_0}(t)=t^{-s_0}\sum_{k=0}^{n-1}\frac{b_{-k-1}(s_0)}{k!}(-\log t)^k.
\end{align}
Note that $r_{s_0}(t)=\Oz(t^{-\Re (s_0)}(\log t)^{n-1})$, what also means that $r_{s_0}(t)=\Oz(t^{-\Re (s_0)+\delta})$ for arbitrarily small $\delta>0$. 

\begin{remark}
One could in principle allow the function $\ZZ$ to have \textit{essential singularities} as long as they are isolated (see \cite[last point on p. 453]{CPRS} for a motivation). For such functions the residue (i.e. the $-1^{\text{st}}$ term of the Laurent expansion) is well-defined. Since $\Gamma$ is a meromorphic function and $s \mapsto t^{-s}$ is an entire one for all $t>0$, the function
\begin{align}\label{UDt}
s \mapsto \ZZ(s) t^{-s},
\end{align}
has isolated singularities only. At an essential singularity, the formula \eqref{r_s(t)_asymptotics} yields an infinite series. This series is absolutely convergent for every $t>0$ in the punctured disc of convergence of the Laurent expansion of $\ZZ$ at $s_0$.  Indeed, in the interior of this punctured disc the Laurent series of the function \eqref{UDt} is a product of two absolutely convergent series, and as such it is absolutely convergent. 
\end{remark}

\subsection{Asymptotic expansions}\label{Asymptotic}
Theorem \ref{thm:hk_expansion} gives us information about the behavior of $\heat_P(t)$ at $t=0$ up to the order $t^R$. If $\zeta_P$ can be meromorphically continued to the whole complex plane and satisfies suitable growth conditions, then Theorem \ref{thm:hk_expansion} can give us the behavior of $\heat_P$ at 0 up to an arbitrary finite order, i.e. an asymptotic expansion. Recall the definition \cite{Copson, Hardy_div, Erdelyi}:

\begin{definition}\label{DefAsymp}
Let $(\phi_n)_{n\in\N}$ be a sequence of functions from $\R^+$ to $\C$. We call this sequence an asymptotic scale at $t=0$ if for any $n\in\N$ we have $\phi_{n+1}(t)=\oz(\phi_n(t))$.

For a function $f:\R^+\to\C$ the formal series $\sum_{n=0}^\infty \phi_n(t)$, with $(\phi_n)_{n\in\N}$ being an asymptotic scale, is called an \emph{asymptotic expansion} (or \emph{asymptotic series}) of $f$ at $t=0$ if for any $N\in\N$
\begin{equation*}
f(t)-\sum_{n=0}^N \phi_n(t)=\Oz(\phi_{N+1}(t)).
\end{equation*}
In this case, we write
\begin{align*}
f(t) \tzero \sum_{n=0}^\infty \phi_n(t).
\end{align*}
\end{definition}

Now by an iterative argument exploiting Theorem \ref{thm:hk_expansion} we obtain the asymptotic expansion of $\heat_P$, which is the main result of the paper:

\begin{theorem}\label{thm:hk_asymptotic_general}
Let $P$ be a positive operator with compact inverse such that:
\begin{enumerate}
\item $\zeta_P$ is well-defined with abscissa of convergence $L$.
\item $\zeta_P$ has a meromorphic continuation to the whole complex plane.
\item \label{assum3:general} There exists a sequence $(R_n)_{n\in\N}$ of real numbers strictly increasing to infinity, such that $-R_0>L$, $-R_n<L$ for $n\geq1$, and for each $n\in\N$ function $\ZZ(s)=\Gamma(s)\zeta_P(s)$ is regular and Lebesgue integrable over the vertical line $\Re(s)=-R_n$.
\item \label{assum4:general} For each $n\geq1$ there exists a strictly increasing sequence $(y^{(n)}_k)_{k\in\Z}$ with $y^{(n)}_0=0$ and $y^{(n)}_k\to\pm \infty$ as $k \to \pm \infty$ such that
\begin{equation*}
\sup_{x\in[-R_n,-R_{n-1}]}\abs{\ZZ(x+iy^{(n)}_k)}\to0,
\end{equation*}
as $k\to\pm\infty$ and the suprema for all $k \in \Z^*$ are finite.
\end{enumerate} 
For $n,k\geq1$ let $D_k^n$ denote a rectangle $\{x+iy\,|\,-R_n\leq x\leq -R_{n-1}, \; y_{-k}^{(n)}\leq y\leq y_k^{(n)}\}\subset\C$ and $D_0^n=\emptyset$ for $n \geq 1$. Set $S_k^n = \PP_\ZZ(D_{k}^n\setminus D_{k-1}^n)$ (see Figures \ref{f1} and \ref{f2}).

Then, for $t>0$
\begin{equation}
\label{hk_asymptotic_general}
\heat_P(t) \tzero \sum_{n=1}^{\infty} \sum_{k=1}^\infty\sum_{s\in S^n_k}r_s(t),
\end{equation}
where
\begin{equation*}
r_s(t)=\Rez{s'=s}\left(\ZZ(s')\,t^{-s'}\right).
\end{equation*}
\end{theorem}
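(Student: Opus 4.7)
My approach is to iterate the contour-integration argument of Theorem \ref{thm:hk_expansion}, applying it separately to each vertical strip $\{-R_n \leq \Re(s) \leq -R_{n-1}\}$ and then telescoping. This strip-by-strip organisation is forced by assumption \ref{assum4:general}, which supplies a \emph{different} horizontal truncation sequence $(y^{(n)}_k)$ for every $n$; a single application of Theorem \ref{thm:hk_expansion} cannot accommodate all of these at once.

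For each $m \geq 0$ I would set
\[
F_{R_m}(t) \vc \frac{1}{2\pi i}\int_{-R_m - i\infty}^{-R_m + i\infty} \ZZ(s)\,t^{-s}\,ds,
\]
which converges for every $t>0$ by assumption \ref{assum3:general}. Since $-R_0>L$, Lemma \ref{lem:hk_Mellin} together with Theorem \ref{thm:Mellin_inverse} identifies $F_{R_0}(t)=\heat_P(t)$. Fixing $n\geq 1$, I would then apply the residue theorem to the rectangle $D^n_k$, whose interior contains precisely the finitely many poles in $\bigcup_{m=1}^k S^n_m$; splitting $\partial D^n_k$ into two vertical and two horizontal edges and letting $k\to\infty$, the vertical contributions converge to $F_{R_{n-1}}(t)-F_{R_n}(t)$ by assumption \ref{assum3:general}, while the horizontal contributions vanish by assumption \ref{assum4:general} via exactly the estimate \eqref{horiz_integral_estimate}. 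This produces, for every $n\geq 1$,
\[
F_{R_{n-1}}(t)-F_{R_n}(t)=\sum_{k=1}^\infty\sum_{s\in S^n_k} r_s(t).
\]

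Telescoping from $n=1$ to $N$ and invoking $F_{R_0}(t)=\heat_P(t)$ yields
\[
\heat_P(t)=\sum_{n=1}^N\sum_{k=1}^\infty\sum_{s\in S^n_k}r_s(t)+F_{R_N}(t).
\]
The Riemann--Lebesgue lemma, applied to $y\mapsto \ZZ(-R_N+iy)\in L^1(\R)$ exactly as at the end of the proof of Theorem \ref{thm:hk_expansion}, gives $F_{R_N}(t)=\oz(t^{R_N})$. Since $R_N\to\infty$ by assumption \ref{assum3:general}, this delivers the asymptotic expansion \eqref{hk_asymptotic_general} in the sense of Definition \ref{DefAsymp}, with $n$-th term $\phi_n(t)=\sum_k\sum_{s\in S^n_k} r_s(t)$ (dropping any strips with no poles so as to produce a genuine asymptotic scale). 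The only real obstacle is bookkeeping: one must mesh the $n$-dependent sequences $(y^{(n)}_k)$ with the telescoping identity, which is precisely why the argument is organised one strip at a time. Once this is arranged, the remaining analytic content is supplied unchanged by Theorem \ref{thm:hk_expansion}.
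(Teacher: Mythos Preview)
Your proposal is correct and follows essentially the same approach as the paper: iterate the contour-integration argument of Theorem \ref{thm:hk_expansion} strip by strip to obtain $F_{R_{n-1}}(t)=\phi_n(t)+F_{R_n}(t)$, telescope, and use the Riemann--Lebesgue estimate $F_{R_N}(t)=\oz(t^{R_N})$ for the remainder. The paper phrases the iteration as repeated invocations of Theorem \ref{thm:hk_expansion} (with the minor caveat that after the first step $c=-R_{n-1}$ no longer exceeds $L$, but the same residue-theorem argument applies), which is exactly what you unpack.
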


The RHS of \eqref{hk_asymptotic_general} is to be understood as the asymptotic (formal) series $\sum_{n=1}^\infty\phi_n(t)$, with $\phi_n:\R^+\to\C$ being an asymptotic scale defined by the convergent series
\begin{equation}
\label{phi_n_scale}
\phi_n(t) \vc\sum_{k=0}^\infty\sum_{s\in S^n_k}r_s(t),\qquad \text{for }n\geq1.
\end{equation}

\begin{figure}[t]
\begin{center}
\begin{picture}(300,300)
	\put(0,0){\includegraphics[scale=1]{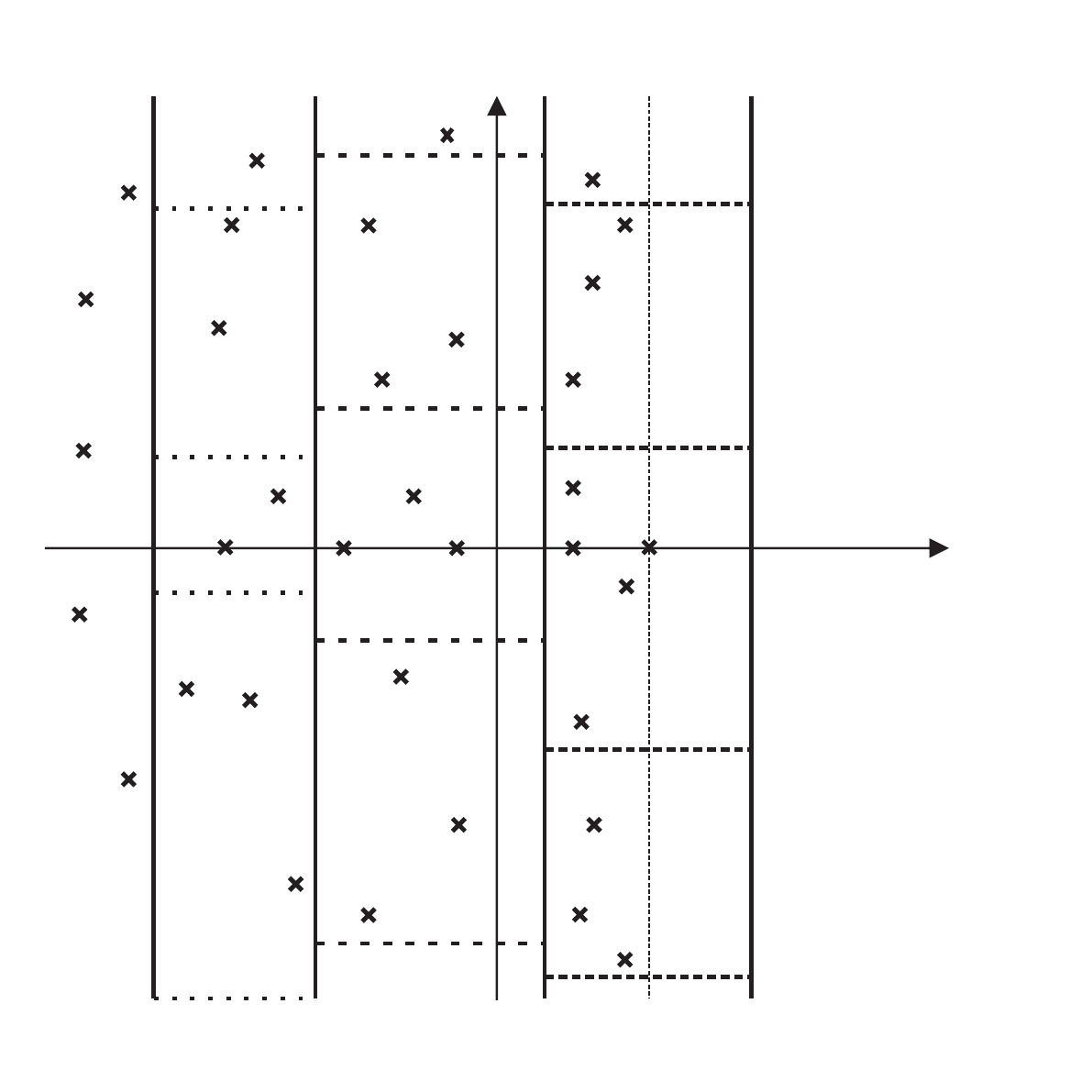}}
  	\put(300,150){$\Re(s)$}
  	\put(143,320){$\Im(s)$}
  	\put(33,10){$-R_3$}
  	\put(85,10){$-R_2$}
  	\put(158,10){$-R_1$}
  	\put(200,10){$L$}
  	\put(225,10){$-R_0$}
  	\put(242,33){$y_{-2}^{(1)}$}
  	\put(242,105){$y_{-1}^{(1)}$}
  	\put(242,200){$y_{1}^{(1)}$}
  	\put(242,277){$y_{2}^{(1)}$}
  	\put(80,42){$y_{-2}^{(2)}$}
  	\put(80,138){$y_{-1}^{(2)}$}
  	\put(80,212){$y_{1}^{(2)}$}
 	\put(80,300){$y_{2}^{(2)}$}
 	\put(30,26){$y_{-2}^{(3)}$}
 	\put(30,152){$y_{-1}^{(3)}$}
 	\put(30,197){$y_{1}^{(3)}$}
  	\put(30,265){$y_{2}^{(3)}$}
\end{picture}
\caption{Illustration of Theorem \ref{thm:hk_asymptotic_general}. See also Figure \ref{f1}.\label{f2}}
\end{center}
\end{figure} 

\begin{proof}
Let us first apply Theorem \ref{thm:hk_expansion} to $\zeta_P$ with $c=-R_0, R=R_1$ and $y_k=y^{(1)}_k$ for $k\in\Z$. All of the assumptions are readily fulfilled and we obtain
\begin{equation}
\label{heat_phi_1}
\heat_P(t)=\phi_1(t)+F_{R_1}(t),
\end{equation}
where $\phi_1$ is given by \eqref{phi_n_scale}.

Then, we use Theorem \ref{thm:hk_expansion} again with $c=-R_1, R=R_2$ and $y_k=y^{(2)}_k$ for $k\in\Z$. Strictly speaking in this case $c$ is not greater than $L$ as required by assumption \ref{assum:zeta_bound_vert}, but as long as $\ZZ$ is regular and Lebesgue integrable on the lines $\Re(s)=-R_1$ and $\Re(s)=-R_0$ the same arguments apply. The contour integral, now being $2\pi i(F_{R_1}(t)-F_{R_2}(t))$, is again equal to the sum of residues, giving
\begin{equation*}
F_{R_1}(t)=\phi_2(t)+F_{R_2}(t)
\end{equation*}
instead of \eqref{heat_phi_1}. Repeating this argument for the function $\zeta_P$ in each strip $-R_{n+1}\leq\Re (s)\leq-R_n$ we obtain a recurrence relation:
\begin{equation}
\label{phi_recurrence}
F_{R_n}(t)=\phi_{n+1}(t)+F_{R_{n+1}}(t).
\end{equation}
Thus, for any $N \in \Np$ we have
\begin{equation*}
\heat_P(t)=\sum_{n=1}^N\phi_n(t)+F_{R_N}(t),
\end{equation*}
where $F_{R_N}(t)=\oz(t^{R_N})$ and by Definition \ref{DefAsymp} of an asymptotic expansion we conclude that
\begin{align*}
\heat_P(t) \tzero \sum_{n=1}^\infty \phi_n(t).
\end{align*}
\end{proof}

The choice of the sequence $(R_n)$ determines the asymptotic scale $(\phi_n)$ of the expansion \eqref{hk_asymptotic_general}. Thus, in our method the residues are first summed in each vertical strip $\{s \in \C \; : \; -R_{n+1} < \Re(s) < -R_{n} \}$ yielding $\phi_n(t)$, and then the contributions from subsequent strips form an asymptotic series.

\subsection{Exact and almost exact expansions}

Having an asymptotic series for $\heat_P(t)$, it is natural to ask whether this series converges for some $t$. This can be checked by analysing the limit of $F_{R_N}(t)$ as $N\to\infty$. If this limit exists and $F_{R_N}\to F_\infty < \infty$ locally uniformly on the interval $(0,T)$ for some $T>0$, then the series in \eqref{hk_asymptotic_general} converges locally uniformly on $(0,T)$ to $\heat_P(t)-F_\infty(t)$. Note that $F_\infty$ is $\oz(t^R)$ for every $R\in\R$. Let us adopt the following definition:

\begin{defn}\label{def:almost}
If the series on the RHS of \eqref{hk_asymptotic_general} converges (to $\heat_P(t)-F_\infty(t)$) locally uniformly on $(0,T)$ for some $T>0$ then we say that $\heat_P$ has an \emph{almost exact expansion} on $(0,T)$. If moreover $F_\infty = 0 $ identically, then $\heat_P$ has an \emph{exact expansion} on $(0,T)$.
\end{defn}

In Section \ref{sec:examples} we will meet various examples of almost exact and exact expansions of heat traces, with $T < \infty$ and $T = \infty$ as well. Let us now give a general sufficient condition for the existence of an exact expansion.

\begin{theorem}\label{cor:hk_exact}
Let the assumptions of Theorem \ref{thm:hk_asymptotic_general} be fulfilled and let the estimate 
\begin{equation}\label{exact}
\abs{\ZZ(-R_n+iy)}\leq C_ne^{-\epsilon_n\abs{y}}
\end{equation}
hold for every $y\in\R$ and $n\in\N$, where $C_n, \epsilon_n$ are some positive constants for $n\in\N$. Assume moreover that the sequence $\sqrt[R_n]{C_n/\epsilon_n}$ is bounded for $n\in\N$. 

Then,
\begin{equation}
\label{hk_exact_general}
\heat_P(t)=\sum_{n=1}^\infty\sum_{k=0}^\infty\sum_{s\in S^n_k}r_s(t),
\end{equation}
for $t\in(0,T)$, where 
\begin{equation*}
T\vc\left(\limsup_{n\to\infty}\sqrt[R_n]{\frac{C_n}{\epsilon_n}}\right)^{-1}
\end{equation*}
and the series over $n$ is locally uniformly convergent on $(0,T)$.
\end{theorem}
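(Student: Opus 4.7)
The plan is to tighten the argument already present in the proof of Theorem \ref{thm:hk_asymptotic_general} by extracting a quantitative decay rate for the remainder $F_{R_N}(t)$. From the recurrence \eqref{phi_recurrence} we already know that for every $N \in \Np$,
\begin{equation*}
\heat_P(t) \;=\; \sum_{n=1}^{N} \phi_n(t) \;+\; F_{R_N}(t), \qquad F_{R_N}(t) = \tfrac{1}{2\pi i}\int_{-R_N - i\infty}^{-R_N + i\infty} \ZZ(s)\, t^{-s}\, ds,
\end{equation*}
so the whole theorem reduces to showing that $F_{R_N}(t) \to 0$ locally uniformly on $(0,T)$ as $N\to\infty$.

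The key step is to combine the exponential decay hypothesis \eqref{exact} with a simple rescaling. Parametrising the vertical line by $s = -R_N + iy$ and using \eqref{exact} yields
\begin{equation*}
\abs{F_{R_N}(t)} \;\leq\; \tfrac{t^{R_N}}{2\pi}\int_{-\infty}^{\infty} \abs{\ZZ(-R_N + iy)}\,dy \;\leq\; \tfrac{C_N\, t^{R_N}}{2\pi}\int_{-\infty}^{\infty} e^{-\epsilon_N \abs{y}}\,dy \;=\; \tfrac{C_N}{\pi\,\epsilon_N}\, t^{R_N}.
\end{equation*}
Now I would rewrite this as
\begin{equation*}
\abs{F_{R_N}(t)} \;\leq\; \tfrac{1}{\pi} \Bigl( t \cdot \bigl(C_N/\epsilon_N\bigr)^{1/R_N} \Bigr)^{R_N},
\end{equation*}
which exposes $T = \bigl(\limsup_N (C_N/\epsilon_N)^{1/R_N}\bigr)^{-1}$ as a kind of "radius of convergence". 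Note that the boundedness assumption on $\sqrt[R_n]{C_n/\epsilon_n}$ guarantees $T>0$.

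To conclude, pick any compact interval $[a,b]\subset(0,T)$. Since $b<T$, the definition of $T$ provides some $q\in(0,1)$ and $N_0$ with $b\cdot (C_N/\epsilon_N)^{1/R_N}\leq q$ for all $N\geq N_0$; together with $R_N\to\infty$ this forces $\sup_{t\in[a,b]}\abs{F_{R_N}(t)} \leq q^{R_N}/\pi \to 0$. Hence $\sum_{n=1}^N \phi_n(t) \to \heat_P(t)$ uniformly on $[a,b]$, and since $[a,b]\subset(0,T)$ was arbitrary, the series $\sum_{n=1}^\infty \phi_n(t)$ (which unfolds into \eqref{hk_exact_general} by Theorem \ref{thm:hk_asymptotic_general}) converges locally uniformly on $(0,T)$ to $\heat_P(t)$. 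The only real content is the rewriting that turns the remainder into a geometric-like expression; everything else is a direct book-keeping exercise on top of Theorem \ref{thm:hk_asymptotic_general}.
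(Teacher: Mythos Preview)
Your proof is correct and essentially identical to the paper's: both derive the bound $\abs{F_{R_N}(t)}\leq \tfrac{C_N}{\pi\epsilon_N}t^{R_N}$ from \eqref{exact}, rewrite it as $\tfrac{1}{\pi}\bigl(t\,(C_N/\epsilon_N)^{1/R_N}\bigr)^{R_N}$, and use the definition of $T$ to dominate this by a geometric term $a^{R_N}/\pi$ uniformly on compact subsets of $(0,T)$. The only cosmetic difference is that the paper works with intervals $(0,T']$ rather than $[a,b]$, which amounts to the same thing for local uniform convergence.
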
 

\begin{proof}
Let us estimate the reminder $F_{R_n}(t)$ as follows
\begin{multline*}
\abs{F_{R_n}(t)}=\tfrac1{2\pi}\abs{\int_{-\infty}^\infty \ZZ(-R_n+iy)t^{R_n-iy}dy}
\leq\tfrac1{2\pi}\int_{-\infty}^\infty C_ne^{-\epsilon_n\abs{y}}t^{R_n}dy\\
=\tfrac{C_nt^{R_n}}{2\pi}2\int_0^\infty e^{-\epsilon_ny}dy
=\tfrac{C_nt^{R_n}}{2\pi}\tfrac2{\epsilon_n}.
\end{multline*}

Let $0<T'<T$. Then, for any $t\in(0,T']$
\begin{equation*}
\limsup_{n\to\infty}\left(t\sqrt[R_n]{\tfrac{C_n}{\epsilon_n}}\right)=\tfrac tT\leq\tfrac{T'}T<1.
\end{equation*}
Hence, for sufficiently large $n$ we have $t\sqrt[R_n]{\frac{C_n}{\epsilon_n}}<a$, where $a\in(T'/T,1)$ is some constant independent of $t$. Then
\begin{equation}
\label{F_Rn_bound}
\abs{F_{R_n}(t)}\leq\tfrac{C_nt^{R_n}}{\epsilon_n\pi}<\tfrac{a^{R_n}}\pi\to0
\end{equation}
so $F_{R_n}(t)$ tends to 0 uniformly for $t\in(0,T']$ as $n\to\infty$. Since $T'$ can be any number in $(0,T)$, the theorem is proven.
\end{proof}

Note that again the order of summation in \eqref{hk_exact_general} is important and the convergence may be only conditional. As in the case of the vertical sum over $k$ (see page \pageref{residue_sum}), the convergence properties depend on the choice of the $(R_n)$ sequence. And as in Proposition \ref{rem:absolute_conv} we can refine the assumptions of Theorem \ref{cor:hk_exact} to obtain an absolute convergence of the series over $n$ in formula \eqref{hk_exact_general}.

Let us first adopt the following definition:

\begin{Def}\label{def:absolutely_exact}
Let the operator $P$ be such that
\begin{align}\label{exact_series}
\heat_P(t) = \sum_{n=1}^{\infty} \phi_n(t), && \text{for } t \in (0,T),
\end{align}
with some $0<T\leq + \infty$, where $(\phi_n)_{n\in\N}$ is the asymptotic scale given by \eqref{phi_n_scale}. We say that the heat trace associated with $P$ has an \emph{absolutely exact} expansion on $(0,\wt{T})$ if the series
\begin{align*}
\sum_{n=1}^{\infty} \abs{\phi_n(t)}
\end{align*}
is locally uniformly convergent on $(0,\wt{T})$ with some $0<\wt{T} \leq T$.
\end{Def}

Note, that since the RHS of \eqref{exact_series} is not in general a Taylor series, its domain of convergence does not necessarily coincide with that of absolute convergence, hence $\wt{T} \leq T$. In particular if $0 = \wt{T} < T$, then the expansion of the heat trace associated with $P$ will be exact on $(0,T)$, but nowhere absolutely exact.

By using a similar reasoning as the one used in the proof of Proposition \ref{rem:absolute_conv}, we can refine the assumptions of Theorem \ref{cor:hk_exact} to obtain an absolutely exact heat trace expansion.

\begin{prop}\label{prop:abs}
Let the assumptions of Theorem \ref{cor:hk_exact} be fulfilled and moreover let $\log n =\oinf(R_n)$ (i.e. $R_n$ grow faster than $\log n$). Then the expansion is absolutely exact on the whole domain $(0,T)$.
\end{prop}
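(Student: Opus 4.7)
The plan is to express each $\phi_n$ as a telescoping difference of remainder integrals $F_{R_j}$ and then reuse the pointwise bound on $F_{R_n}$ established in the proof of Theorem \ref{cor:hk_exact}. Recall that the recurrence $F_{R_n}(t)=\phi_{n+1}(t)+F_{R_{n+1}}(t)$ derived in the proof of Theorem \ref{thm:hk_asymptotic_general} yields $\phi_{n+1}(t)=F_{R_n}(t)-F_{R_{n+1}}(t)$ for every $n\geq1$, while $\phi_1(t)=\heat_P(t)-F_{R_1}(t)$. From Theorem \ref{cor:hk_exact} we have the explicit estimate
\begin{equation*}
\abs{F_{R_n}(t)}\leq\tfrac{C_n\,t^{R_n}}{\pi\,\epsilon_n},
\end{equation*}
so the triangle inequality gives
\begin{equation*}
\abs{\phi_{n+1}(t)}\leq\tfrac{1}{\pi}\left(\tfrac{C_n\,t^{R_n}}{\epsilon_n}+\tfrac{C_{n+1}\,t^{R_{n+1}}}{\epsilon_{n+1}}\right).
\end{equation*}

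Next, I would fix any $T'\in(0,T)$ and pick $a\in(T'/T,1)$. By the definition of $T$ exactly as in \eqref{F_Rn_bound}, one has $t\sqrt[R_n]{C_n/\epsilon_n}<a$ for all sufficiently large $n$ and all $t\in(0,T']$, whence $C_n t^{R_n}/\epsilon_n< a^{R_n}$ uniformly in $t$. Thus, for large $n$,
\begin{equation*}
\sup_{t\in(0,T']}\abs{\phi_{n+1}(t)}\leq\tfrac{1}{\pi}\bigl(a^{R_n}+a^{R_{n+1}}\bigr).
\end{equation*}
Uniform convergence of the series $\sum_n\abs{\phi_n(t)}$ on $(0,T']$ therefore reduces to convergence of the numerical series $\sum_n a^{R_n}$.

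Here is where the assumption $\log n=\oinf(R_n)$ enters. Since $0<a<1$, we have $\log a<0$; choose $M>2/\abs{\log a}$. The hypothesis $R_n/\log n\to\infty$ implies $R_n>M\log n$ eventually, so
\begin{equation*}
a^{R_n}=e^{R_n\log a}<e^{M\log a\cdot\log n}=n^{M\log a}\leq n^{-2},
\end{equation*}
which is summable. Consequently $\sum_n\abs{\phi_n(t)}$ converges uniformly on $(0,T']$, and since $T'<T$ was arbitrary, the expansion is absolutely exact on $(0,T)$ in the sense of Definition \ref{def:absolutely_exact}.

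The argument is almost entirely bookkeeping: the pointwise bound on $F_{R_n}$ and the defining property of $T$ are already in Theorem \ref{cor:hk_exact}, and the only genuinely new input is the elementary observation that $\log n=\oinf(R_n)$ forces $\sum_n a^{R_n}<\infty$ for every $a\in(0,1)$. I do not foresee a serious obstacle; the mildly delicate point is to confirm that the bound $C_n t^{R_n}/\epsilon_n<a^{R_n}$ can be enforced \emph{uniformly} for $t\in(0,T']$ with the \emph{same} $a$, which is exactly the uniform estimate \eqref{F_Rn_bound} transposed from the proof of Theorem \ref{cor:hk_exact}.
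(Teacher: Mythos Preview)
Your proof is correct and follows essentially the same route as the paper: telescope $\phi_n$ into differences of the $F_{R_j}$, reuse the bound $\abs{F_{R_n}(t)}\leq C_n t^{R_n}/(\pi\epsilon_n)$ from Theorem~\ref{cor:hk_exact}, and reduce everything to convergence of $\sum_n a^{R_n}$ for $a\in(0,1)$. The only minor difference is in this last step: the paper recognises $\sum_n a^{R_n}$ as a general Dirichlet series and invokes Theorem~\ref{thm:abscissa} to read off its abscissa of convergence $\limsup_n(\log n)/R_n=0$, whereas you give a direct comparison with $\sum n^{-2}$ via $R_n>M\log n$---a slightly more elementary but entirely equivalent argument.
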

\begin{proof}
As announced, we proceed similarly to the proof of Proposition \ref{rem:absolute_conv}, but now we shall add together the contributions of subsequent vertical integrals. From \eqref{phi_recurrence} we have
\begin{align*}
\abs{\phi_n(t)} =\abs{F_{R_n}(t) - F_{R_{n-1}}(t)}\leq \abs{F_{R_n}(t)} + \abs{F_{R_{n-1}}(t)}.
\end{align*}
Now, \eqref{F_Rn_bound} implies
\begin{align*}
\sum_{n = 1}^{\infty} \abs{\phi_n(t)} 
\leq 2\sum_{n = 1}^{\infty} \abs{F_{R_n}(t)}
\leq 2\sum_{n = 1}^{\infty}\frac{C_n t^{R_n}}{\epsilon_n \pi} 
< \tfrac{2}{\pi}\sum_{n = 1}^{\infty}a^{R_n}.
\end{align*}
with $a \in (T'/T,1)$ for any $T' \in (0,T)$. Therefore, it suffices to show that the last series is convergent for any $a<1$. Taking $x=-\log a$ we see that it is again a general Dirichlet series \eqref{gen_Dirichlet_series} with $a_n=1, b_n=R_n$ for $n\in\N$. By Theorem \ref{thm:abscissa} we conclude that its abscissa of convergence is 
\begin{equation*}
\limsup_{n\to\infty} \frac{\log n}{R_n},
\end{equation*}
which equals $0$ by hypothesis. Thus, the series is convergent for $x>0$, i.e $a<1$.
\end{proof}

The characterisation of an absolutely exact expansion through Proposition \ref{prop:abs} is particularly useful. If we have an exact expansion of a heat trace $\heat_P$ on $(0,T)$, it is sufficient to check whether the sequence $(R_n)$ of our choice grows faster than $\log n$. In fact, all of the examples of exact expansions presented in Section \ref{sec:examples} are actually absolutely exact in the same domain.

Let us also note, that if we have an exact expansion of the heat trace for an open interval $(0,T)$, then $\heat_P$ actually provides an analytic continuation of the RHS of \eqref{hk_exact_general} to the whole half line $\R^+$.

Let us now turn to the case of an almost exact expansion of heat traces. The situation is somewhat different than that of an exact expansion, as one would need precise formulae rather than estimates to guarantee that the limit $F_{R_N}$ as $N\to\infty$ is finite, but non-zero. It may be seen as a kind of ``critical'' case, in the sense that a slight perturbation of the zeta-function renders the expansion divergent (see Section \ref{divergent}). One of these specific cases is captured by the following proposition.

\begin{Prop}\label{prop:finite}
If the operator $P$ fulfilling the assumptions of Theorem \ref{thm:hk_asymptotic_general} is such that $\PP_\ZZ(\C)$ is a finite set, then the expansion
\begin{align}\label{heat_finite}
\heat_P(t) = \sum_{s \in \PP_\ZZ(\C)} r_s(t) + F_{\infty}(t),
\end{align}
is almost exact for all $t>0$, but not exact.
\end{Prop}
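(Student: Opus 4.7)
The plan splits into two essentially independent parts: showing that \eqref{heat_finite} is almost exact on the whole half-line $(0,\infty)$, and showing that the remainder $F_\infty$ does not vanish identically. The first part exploits the finiteness of $\PP_\ZZ(\C)$ to collapse the double asymptotic series of Theorem \ref{thm:hk_asymptotic_general} to a finite sum; the second is a large-$t$ comparison of the exponential decay of $\heat_P$ with the at-most-polynomial-logarithmic behaviour of the residues $r_s(t)$.

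For the almost exactness, I would take any sequence $(R_n)$ provided by the hypotheses of Theorem \ref{thm:hk_asymptotic_general}. Since $R_n\to\infty$ and $\PP_\ZZ(\C)$ is finite, there exists $n_0\in\Np$ such that no pole of $\ZZ$ lies in the half-plane $\Re(s)<-R_{n_0}$. For every $n>n_0$ the strip $-R_n\leq\Re(s)\leq-R_{n-1}$ is then free of poles, so $S_k^n=\emptyset$ for all $k$ and $\phi_n(t)\equiv 0$. The asymptotic series in \eqref{hk_asymptotic_general} thus collapses to the \emph{finite} sum $\sum_{n=1}^{n_0}\phi_n(t)=\sum_{s\in\PP_\ZZ(\C)}r_s(t)$, which converges locally uniformly on $(0,\infty)$ in a trivial way. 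Setting $F_\infty(t):=\heat_P(t)-\sum_{s\in\PP_\ZZ(\C)}r_s(t)$ then gives the almost exact expansion \eqref{heat_finite} on all of $\R^+$.

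For the non-exactness, I would compare the behaviour of both sides of \eqref{heat_finite} as $t\to\infty$. As already observed in the proof of Lemma \ref{lem:hk_Mellin},
\begin{equation*}
\heat_P(t)=e^{-t\lambda_0}\Big[M_0+\sum_{n\geq 1}M_n\,e^{-t(\lambda_n-\lambda_0)}\Big]=\OO_\infty(e^{-t\lambda_0}),
\end{equation*}
so $\heat_P$ decays exponentially. In contrast, by \eqref{r_s(t)_asymptotics} each $r_s(t)$ is a finite linear combination of monomials $t^{-s_0}(-\log t)^k$, and no non-zero finite linear combination of such monomials can decay faster than some fixed power of $t$. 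Hence the finite sum $\sum_{s\in\PP_\ZZ(\C)}r_s(t)$ is either identically zero, in which case $F_\infty=\heat_P\not\equiv 0$, or decays strictly slower than exponentially, in which case the rate mismatch forces $F_\infty(t)\sim-\sum_{s}r_s(t)\not\equiv 0$ as $t\to\infty$. Either way $F_\infty\not\equiv 0$, so the expansion is not exact.

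The only mildly delicate point is the claim that a non-zero finite linear combination of the functions $t^{-s_0}(-\log t)^k$ cannot decay exponentially at infinity. This is classical and can be argued through the substitution $u=\log t$, which turns the combination into a non-zero exponential polynomial $\sum_j c_j\,u^{k_j}\,e^{-s_j u}$ in $u$; its behaviour as $u\to\infty$ is governed by the term with smallest $\Re(s_j)$ and, among those, largest $k_j$, and thus forbids super-polynomial decay in $t$. Apart from this, the argument is essentially bookkeeping enabled by the finite-pole hypothesis, and I do not anticipate any further obstacle.
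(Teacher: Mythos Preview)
Your proposal is correct and follows essentially the same approach as the paper: the finiteness of $\PP_\ZZ(\C)$ collapses the asymptotic series to a finite sum (hence trivially convergent on all of $\R^+$), and the exponential decay of $\heat_P(t)$ at infinity is incompatible with the $t^{-s}(\log t)^k$ behaviour of the residues, forcing $F_\infty\not\equiv 0$. You are in fact slightly more careful than the paper in treating the edge case where the residue sum might vanish identically and in justifying why a non-zero exponential polynomial in $\log t$ cannot decay exponentially in $t$.
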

\begin{proof}
Let us note that $\PP_\ZZ(\C)$ being finite requires $\PP_{\zP}(\C)$ to be finite, but also that the zeros of $\zP$ cancel all but a finite number of poles of $\Gamma$, i.e. $\zP(-n)=0$ for all but a finite number of $n\in\N$. In this case, Theorem \ref{thm:hk_asymptotic_general} yields an asymptotic expansion, which has only a finite number of terms, hence it converges for all $t>0$. As $t$ tends to infinity we have $\heat_P(t) = \OO_{t \to \infty} (e^{-t\lambda_0})$, since
\begin{align*}
\Tr\,e^{-t\,P} = \sum_{n=0}^\infty M_n e^{-t\,\lambda_n} = e^{-t \, \lambda_0} \sum_{n=0}^\infty e^{-t\,(\lambda_n-\lambda_0)} \leq e^{-t\,\lambda_0} c \text{ for } t \geq 1, 
\end{align*} 
where $c = \sum_{n=0}^\infty e^{-(\lambda_n-\lambda_0)}=e^{\lambda_0}\Tr e^{-P}<\infty$. But this is not compatible with the behaviour $t^{-s} (\log t)^k$ (compare \eqref{r_s(t)_asymptotics}) of the summands of the first term on the RHS of \eqref{heat_finite}. Hence, $F_\infty(t)$ cannot be 0 and the expansion \eqref{heat_finite} is almost exact, but not exact.
\end{proof}

\subsection{Truncated zeta-function}
\label{sec:truncated}
For further purposes it is convenient to define also the following spectral function.
\begin{definition}
For any $N \in \Np$, a \emph{truncated zeta-function} associated with the operator $P$ is a complex function
\begin{equation}\label{zT}
\C \supset \Dom(\zP) \ni s \mapsto \zT(s)=\sum_{n=N}^{\infty} M_n\,\lambda_n^{-s}.
\end{equation}
\end{definition}

The two zeta-functions are related by
\begin{align}
\label{zT_to_zP}
\zP(s) = \sum_{n=0}^{N-1} M_n\,\lambda_n^{-s} + \zT(s).
\end{align}
The first term of the RHS of the above formula is an entire function of $s$, hence the analytic properties of $\zP$ and $\zT$ are identical.

We can actually relax slightly the growth rate assumptions \ref{assum:zeta_bound_vert} and \ref{assum:zeta_bound_horiz} of Theorem \ref{thm:hk_expansion} by considering the truncated zeta-function \eqref{zT} at the place of $\zP$. In Section \ref{sec:examples} we will encounter explicit examples, when $\zP$ grows too fast on vertical lines, but nevertheless the inverse Mellin transform technique can be applied by using $\zT$ instead.

\begin{Prop}\label{prop:truncated}
If the assumptions of Theorem \ref{thm:hk_expansion} hold for $\zT$ and $\trunc{\ZZ} : s \mapsto \Gamma(s)\zT(s)$ with some finite $N \in \Np$, then for $t>0$ we have
\begin{equation*}
\heat_P(t)=\sum_{n=0}^{N-1} M_n e^{-t \lambda_n} + \sum_{n=0}^\infty \sum_{s\in S_n} \trunc{r}_s(t)+\trunc{F}_R(t),
\end{equation*}
where
\begin{align*}
\trunc{r}_s(t)&\vc\Rez{s'=s}\left(\trunc{\ZZ}(s')\,t^{-s'}\right),\\
\trunc{F}_R(t)&\vc\tfrac1{2\pi i}\int_{-R-i\infty}^{-R+i\infty}\trunc{\ZZ}(s)\,t^{-s}\,ds = \oz(t^R).
\end{align*}
\end{Prop}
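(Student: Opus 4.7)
The plan is to deduce the statement from Theorem \ref{thm:hk_expansion} applied not to $\heat_P$ but to the \emph{tail} heat trace
$$\heat_P^{(N)}(t) \vc \sum_{n=N}^{\infty} M_n\,e^{-t\lambda_n} \;=\; \heat_P(t) - \sum_{n=0}^{N-1} M_n\,e^{-t\lambda_n},$$
with $\trunc{\ZZ}$ playing the role of $\ZZ$. The first step is to establish the analogue of Lemma \ref{lem:hk_Mellin},
$$\Mellin\bigl[\heat_P^{(N)}\bigr](s) \;=\; \Gamma(s)\,\zT(s) \;=\; \trunc{\ZZ}(s), \qquad \Re(s)>L,$$
by a verbatim reprise of its proof: since $\heat_P^{(N)}$ differs from $\heat_P$ only by a finite sum of exponentials, it shares the same abscissa of convergence $L$, the bound $\Oz(t^{-\alpha})$ for every $\alpha>L$ from Proposition \ref{cor:hk_order}, and the exponential decay $\Oinf(e^{-t\lambda_N})$ at $+\infty$. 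These estimates justify swapping the integral with the summation over $n\ge N$ and immediately produce $\Gamma(s)\zT(s)$.

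With this Mellin identity in hand, the second step is to repeat, step by step, the contour argument of Theorem \ref{thm:hk_expansion}, with every occurrence of the pair $(\heat_P,\ZZ)$ replaced throughout by $(\heat_P^{(N)},\trunc{\ZZ})$. The hypotheses of that theorem are assumed directly on $\trunc{\ZZ}$, so the residue theorem over the rectangles $D_k$ produces $\sum_{m=1}^k\sum_{s\in S_m}\trunc{r}_s(t)$; the horizontal contributions $I_H^\pm(k)$ vanish as $k\to\infty$ by \eqref{assum:zeta_b}; Theorem \ref{thm:Mellin_inverse} applied on the vertical line $\Re(s)=c$ recovers $\heat_P^{(N)}(t)$ (note that $\heat_P^{(N)}$ is continuous on $\R^+$, as required); and the integral on the line $\Re(s)=-R$ is by definition $\trunc{F}_R(t)$. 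The final Riemann--Lebesgue computation is unchanged and yields $\trunc{F}_R(t)=\oz(t^R)$.

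Collecting the pieces gives
$$\heat_P^{(N)}(t) \;=\; \sum_{n=0}^{\infty} \sum_{s\in S_n} \trunc{r}_s(t) \;+\; \trunc{F}_R(t),$$
and adding back the finite sum $\sum_{n=0}^{N-1}M_n e^{-t\lambda_n}$ delivers the claim. There is no genuine obstacle: the proposition is really the observation that the hypotheses of Theorem \ref{thm:hk_expansion} are properties of the meromorphic function appearing in the Mellin identity, and nothing forces this function to be Mellin-dual to a series starting at $n=0$. The usefulness of the refinement, as promised in the preceding paragraph of the text, is that in concrete examples the explicit presentation of $\zT$ may make the regularity and Lebesgue-integrability conditions on vertical lines easier to verify than those on $\zP$ itself.
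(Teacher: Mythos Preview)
Your proof is correct and follows essentially the same route as the paper's: both recognise that the entire contour argument of Theorem \ref{thm:hk_expansion} carries over verbatim with $\trunc{\ZZ}$ in place of $\ZZ$, the only novelty being the identification of $\Mellin^{-1}[\trunc{\ZZ}]$ with the tail $\heat_P^{(N)}$. You establish this identification by computing the forward Mellin transform of $\heat_P^{(N)}$ directly (re-running Lemma \ref{lem:hk_Mellin}), while the paper obtains it by applying the inverse Mellin transform termwise to the relation $\Gamma(s)\zP(s)=\sum_{n<N}M_n\lambda_n^{-s}\Gamma(s)+\trunc{\ZZ}(s)$ and invoking $\Mellin^{-1}[\lambda_n^{-s}\Gamma(s)](t)=e^{-t\lambda_n}$; the two computations are equivalent.
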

\begin{proof}
The proof of Theorem \ref{thm:hk_expansion} will work equally well for this case with the only difference at line \eqref{lim_Ic} which will now read
\begin{equation*}
\tfrac1{2\pi i}\,\lim_{n\to\infty}\trunc{I}_c(n)=\Mellin^{-1}[\trunc{\ZZ}](t).
\end{equation*}
Applying the inverse Mellin transform to relation \eqref{zT_to_zP} multiplied by $\Gamma(s)$ we get (using Lemma \ref{lem:hk_Mellin})
\begin{equation*}
\heat_P(t)=\sum_{n=0}^{N-1} M_n \Mellin^{-1}[\lambda_n^{-s}\Gamma(s)] + \Mellin^{-1}[\trunc{\ZZ}](t).
\end{equation*}
Note that for any $n\in\N$ the function $\lambda_n^{-s}\Gamma(s)$ is integrable over any vertical line $\Re(s)=c\notin\N$ (as $\Gamma$ decays exponentially on verticals, compare \eqref{gamma_vert}) and that it equals to $\Mellin[\exp(-t\lambda_n)](s)$ (cf. \eqref{Mellin[hk]}). The claim then follows from Theorem \ref{thm:Mellin_inverse}.
\end{proof}

As $\zP$ and $\zT$ differ by an entire function, the assumptions \textit{(i)} and \textit{(ii)} of Theorem \ref{thm:hk_expansion} are fulfilled by $\zT$ if and only if they are fulfilled by $\zP$. In view of Proposition \ref{prop:truncated} it is convenient to introduce the notion of a truncated heat trace:
\begin{equation}
\label{truncated_htr}
\trunc{\heat}_P(t)\vc\heat_P(t)-\sum_{n=0}^{N-1} M_n e^{-t \lambda_n}.
\end{equation}
Then all of the considerations about the truncated zeta-function can be expressed in one simple statement:

\begin{remark}\label{remark_trunc}
All of the results concerning the asymptotic and exact expansions of heat traces hold with $\zP$ altered for the truncated zeta-function $\zT$ \eqref{zT}. One proceeds as in Proposition \ref{prop:truncated} and substitutes $\trunc{\heat}_P$ for $\heat_P$ and $\trunc{\ZZ}$ for $\ZZ$ in all of the assertions. 
\end{remark}

\section{Examples}\label{sec:examples}

\subsection{Operators of polynomial spectrum}\label{sec:polynomial}

In this section we investigate heat traces associated with operators, the eigenvalues and multiplicites of which are given by polynomials. For brevity we shall call such operators -- the \emph{operators of polynomial spectrum}. They appear naturally in the context of Dirac and Laplace operators on spheres \cite{SphereDirac,Trautman} and their isospectral deformations \cite{ConnesSU2,ConnesIsospectral,AllPodles,EquatorialPodles,DiracSUq2,PalSundar}.

Moreover, the results presented in this section apply directly in the framework of Dirac operators on other homogeneous spaces \cite{BarHomogeneous,Marcolli1,TehPHD}, like the Poincar\'e sphere or lens spaces. Indeed, recall that the spectra of the relevant operators can be written as \cite{BarHomogeneous,TehPHD}
\begin{align*}
\sigma(P) = \bigcup_{k = 1}^{N} (\lambda_n^k)_{n \in \N},
\end{align*}
for some finite $N \in \Np$, and for each $k \in \Np$ the eigenvalues $\lambda_n^k$ and respective multiplicites $M_n^k$ are given by polynomials in $n$. Since the (inverse) Mellin transform is linear, one can apply the general theory to each sequence $(\lambda_n^k)_{n \in \N}$ separately.

By combining the results of \cite{MatsumotoWeng} with general theorems presented in the preceding section we are able to prove the existence of an asymptotic expansion of the heat traces associated with operators of polynomial spectra (see Theorem \ref{thm:heat_poly}). We also derive sufficient conditions for the convergence of the expansion.

\subsubsection{Asymptotic expansions}

Let us start with the following theorem summarising the behaviour of zeta-functions associated with operators in the considered class (compare \cite[Theorems A and B]{MatsumotoWeng}).

\begin{theorem}\label{thm:zPoly}
Let $P$ be an operator with eigenvalues $\lambda_{n} = A(n)$ and multiplicities $M_n = B(n)$, where $A$ and $B$ are polynomials. Assume moreover that the roots of $A$ are not in $\N$.

Then:
\begin{enumerate}
	\item \label{zPoly_def} $\zP$ is well-defined, with the abcissa of convergence $L = (1+\fb)/\fa$;
	\item \label{zPoly_ext} $\zP$ admits a meromorphic extension to the whole complex plane;
	\item \label{zPoly_poles} $\PP_{\zP}(\C) \subset \tfrac{1}{\fa}(1+\fb - \N) \setminus (-\N)$ and all of the poles are of first order;
	\item \label{zPoly_est} For any $\epsilon >0$ there exists $N>0$ such that the truncated zeta-function (see \eqref{zT}), $\zT$, obeys the following growth rate along the vertical lines
		\begin{align*}
			\abs{\zT(x+iy)} = \Oinf\left( e^{ \epsilon \abs{y}} \right),
		\end{align*}
		for any $x \in \R$.
\end{enumerate}
\end{theorem}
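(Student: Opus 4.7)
The plan is to treat each of the four claims in turn, reducing everything to the meromorphic structure of the Hurwitz zeta-function via a careful binomial expansion.

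Part \ref{zPoly_def} I would deduce immediately from Proposition \ref{cor:zeta_abscissa}: positivity of $P$ forces the leading coefficient $a_0$ of $A$ to be positive, so $\lambda_n\sim a_0\,n^{\fa}$ and $N(\lambda_n)=\sum_{k=0}^n B(k)\sim c\,n^{\fb+1}\sim c'\lambda_n^{(\fb+1)/\fa}$, whence formula \eqref{zeta_abscissa_inf} yields $L=(\fb+1)/\fa$.

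For parts \ref{zPoly_ext} and \ref{zPoly_poles} I would work with $\zT$ for $N$ large enough that $A(n)>0$ and $|A(n)-a_0 n^{\fa}|<a_0 n^{\fa}$ for $n\geq N$. Writing $A(z)=a_0 z^{\fa}+a_1 z^{\fa-1}+\dots+a_{\fa}$ and setting $\psi(n)\vc (A(n)-a_0 n^{\fa})/(a_0 n^{\fa})=\sum_{k=1}^{\fa}(a_k/a_0)\,n^{-k}$, the binomial series gives $A(n)^{-s}=a_0^{-s}n^{-\fa s}\sum_{j\geq0}\binom{-s}{j}\psi(n)^j$. After multiplying by $B(n)$ and collecting in powers of $n^{-1}$, this formally rearranges as $a_0^{-s}\sum_{\ell\geq0}q_\ell(s)\,n^{-\fa s+\fb-\ell}$ with polynomials $q_\ell(s)$ of degree at most $\ell$. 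Truncating at $\ell\leq K$ and summing from $n=N$ produces
\begin{equation*}
\zT(s)=a_0^{-s}\sum_{\ell=0}^{K}q_\ell(s)\,\zeta(\fa s-\fb+\ell,\,N)+R_{K,N}(s),
\end{equation*}
where $\zeta(\cdot,N)$ is the Hurwitz zeta and $R_{K,N}$ converges absolutely for $\Re(s)>(\fb-K)/\fa$. Since $\zeta(w,N)$ is meromorphic on $\C$ with only a simple pole at $w=1$, letting $K\to\infty$ extends $\zT$ (and $\zP$, via \eqref{zT_to_zP}) to all of $\C$, with simple poles confined to $\{(1+\fb-\ell)/\fa:\ell\in\N\}$.

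The hard part will be excluding $-\N$ from the pole set. Suppose $s_0=-m$ coincides with $(1+\fb-\ell_0)/\fa$, so $\ell_0=1+\fb+m\fa$; cancellation amounts to $q_{\ell_0}(-m)=0$. The combinatorial observation I would rely on is that $\psi(n)$ is a Laurent polynomial in $n^{-1}$ of orders $1,\dots,\fa$, whence $\psi(n)^j$ contributes only to coefficients of $n^{-k}$ with $j\leq k\leq j\fa$. Writing $\tilde B(n)\vc B(n)/n^{\fb}=\sum_{i=0}^{\fb}b_i n^{-i}$ and tracking each contribution, $q_{\ell_0}(-m)$ becomes a sum of terms $\binom{-s}{j}\big|_{s=-m}\,\psi_{j,\ell_0-i}$ over $i\in\{0,\dots,\fb\}$ for which $j\geq(\ell_0-i)/\fa=m+(1+\fb-i)/\fa>m$. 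But $\binom{-s}{j}\big|_{s=-m}=\binom{m}{j}$ vanishes for $j>m$, killing every surviving term; hence $q_{\ell_0}(-m)=0$.

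For part \ref{zPoly_est}, I would observe that on any vertical line $\Re(s)=x$ the factors $a_0^{-s}$ and $A(n)^{-s}$ have modulus independent of $\Im s$, while each $q_\ell(s)$ grows polynomially in $|\Im s|$ and $\zeta(\fa s-\fb+\ell,N)$ grows polynomially on verticals away from its lone pole. Choosing $N$ (and $K$) sufficiently large relative to the strip of interest, $R_{K,N}$ is uniformly bounded in $\Im s$, so $\zT(x+iy)$ is of polynomial growth in $|y|$; this is in particular $\Oinf(e^{\epsilon|y|})$ for every $\epsilon>0$, which is what is claimed. The overall scheme follows \cite[Theorems A and B]{MatsumotoWeng}; the combinatorial identity of the third paragraph is the real content of the statement and is what I expect to require the most care.
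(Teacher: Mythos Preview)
Your argument is correct, but it follows a genuinely different route from the paper's. The paper factors $A(n)=a\prod_{i=1}^{\fa}(n+\alpha_i)$ over $\C$ and writes $\zP$ as a finite combination of the Matsumoto--Weng multivariable series $\zeta_r(s_1,\dots,s_r;\alpha_1,\dots,\alpha_r)$, then quotes Theorems A, B, C and Proposition~1 of \cite{MatsumotoWeng} for the pole structure and the vertical bound; the exponential rate there is $\sum_i|\arg\alpha_i|$, which is why the paper needs the shift $\alpha_i\mapsto\alpha_i+N$ to get an arbitrarily small $\epsilon$. You instead expand around the leading monomial $a_0 n^{\fa}$ and reduce to single-variable Hurwitz (indeed Riemann) zeta-functions, which is more elementary and self-contained, and for part~\ref{zPoly_est} actually gives the sharper conclusion of polynomial growth on verticals, with $N$ only large enough that $|\psi(n)|<1$. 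Your combinatorial exclusion of $-\N$ via $\binom{m}{j}=0$ for $j>m$ is clean and is the direct analogue of what the paper cites as \cite[Theorem C]{MatsumotoWeng}.

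One small correction in part~\ref{zPoly_est}: the remainder $R_{K,N}$ is not uniformly bounded in $\Im s$ but only polynomially bounded. The integral form of the Taylor remainder for $(1+\psi(n))^{-s}$ carries the prefactor $(-s)(-s-1)\cdots(-s-K)/K!$, which contributes a factor of order $|s|^{K+1}$; summing over $n\geq N$ then yields $|R_{K,N}(x+iy)|\leq C(x,K,N)\,(1+|y|)^{K+1}$. Since the main terms $q_\ell(s)\,\zeta(\fa s-\fb+\ell,N)$ are likewise polynomial on verticals, your overall conclusion that $\zT(x+iy)$ has polynomial growth in $|y|$ (hence is $\Oinf(e^{\epsilon|y|})$ for every $\epsilon>0$) stands.
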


\begin{proof}
Let us start with point \ref{zPoly_def}. The zeta-function associated with the operator $P$ for $\Re(s) > L$ can be written as
\begin{align}
\zP(s) = \sum_{n=0}^{\infty} \frac{B(n)}{A(n)^{s}},
\end{align}
since the roots of $A$ are not in $\N$. The corresponding spectral growth function reads
\begin{align*}
N(\lambda_n) = \sum_{k=0}^n B(k)=:\wt{B}(n),
\end{align*}
and it is a classical result (see e.g. the Faulhaber's formula) that $\wt{B}$ is a polynomial and $\deg \wt{B}=\deg B +1$. Hence, by Theorem \ref{thm:abscissa} the abscissa of convergence of $\zP$ equals (cf. \eqref{limsup_lambda})
\begin{equation}
L=\limsup_{n\to\infty}\tfrac{\log N(\lambda_n)}{\log\lambda_n}
=\limsup_{n\to\infty}\tfrac{\log \wt{B}(n)}{\log A(n)}
=\tfrac{\deg B +1}{\deg A}.
\end{equation}

Point \ref{zPoly_ext} is the content of Theorem B in \cite{MatsumotoWeng} (see also Remark 1 therein).

To prove points \ref{zPoly_poles} and \ref{zPoly_est} we introduce for $r\in\Np$ the following multi-variable series (see \cite[Section~1, Formula (2)]{MatsumotoWeng})
\begin{align}\label{zeta_r}
\zeta_r\left( s_1, \ldots, s_r; \alpha_1,\ldots,\alpha_r \right) \vc \sum_{n=0}^{\infty} (n+\alpha_1)^{-s_1} (n+\alpha_2)^{-s_2} \cdots (n+\alpha_r)^{-s_r},
\end{align}
for $s_1,\ldots,s_r \in \C$ with $\Re(s_1 + \ldots + s_r) > 1$ and $\alpha_1, \ldots, \alpha_r \in \C \setminus (- \N)$. The branch of logarithm in $(n+\alpha_j)^{-s_j} = \exp \left( -s_j \, \log(n+\alpha_j) \right)$ is chosen to be $-\pi < \arg(n+\alpha_j) \leq \pi$.

Note that if we write
\begin{align*}
A(n) = a \prod_{i=1}^{\fa} (n+\alpha_i), && B(n) = \sum_{j=0}^{\fb} \wt{b}_j (n+\alpha_1)^j,
\end{align*}
then (compare \cite[Section~1, Formula (3)]{MatsumotoWeng})
\begin{align}\label{zeta_conversion}
\zP(s) = a^{-s} \sum_{j=0}^{\fb} \wt{b}_j \zeta_{\fa}\left(s-j,s, \ldots, s; \alpha_1,\ldots,\alpha_r \right).
\end{align}

In \cite{MatsumotoWeng} the analytic properties of $\zeta_r$ functions are studied, what allows to draw conclusions about the analytic properties of $\zP$ on the strength of formula \eqref{zeta_conversion}. formulae (10) and (12) in \cite[Section 1]{MatsumotoWeng} provide an explicit meromorphic continuation of $\zeta_r$ to $\C^r$. The proof is based on the induction on $r$ with $r=1$ --- the Riemann zeta-function --- as a starting point.

The claim $\PP_{\zP}(\C) \subset \tfrac{1}{\fa}(1+\fb - \N)$ then follows from \cite[Theorem A]{MatsumotoWeng}. On the other hand, \cite[Theorem C]{MatsumotoWeng} implies that $-\N \nsubseteq \PP_{\zP}(\C)$. The fact that the poles $\zP$ are at most of first order is a consequence of \cite[Lemma 5]{MatsumotoWeng} and the formula \cite[Section~1, Formula (12)]{MatsumotoWeng} (see also \cite[Section 1.2 and p. 242]{MatsumotoWeng}).

Let us now pass on to the last point of the claim -- the estimate of $\zT$ on vertical lines. We first quote the result \cite[Proposition 1 (iii)]{MatsumotoWeng} translated to our notation:
\begin{align*}
\abs{\zeta_r\left( x_1 + i y_1, \ldots, x_r + i y_r; \alpha_1,\ldots,\alpha_{r-1},0 \right)} = 
\Oinf \left( C(y_1,\ldots,y_r) \. e^{\rho_1 |y_1| + \ldots + \rho_{r-1} |y_{r-1}|} \right),
\end{align*}
with $\rho_i = \abs{\arg \, \alpha_i}$ and $C$ -- a polynomial. Now, we repeat the proof of Proposition 1 (iii) \cite[p. 240]{MatsumotoWeng} with the formula (12) instead of (10) therein. The reasoning goes along the same lines, \cite[Lemma 1]{MatsumotoWeng} still applies with $y = \Im(z), \, A = - \tfrac{\pi}{2}, \, \alpha = y_r, \, B = \rho_r - \tfrac{\pi}{2}, \, \beta = 0$ and we conclude that
\begin{align}\label{zeta_poly_bound}
\abs{\zeta_r\left( x_1 + i y_1, \ldots, x_r + i y_r; \alpha_1,\ldots,\alpha_{r-1},\alpha_r \right)} = 
\Oinf \left( C(y_1,\ldots,y_r) \. e^{\rho_1 |y_1| + \ldots + \rho_{r-1} |y_{r-1}| + \rho_{r} |y_{r}|} \right).
\end{align}
Hence, by \eqref{zeta_conversion} we obtain
\begin{align*}
\abs{\zP(x + i y)} & = \Oinf \left( C(y)^{\fa} \. e^{\abs{y}\sum_{i=1}^{\fa} \abs{\arg \, \alpha_i}} \right) \\
 & = \Oinf \left( e^{\abs{y} \left(\sum_{i=1}^{\fa} \abs{\arg \, \alpha_i} + \delta \right)} \right),
\end{align*}
for any $\delta > 0$.

Here comes the advantage of using the truncated zeta-function $\zT$, since by starting the zeta-series at $n=N$ instead of $n=0$ we effectively shift $\alpha_i \to \alpha_i + N$ (compare \cite[Remark 1]{MatsumotoWeng}). This means that by taking $N$ large enough one can make $\sum_{i=1}^{\fa} \abs{\arg \, \alpha_i}$ arbitrarily small and assertion \ref{zPoly_est} is proven.
\end{proof}

Typically (i.e. for generic polynomials $A$ and $B$) the zeta-functions $\zP$ will have an infinite number of poles on the negative part of the real axis. This is a consequence of the \cite[formulae (10) and (12)]{MatsumotoWeng}. Let us illustrate this property with the following simple example.

\begin{example}
Let $P$ be an operator with eigenvalues $\lambda_n = A(n) = n(n+\alpha)$ for $n \in \Np$ with some $\alpha \in \R^+$ and no degeneracies (i.e. $B(n) = 1$), then $\PP_{\zP}(\C) = \tfrac{1}{2} - \N$.
\end{example}
Indeed, from the formulae \eqref{zeta_conversion} and \cite[(10)]{MatsumotoWeng} we have for $\Re(s) > -M/2$ with any $M \in \N$:
\begin{align*}
\zP(s) = \sum_{j=0}^{M} \binom{-s}{j} \zeta(2s+j) \alpha^j + h_M(s),
\end{align*}
where $h_M$ is a remainder term holomorphic for $\Re(s) > -M/2$. So for any $n \in \N$ we have
\begin{align*}
\Rez{s=\tfrac{1}{2} - n} \zP(s) & = \Rez{s=\tfrac{1}{2} - n} \left( \sum_{j=0}^{2n} \binom{-s}{j} \zeta(2s+j) \alpha^j + h_{2n}(s) \right) = \binom{n-\tfrac{1}{2}}{2n} \alpha^{2n} \neq 0.
\end{align*}

There exist however special operators $P$ with fine-tuned polynomials $A$ and $B$ for which the zeta-function $\zP$ will only have a finite number of poles. It happens for instance in the following case:

\begin{Prop}
If all of the roots of the polynomial $A$ are equal then the set $\PP_{\zP}(\C)$ is finite.
\end{Prop}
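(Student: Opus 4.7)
The plan is to exploit the fact that when all roots of $A$ coincide, $A(n)$ reduces to a single factor of the form $a(n+\alpha)^{\fa}$, which collapses the multi-variable zeta series appearing in the proof of Theorem \ref{thm:zPoly} into a linear combination of Hurwitz zeta functions, each of which has a single pole.

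Concretely, I would write $A(n) = a(n+\alpha)^{\fa}$ where $-\alpha$ is the common root (so $\alpha \notin -\N$ by the standing hypothesis of Theorem \ref{thm:zPoly}), and then Taylor-expand $B$ around $-\alpha$:
\begin{equation*}
B(n) = \sum_{j=0}^{\fb} c_j \, (n+\alpha)^{j},
\end{equation*}
for some coefficients $c_j \in \C$. Substituting into the defining series for $\zP$ (which converges for $\Re(s) > L$) and interchanging the finite sum with the infinite one gives
\begin{equation*}
\zP(s) = a^{-s} \sum_{j=0}^{\fb} c_j \sum_{n=0}^{\infty} (n+\alpha)^{j-\fa s} = a^{-s} \sum_{j=0}^{\fb} c_j \, \zeta_{H}(\fa s - j, \alpha),
\end{equation*}
where $\zeta_H(s,\alpha) \vc \sum_{n=0}^{\infty} (n+\alpha)^{-s}$ denotes the Hurwitz zeta function (well-defined since $\alpha \notin -\N$).

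Since $\zeta_H(\,\cdot\,,\alpha)$ admits a meromorphic extension to $\C$ with a single simple pole at $1$, each summand $\zeta_H(\fa s - j, \alpha)$ is meromorphic on $\C$ with at most one simple pole, located at $s = (j+1)/\fa$. The factor $a^{-s}$ is entire. Therefore
\begin{equation*}
\PP_{\zP}(\C) \subset \left\{ \tfrac{j+1}{\fa} : j = 0, 1, \ldots, \fb \right\},
\end{equation*}
which is finite (with at most $\fb + 1$ elements). I do not anticipate any real obstacle: the reduction to Hurwitz zeta is essentially algebraic and the pole structure of $\zeta_H$ is classical. The only point to be slightly careful about is the interchange of summations, which is justified by absolute convergence of the original series for $\Re(s) > L$.
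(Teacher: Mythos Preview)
Your proof is correct and is essentially identical to the paper's own argument: write $A(n)=a(n+\alpha)^{\fa}$, expand $B$ in powers of $(n+\alpha)$, and reduce $\zP$ to a finite linear combination of shifted Hurwitz zeta-functions, each contributing at most one simple pole. The paper's proof is slightly terser but follows exactly the same route.
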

\begin{proof}
Since $A(n) = a (n+\alpha)^{\fa}$ we have for $\Re(s) > (1+\fb)/\fa$
\begin{align*}
\zP(s) = \sum_{n=0}^{\infty} \sum_{j=0}^{\fb} \wt{b}_j \, a^{-s} \, (n+\alpha)^{-(\fa) s + j} = a^{-s} \sum_{j=0}^{\fb} \wt{b}_j \, \zeta_H((\fa) s - j,\alpha),
\end{align*}
where $\zeta_H$ is the Hurwitz zeta-function. Since $\zeta_H$ is meromorphic on $\C$ with a single simple pole at 1 and the sum over $j$ for $\zP$ is finite, the assertion follows.
\end{proof}

The equality of all roots of $A$ is a sufficient condition for $\PP_{\zP}(\C)$ to be finite, but not a necessary one. Consider for instance the operator $P$ with eigenvalues $\lambda_n = A(n) = n^3 + 1$ and degeneracies $B(n) = n^2$. Then, by using
\begin{align*}
(1+x)^{-s} =\sum_{k=0}^\infty \binom{-s}{k} x^k, \qquad \text{for } \abs{x} < 1,
\end{align*}
we obtain
\begin{align*}
\zP(s) = \sum_{n=0}^{\infty} \frac{n^2}{(n^3+1)^s} = \sum_{n=0}^{\infty} n^{-3s+2} (1+n^{-3})^{-s} = \sum_{j=0}^\infty \binom{-s}{j} \zeta(3s+3j-2),
\end{align*}
which gives a meromorphic extension of $\zP$ to the whole complex plane. On the other hand, the Theorem \ref{thm:zPoly} \ref{zPoly_poles} implies that $\PP_{\zP}(\C) \subset \tfrac{1}{3}(3 - \N) \setminus (-\N)$. Moreover, the poles of $\zP$ come only from the poles of $s \mapsto \zeta(3s+3j-2)$. But at $s = \tfrac{k}{3} - n$, $\zeta(k - 3n + 3j -2)$ is finite for all $n \in \N$, $j \in \N$ and $k \in \{1,2\}$ since the only pole of the Riemann zeta-function is at 1. Hence, $\zP$ has only one simple pole $\PP_{\zP}(\C) = \{1\}$.

It is interesting to compare this result with \cite[Lemma 1.10.1]{Gilkey1} and \cite[p. 2]{GilkeyGrubb}. The former tells us that zeta-functions associated with classical positive elliptic differential operators on (finite dimensional) compact manifolds have a finite number of poles only. On the other hand, the latter says that positive elliptic classical pseudodifferential operators generically do have an infinite number of simple poles, including at $-\N$.

Clearly, elliptic pseudodifferential operators on compact manifolds need not be of polynomial spectrum --- for instance, the scalar Laplacian on $\mathbb{T}^2$ \cite{TorusSA}. On the other hand, one might ask whether any operator of polynomial spectrum can be realised as a classical elliptic pseudodifferential operator on some compact manifold. We consider it as an interesting open problem, the solution to which may shed more light on the geometrical meaning of the operators in this class.

Finally, let us turn to the heat traces of operators with eigenvalues and multiplicites given by polynomials. The following Theorem establishes the existence of an asymptotic expansion of heat trace for any operator in this class.

\begin{theorem}\label{thm:heat_poly}
Let $P$ be an operator with eigenvalues $\lambda_{n} = A(n)$ and multiplicities $M_n = B(n)$, where $A$ and $B$ are polynomials of degree $\fa$ and $\fb$ respectively. Assume moreover that the roots of $A$ are not in $\N$.

Then, there exists $N \in \N$ such that
\begin{align}\label{heat_poly}
\heat_P(t) -  \sum_{n=0}^{N-1} B(n) e^{A(n) t} \tzero \sum_{k=0}^{\infty} \, \Rez{s = \ell - k/(\fa)} \left( \Gamma(s) \zT(s) \right) t^{\ell - k/(\fa)},
\end{align}
where $\ell = (1 + \fb)/(\fa)$.
\end{theorem}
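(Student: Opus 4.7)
The plan is to derive Theorem \ref{thm:heat_poly} as a direct application of the truncated heat trace machinery of Section \ref{sec:truncated}, namely Proposition \ref{prop:truncated} and Remark \ref{remark_trunc}, fed with the analytic information supplied by Theorem \ref{thm:zPoly}. Recall that $\zP$ and $\zT$ differ by an entire function, so they share the same meromorphic extension, the same pole set, and the same residues --- only their growth on vertical lines is sensitive to the cutoff $N$. The truncation is thus used purely as a technical device to shrink the vertical growth rate.

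\textbf{Step 1 (verification of the hypotheses of Theorem \ref{thm:hk_asymptotic_general} for $\zT$).} From Theorem \ref{thm:zPoly} we obtain the first two assumptions for free: $\zT$ is well-defined with abscissa of convergence $L=\ell$, it admits a meromorphic continuation to $\C$, and all its poles are simple and contained in the discrete set $\tfrac{1}{\fa}(1+\fb-\N)\setminus(-\N)$. The nontrivial point is the regularity, integrability and horizontal decay of $\trunc{\ZZ}(s)=\Gamma(s)\zT(s)$. Here Theorem \ref{thm:zPoly}\,(iv) is crucial: by choosing $N$ large enough we may arrange
\[
|\zT(x+iy)| = \Oinf\bigl(e^{\epsilon |y|}\bigr), \qquad \epsilon<\tfrac{\pi}{2},
\]
uniformly for $x$ in any bounded interval. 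Combined with the standard Stirling estimate
\[
|\Gamma(x+iy)| = \Oinf\bigl(|y|^{x-1/2}\,e^{-\pi|y|/2}\bigr),
\]
also uniform on bounded $x$-intervals, this yields exponential decay $|\trunc{\ZZ}(x+iy)|=\Oinf\bigl(|y|^{x-1/2}e^{-(\pi/2-\epsilon)|y|}\bigr)$. In particular, $\trunc{\ZZ}$ is Lebesgue integrable along every vertical line avoiding its poles, and the horizontal suprema in assumption \ref{assum4:general} of Theorem \ref{thm:hk_asymptotic_general} tend to zero along any sequence $y_k^{(n)}\to\pm\infty$.

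\textbf{Step 2 (choice of vertical cuts).} Since the pole set of $\trunc{\ZZ}$ is contained in the countable discrete set $\tfrac{1}{\fa}(1+\fb-\N)\cup(-\N)$, one may define
\[
-R_0 \vc \ell + \tfrac{1}{2\fa}, \qquad -R_n \vc \ell - \tfrac{n-1/2}{\fa} \; \text{ for } \; n\geq 1,
\]
shifting any $R_n$ by an arbitrarily small amount should it accidentally hit a pole. Then $(R_n)$ is strictly increasing to $+\infty$, $-R_0>L$, and the strip $\{-R_n<\Re(s)<-R_{n-1}\}$ contains exactly one potential pole of $\trunc{\ZZ}$, located at $s_n \vc \ell - (n-1)/\fa$. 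Because the pole lattices of $\zT$ and of $\Gamma$ are disjoint (Theorem \ref{thm:zPoly}\,(iii) excludes $-\N$ from $\PP_{\zT}(\C)$), every such pole is simple; consequently the general formula \eqref{r_s(t)_asymptotics} reduces to a pure power of $t$ with no $\log t$ factor.

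\textbf{Step 3 (assembly).} By Remark \ref{remark_trunc}, Theorem \ref{thm:hk_asymptotic_general} applies to the truncated heat trace
\[
\trunc{\heat}_P(t)=\heat_P(t)-\sum_{n=0}^{N-1}B(n)\,e^{-tA(n)},
\]
with asymptotic scale $\phi_n(t)=\Rez{s=s_n}\!\bigl(\Gamma(s)\zT(s)\bigr)t^{-s_n}$. Re-indexing via $k=n-1$ and recalling $-s_n = -(\ell - (n-1)/\fa) = -\ell + k/\fa$ so that $t^{-s_n}=t^{\ell - k/\fa}$ yields exactly \eqref{heat_poly}.

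The main obstacle is the quantitative matching of exponential rates in Step 1: a generic bound $|\zP(x+iy)|=\Oinf(e^{(\pi/2+\delta)|y|})$ would overwhelm the $e^{-\pi|y|/2}$ decay of $\Gamma$ and block a direct application of Theorem \ref{thm:hk_asymptotic_general} to $\zP$ itself. This is precisely the reason for the truncation: Theorem \ref{thm:zPoly}\,(iv), whose proof rests on the observation that shifting the summation index $n \mapsto n+N$ replaces $\alpha_i$ by $\alpha_i+N$ and thus drives $|\arg(\alpha_i+N)|$ to zero, gives the freedom to force $\epsilon<\pi/2$. All remaining steps are bookkeeping.
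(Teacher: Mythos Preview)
Your proof is correct and follows exactly the route the paper intends: apply Theorem~\ref{thm:hk_asymptotic_general} in its truncated version (Remark~\ref{remark_trunc}), with the analytic input coming from Theorem~\ref{thm:zPoly}; the paper's own proof is a one-line invocation of precisely these results, and you have simply spelled out the verification of the hypotheses. One tiny slip in Step~3: from $-s_n=-\ell+k/\fa$ you should get $t^{-s_n}=t^{-\ell+k/\fa}$, not $t^{\ell-k/\fa}$ --- but this matches a sign typo already present in the displayed statement~\eqref{heat_poly}, so the discrepancy is in the theorem's formulation rather than in your argument.
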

\begin{proof}
The claim follows from a direct application of Theorem \ref{thm:hk_asymptotic_general} together with Remark \ref{remark_trunc}. The assumptions are met by Theorem \ref{thm:zPoly}.
\end{proof}

\begin{remark}
Let us remark that one could actually extend Theorem \ref{thm:zPoly} to operators, the eigenvalues and multiplicites of which can be written as
\begin{align*}
\lambda_n = a n^{\gamma_0} \prod_{i =1}^{\fa} (n^{\gamma_i}+\alpha_i), && M_n = \sum_{j=0}^{\fb} b_j n^{\beta_j},
\end{align*}
with $\gamma_i > 0$ for  $i =1,\ldots,\fa$, $\sum_{i =0}^{\fa} \gamma_i > 0$ and $b_j > 0$ for at least one $j \in \{0, \ldots, \fb\}$. Then, instead of \eqref{zeta_r}, one would need to seek for a meromorphic continuation to $\C^r$ of the functions
\begin{align*}
\wt{\zeta}_r\left( s_1, \ldots, s_r; \gamma_0, \ldots, \gamma_r; \alpha_1,\ldots,\alpha_r \right) \vc \sum_{n=0}^{\infty} n^{-s_0 \gamma_0} (n^{\gamma_1}+\alpha_1)^{-s_1} \cdots (n^{\gamma_r}+\alpha_r)^{-s_r}.
\end{align*}
The latter could again be accomplished with the help of the Mellin-Barnes formula as in \cite{MatsumotoWeng}. It is plausible that $\zP$ functions obtained in this way will only have first order isolated poles and some bound similar to \eqref{zeta_poly_bound} can be established. On the other hand, the regularity of $\zP$ at $-\N$ is not to be expected in general.

We have chosen to formulate Theorem \ref{thm:zPoly} for operators of polynomial spectrum, as considering more general ones described above would add to the complexity without being strongly motivated. In fact, we were not able to find any reasonable geometric example where the operator falls into this larger class, but not the one of operators of polynomial spectrum.
\end{remark}

Generically, the expansion \eqref{heat_poly} will only be an asymptotic one. There exist, however, operators for which the formula \eqref{heat_poly} will be exact or almost exact on $t \in (0,T)$ for some $0<T\leq +\infty$.

\subsubsection{Exact expansions}

In this subsection we consider a class of operators, which have an exact expansion of the associated heat traces. In particular, it will serve as an illustration for Theorem \ref{cor:hk_exact}.

\begin{Prop}\label{prop:poly_exact}
Let $P$ be an operator of polynomial spectrum with eigenvalues $\lambda_{n} = A(n)$ and multiplicities $M_n = B(n)$.
If $\fa = 1$, i.e. if $A(n) = a (n+\alpha)$ for some $a >0$, $\alpha \geq 0$, then the asymptotic expansion \eqref{heat_poly} is absolutely exact on $(0,2\pi/a)$.
\end{Prop}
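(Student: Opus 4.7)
The strategy is to apply Theorem \ref{cor:hk_exact} to obtain exactness and then invoke Proposition \ref{prop:abs} for absolute exactness. The special structure of operators with $\fa = 1$ allows to express $\zP$ in terms of Hurwitz zeta-functions, for which sharp vertical estimates are available through the Hurwitz functional equation.

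First, writing $B(n) = \sum_{j=0}^{\fb} \wt{b}_j (n+\alpha)^j$ and using $\lambda_n = a(n+\alpha)$, one gets for $\Re(s) > 1 + \fb$
\begin{equation*}
\zP(s) = a^{-s}\sum_{j=0}^{\fb} \wt{b}_j\, \zeta_H(s-j, \alpha),
\end{equation*}
where $\zeta_H$ is the Hurwitz zeta-function. Since each $\zeta_H(\cdot, \alpha)$ is meromorphic on $\C$ with a single simple pole at $1$, $\zP$ is meromorphic on $\C$ with (at most) simple poles at $\{1, 2, \ldots, 1+\fb\}$. I would then select a sequence $R_n = n - \fb - \tfrac{3}{2}$ (half-integers, linear in $n$) so that $-R_0 > L = 1+\fb$, the $R_n$ strictly increase to $+\infty$, and each vertical line $\Re(s) = -R_n$ avoids the pole set $\PP_{\mathcal{Z}}(\C) \subset \{1, \ldots, 1+\fb\} \cup (-\N)$.

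The key technical step is the estimate of $|\mathcal{Z}(-R_n + iy)|$. After a suitable reduction of $\alpha$ modulo $1$ (which only adds finitely many entire contributions $(a(k+\{\alpha\}))^{-s}$ whose inverse Mellin transforms are ordinary exponentials absorbed into a truncation, cf. Proposition \ref{prop:truncated} and Remark \ref{remark_trunc}), Hurwitz's formula
\begin{equation*}
\zeta_H(s-j,\{\alpha\}) = \frac{2\Gamma(1-s+j)}{(2\pi)^{1-s+j}}\!\left[\sin\!\tfrac{\pi(s-j)}{2}\sum_{k=1}^\infty\frac{\cos(2\pi k\{\alpha\})}{k^{1-s+j}} + \cos\!\tfrac{\pi(s-j)}{2}\sum_{k=1}^\infty\frac{\sin(2\pi k\{\alpha\})}{k^{1-s+j}}\right],
\end{equation*}
valid for $\Re(s) < j$, combines with the reflection identity $\Gamma(s)\Gamma(1-s+j) = \frac{\pi}{\sin(\pi s)}\prod_{k=0}^{j-1}(1-s+k)$ to give a product whose $|y|$-dependence is driven by $|\pi/\sin(\pi s)| \sim 2\pi\, e^{-\pi|y|}$ for $s = -R_n + iy$. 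Balanced against the growth $|\sin(\pi(s-j)/2)|, |\cos(\pi(s-j)/2)| \sim \tfrac12 e^{\pi|y|/2}$ of the trigonometric prefactors and the boundedness of the Dirichlet series $\sum_k k^{s-j-1}(\cos/\sin)(2\pi k\{\alpha\})$ on $\Re(s) \leq 0$, this yields uniformly in $n$
\begin{equation*}
|\mathcal{Z}(-R_n + iy)| \leq C \left(\frac{a}{2\pi}\right)^{R_n} P(|y|)\, e^{-\pi|y|/2},
\end{equation*}
with $C$ and polynomial $P$ (of degree at most $\fb$) independent of $n$.

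Finally, absorbing the polynomial $P(|y|)$ into the exponential at the cost of a factor $\pi/2 - \epsilon$ (any $\epsilon \in (0, \pi/2)$), one verifies the assumption \eqref{exact} of Theorem \ref{cor:hk_exact} with $\epsilon_n = \pi/2 - \epsilon$ (constant in $n$) and $C_n = C'(a/(2\pi))^{R_n}$. This gives
\begin{equation*}
T = \left(\limsup_{n\to\infty}\sqrt[R_n]{\frac{C'(a/(2\pi))^{R_n}}{\pi/2 - \epsilon}}\right)^{\!-1} = \frac{2\pi}{a}.
\end{equation*}
Since $R_n$ grows linearly, $\log n = \oinf(R_n)$, so Proposition \ref{prop:abs} promotes exactness on $(0, 2\pi/a)$ to absolute exactness on the same interval. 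The main obstacle is Step 3: the simultaneous exponential factors $e^{-\pi|y|}$ from $\Gamma$-reflection and $e^{\pi|y|/2}$ from the trigonometric prefactors must be tracked carefully to recover the correct effective decay rate $\pi/2$ together with the precise base $(a/(2\pi))^{R_n}$ that sets $T = 2\pi/a$.
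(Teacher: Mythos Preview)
Your approach is correct and reaches the same conclusion $T = 2\pi/a$, but it differs from the paper's route. The paper first treats only the simplest case $a=1$, $\alpha=1$, $B\equiv 1$, so that $\zP=\zeta$ is the Riemann zeta-function; there the Riemann functional equation gives the vertical estimate directly, and Theorem~\ref{cor:hk_exact} plus Proposition~\ref{prop:abs} yield absolute exactness on $(0,2\pi)$. For the general $A(n)=a(n+\alpha)$ and arbitrary polynomial $B$, the paper does \emph{not} repeat the contour estimates via Hurwitz's formula as you do; instead it observes the elementary identity
\[
\heat_P(t) = e^{-ta\alpha}\sum_{j=0}^{\fb} b_j\,\frac{d^j}{dt^j}\heat_{P_0}(at),
\]
and since $\heat_{P_0}$ has an absolutely exact (hence analytic) expansion on $(0,2\pi)$, so does any finite combination of its derivatives rescaled by $t\mapsto at$. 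This bypasses the Hurwitz functional equation entirely.

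One small inaccuracy in your write-up: the polynomial $P(|y|)$ arising from $\prod_{k=1}^{j}(k-s)$ at $s=-R_n+iy$ is \emph{not} independent of $n$; its coefficients are polynomials in $R_n$. This does not affect the outcome, since any polynomial factor in $R_n$ is killed under $\sqrt[R_n]{\,\cdot\,}$ and you still get $\limsup \sqrt[R_n]{C_n/\epsilon_n}=a/(2\pi)$, but the claim ``$C$ and $P$ independent of $n$'' should be weakened to ``polynomial in $R_n$''. With that correction your argument goes through. Your approach has the merit of being uniform (no separate base case), while the paper's differentiation trick is shorter and avoids tracking the extra $\Gamma$-product and Hurwitz series altogether.
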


\begin{proof}
To start, let us consider an operator $P_0$ with $a = 1$, $\alpha = 1$ and $B(n) = 1$. We can calculate directly:
\begin{align*}
\zP(s)&=\sum_{n=1}^\infty n^{-s}=\zeta(s),&&\Re(s)>1.
\end{align*}

The Riemann zeta-function $\zeta$ extends meromorphically to $\C$ with a single simple pole at 1 whereas $\Gamma$ has simple poles at non-positive integer numbers. The values of the residues read
\begin{equation*}
\Rez{s=1}\zeta(s)=1,\quad\quad\Rez{s=-n}\Gamma(s)=\tfrac{(-1)^n}{n!}\quad\forall n\in\N.
\end{equation*}
Moreover, the values of $\zeta$ at nonpositive integers are given by
\begin{align*}
\zeta(-n)=-\tfrac{B_{n+1}}{n+1},\quad\forall n\in\N,
\end{align*}
where $B_n$ denote the Bernoulli numbers with the convention $B_1=+1/2$. As a consequence, $\zeta$ vanishes at negative even integers, so the function $\ZZ(s)=\Gamma(s)\zeta(s)$ is in fact regular at these points, and the set of its poles is $\PP_\ZZ(\C)=\{1, 0, -1, -3, -5, \dotsc\}$. Theorem \ref{thm:heat_poly} therefore yields
\begin{align}
\label{Riemann_zeta_expansion}
\heat_{P_0}(t) \tzero t^{-1}-\tfrac12+\sum_{n=0}^\infty\tfrac{B_{2n+2}}{(2n+2)!}t^{2n+1}.
\end{align}

To check if this expansion is exact, by Theorem \ref{cor:hk_exact}, we need to find an explicit bound of the form \eqref{exact} for a suitable sequence $(R_n)$ tending monotonically to $+\infty$. We can, for instance, choose
\begin{align}\label{Rn_poly_exact}
R_0 = -\tfrac{3}{2}, && R_1 = -\oh, && R_2 = \oh, && R_n = 2(n-2), \text{ for } n \geq 3,
\end{align}
as $\ZZ$ is regular at negative even integers. Now, recall the Riemann functional equation \cite[Formula (23.2.6)]{Abram}:
\begin{equation*}
\zeta(s)=2^s\pi^{s-1}\sin\left(\tfrac{\pi s}2\right)\Gamma(1-s)\zeta(1-s).
\end{equation*}
By changing $s$ to $(1-s)$ we get
\begin{equation*}
\ZZ(s)=\Gamma(s)\zeta(s)=\frac{2^{s-1}\pi^s\zeta(1-s)}{\sin\big(\pi(1-s)/2\big)}.
\end{equation*}
For $s=-R_n+iy$ with $n \geq 3$ the denominator of the above expression equals
\begin{equation*}
\sin\big(\pi(\tfrac12+n-i\tfrac y2)\big)=
\cos\big(\pi(n-i\tfrac y2)\big)=
(-1)^n\cos\big(-i\tfrac{\pi y}2)\big)=
(-1)^n\cosh\left(\tfrac{\pi y}2\right).
\end{equation*}
Thus, we have
\begin{equation*}
\abs{\ZZ(-R_n+iy)}=\frac{2^{-2n-1}\pi^{-2n}\abs{\zeta(2n+1-iy)}}{\cosh\left(\tfrac{\pi y}2\right)}.
\end{equation*}
For any $y\in\R$, $2\cosh(y)>e^{\abs{y}}$ and if $x>1$ then
\begin{equation*}
\abs{\zeta(x+iy)}\leq\sum_{k=1}^\infty \abs{k^{-x-iy}}=\sum_{k=1}^\infty k^{-x}=\zeta(x).
\end{equation*}
Hence,
\begin{equation*}
\abs{\ZZ(-2n+iy)}\leq(2\pi)^{-2n}\zeta(2n+1)e^{-\tfrac{\pi\abs{y}}2},
\end{equation*}
what means that $\ZZ$ satisfies the assumptions of Theorem \ref{thm:hk_asymptotic_general} with $C_n\vc(2\pi)^{-2n}\zeta(2n+1)$ and a constant $\epsilon_n\vc\pi/2$. Now,
\begin{equation*}
\limsup_{n\to\infty}\sqrt[R_n]{C_n/\epsilon_n}=\tfrac1{2\pi}<\infty,
\end{equation*}
as $\zeta(x)\to1$ for $x\to\infty$. Thus, on the strength of Theorem \ref{cor:hk_exact}, we see that the expansion \eqref{Riemann_zeta_expansion} is exact on $(0,2\pi)$. Moreover, with the choice \eqref{Rn_poly_exact} of the sequence $(R_n)$, Proposition \ref{prop:abs} applies and the expansion \eqref{Riemann_zeta_expansion} is in fact absolutely exact on $(0,2\pi)$.

Let us now turn to the general case of an operator $P$ with $A(n) = a (n+\alpha)$ and $B(n) = \sum_{j=0}^{\fb} b_j n^j$. For any $t>0$ we have
\begin{align*}
\heat_P(t) = \sum_{n=0}^{\infty} \sum_{j=0}^{\fb} b_j n^j e^{-t a (n+\alpha)} = e^{-t a \alpha} \, \sum_{j=0}^{\fb} b_j \sum_{n=0}^{\infty} n^j e^{-t a n} = e^{-t a \alpha} \, \sum_{j=0}^{\fb} b_j \frac{d^j}{d t^j} \heat_{P_0}(a t).
\end{align*}
Since $\R^+ \ni t \mapsto \heat_{P_0}(t)$ is an analytic function with an absolutely exact expansion for $t \in (0,2 \pi)$, we conclude that the expansion of $\heat_P$ is absolutely exact for $(0,2\pi/a)$.
\end{proof}

\begin{remark}
Since the eigenvalues of the operator at hand grow linearly, the heat trace 
\begin{align*}
\heat_P(t) = \sum_{n=0}^{\infty} B(n) e^{-t A(n)}
\end{align*}
associated with $P$ can be summed explicitly -- for instance, $\heat_{P_0}(t) = \tfrac{1}{2} \left( \coth \tfrac{t}{2} - 1 \right)$ for any $t>0$. Hence, $\heat_P$, as a linear combination of derivatives of $\heat_{P_0}$, is actually a complex analytic function of $t$ around 0. In particular, it admits a Laurent expansion around $t=0$ with the radius of convergence precisely equal to $2\pi/a$ (recall that $\coth\tfrac t2$ is singular at $t=2\pi i$).
\end{remark}

We conclude that in this particular subclass the asymptotic expansion of $\heat_P(t)$ as $t \downarrow 0$ obtained from the Mellin transform is equal to the Laurent expansion of $\heat_P$ and the value of abscissa of convergence of the former precisely coincides with the value of radius of convergence of the latter. In particular, it shows that with the estimates adopted in the statement of Theorem \ref{cor:hk_exact} one can obtain the \emph{maximal} region of convergence of the expansion of a heat trace.

Proposition \ref{prop:poly_exact} has a direct geometrical application:
\begin{Cor}\label{cor:S1_absD}
Let $\DD$ be the Dirac operator acting on a spinor bundle of $S^d$ --- the $d$-dimensional sphere with round metric. Then, the asymptotic expansion of the heat trace associated with $\abs{\DD}$ converges to $\heat_{\abs{\DD}}$ on $(0,2\pi)$.
\end{Cor}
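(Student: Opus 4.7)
The plan is to recognise that $\abs{\DD}$ on the round sphere $S^d$ falls into the class of \emph{operators of polynomial spectrum} studied in Section \ref{sec:polynomial}, and then invoke Proposition \ref{prop:poly_exact}. The classical computation of the spectrum of the round Dirac operator \cite{SphereDirac,Trautman} gives eigenvalues $\pm(n+d/2)$ for $n\in\N$, each appearing with multiplicity
\[
m_n = 2^{\lfloor d/2\rfloor} \binom{n+d-1}{d-1},
\]
which is a polynomial in $n$ of degree $d-1$. Taking absolute values collapses the two sign branches, so $\abs{\DD}$ has eigenvalues $\lambda_n = A(n)$ with $A(X)\vc X+\tfrac{d}{2}$ and multiplicities $M_n = B(n) \vc 2 m_n$, a polynomial in $n$.

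The next step is to verify the hypotheses of Proposition \ref{prop:poly_exact}. The polynomial $A$ has degree one, its unique root $-d/2$ lies outside $\N$, and $A(n) = 1 \cdot (n+\alpha)$ with $\alpha = d/2 \geq 1/2 > 0$, which is exactly the normal form $A(n) = a(n+\alpha)$ with $a=1$ used in the statement of Proposition \ref{prop:poly_exact}. Hence $\abs{\DD}$ is an operator of polynomial spectrum with $\fa = 1$, and Proposition \ref{prop:poly_exact} delivers absolute exactness of the expansion \eqref{heat_poly} on $(0, 2\pi/a) = (0, 2\pi)$, which is precisely the claim of the corollary.

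The only non-routine step is to recall the Dirac spectrum in the polynomial form required by Section \ref{sec:polynomial}; once this is done, the corollary is an immediate specialisation of Proposition \ref{prop:poly_exact}, and no further estimates on $\zeta_{\abs{\DD}}$ are needed. What makes the bound $2\pi$ sharp and independent of $d$ is precisely the fact that $\fa = 1$ with leading coefficient one: the dimension $d$ enters only through $\alpha$ and the multiplicity polynomial $B$, neither of which affects the radius of convergence in Proposition \ref{prop:poly_exact}.
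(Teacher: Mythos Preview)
Your argument is essentially the paper's own proof: identify $\abs{\DD}$ as an operator of polynomial spectrum with $A(n)=n+\tfrac{d}{2}$ and $B(n)=2^{\lfloor d/2\rfloor+1}\binom{n+d-1}{d-1}$, then apply Proposition \ref{prop:poly_exact} with $a=1$, $\alpha=d/2$.

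There is one small omission. The eigenvalue formula $\pm(n+d/2)$ you quote covers the unique spin structure on $S^d$ for $d\geq 2$ and the \emph{non-trivial} spin structure on $S^1$. For the \emph{trivial} spin structure on $S^1$ the spectrum of $\DD$ is $\{\pm n:n\in\N\}$, so $\abs{\DD}$ has a non-trivial kernel ($\lambda_0=0$) and is not invertible; in particular it does not fit the standing hypotheses of the paper nor the form $A(n)=a(n+\alpha)$ with $\alpha>0$. The paper treats this case separately: one removes the zero mode and observes that the truncated heat trace equals $2\,\heat_{P_0}$, where $P_0$ is the model operator from the proof of Proposition \ref{prop:poly_exact}; Remark \ref{remark_trunc} then justifies applying the result to the truncated zeta-function. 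Adding a sentence to cover this exceptional case would make your proof complete.
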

\begin{proof}
Recall first that on $S^1$ there are two possible spin structures, whereas for $S^d$ with $d \geq 2$ there is only one available \cite{BarSpin}. In the case of the trivial spin structure on $S^1$ one has 
\begin{align*}
\lambda_{\pm n}(\DD) &= \pm n, & M_{\pm n} (\DD) &= 1,&&\text{for }n\in\N,\\
\intertext{hence}
\lambda_n(\abs{\DD}) &= n, & M_n(\abs{\DD}) &= 2 - \delta_{n,0},&&\text{for }n\in\N.
\end{align*}
So, for $S^1$ with the trivial spin structure we have $\heat_{\abs{\DD}} = 2 \heat_{P_0} + 1$, where $P_0$ is as in the proof of Proposition \ref{prop:poly_exact}. It is a special case of truncated zeta-function use with $N=1$ that has to be used because of the nontrivial kernel of $\DD$. The truncated heat trace \eqref{truncated_htr} is equal to $2 \heat_{P_0}$.

In the case of the non-trivial spin structure, the spectrum of the Dirac operator $\DD$ fits into the general pattern for $S^d$  \cite{BarSpin,SphereDirac,Trautman}:
\begin{align}\label{eigenvalues_Sd}
\lambda_{\pm n}(\DD)&=\pm(n+\tfrac d2), & M_{\pm n}(\DD)&=2^{\floor{\tfrac d2}} \tbinom{n+d-1}{d-1},&&\text{for }n\in\N.\\
\intertext{Hence, the operator $\abs{\DD}$ with}
\lambda_n(\abs{\DD})&= n+\tfrac d2, & M_n(\abs{\DD})&=2^{\floor{\tfrac d2}+1} \tbinom{n+d-1}{d-1},&&\text{for }n\in\N\notag,
\end{align}
meets the conditions of Proposition \ref{prop:poly_exact} with $a = 1$, $\alpha = \tfrac d2$ and $B(n) = M_n$ and the assertion follows.
\end{proof}


\subsubsection{Almost exact expansions}

In this subsection we turn to the case of almost exact (but not exact) expansions of heat traces associated with operators of polynomial spectrum.

\begin{Prop}\label{prop:poly_almost}
Let $P$ be an operator with eigenvalues $\lambda_{n} = A(n)$ and multiplicities $M_n = B(n)$ for $n \in \N$. If $A(n) = a (n+\alpha)^{2k} + \beta$ for some $a \in \R^+$, $\alpha \in \tfrac {1}{2} \N$, $\beta \in \R$, $k \in \Np$ and $B(n) = \sum_{j=0}^{(\fb)/2} \wt{b}_j (n+\alpha)^{2j}$ with $\fb$ even, then the asymptotic expansion \eqref{heat_poly} is almost exact for all $t>0$, but not exact.
\end{Prop}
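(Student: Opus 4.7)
The plan is to verify the hypotheses of Theorem \ref{thm:hk_asymptotic_general} via Theorem \ref{thm:zPoly}, show that the resulting residue series converges for all $t>0$ (almost-exactness), and then identify the remainder via Poisson summation in order to conclude $F_\infty\not\equiv 0$ (non-exactness). For the almost-exactness I would work first in the case $\beta = 0$. The Hurwitz-zeta representation
\[
\zP(s) \;=\; a^{-s}\sum_{j=0}^{\fb/2}\wt{b}_j\,\zeta_H(2ks-2j,\alpha)
\]
shows that $\zP$ has only $\fb/2+1$ simple poles, located at $s=(2j+1)/(2k)$. Using $\zeta_H(-m,\alpha) = -B_{m+1}(\alpha)/(m+1)$, the shift rule $\zeta_H(s,\alpha+1) = \zeta_H(s,\alpha)-\alpha^{-s}$ (which reduces $\alpha\in\tfrac12\N$ to the representatives $0,\tfrac12,1$ modulo $\Z$), and the classical identity $B_{2m+1}(\tfrac12) = 0$, I would derive a geometric bound $|\zP(-n)| \leq D^n$ for a constant $D$ depending on $a,k,\alpha$ and the $\wt{b}_j$. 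Consequently the residue contributions at the poles of $\Gamma$, namely $r_{-n}(t) = \zP(-n)(-1)^n t^n/n!$, satisfy $|r_{-n}(t)| \leq D^n t^n/n!$, so the full residue series converges absolutely and locally uniformly on $(0,\infty)$ (compare Proposition \ref{prop:abs}). The general $\beta$ case then reduces to this by factoring $e^{-t\beta}$ out of $\heat_P$; multiplication by this entire non-vanishing function preserves absolute convergence.

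For the non-exactness I would invoke Poisson summation. The evenness of $\mathsf{A}(m)\vc am^{2k}$ and $\mathsf{B}(m)\vc\sum_j\wt{b}_j m^{2j}$ in the variable $m = n+\alpha$ lets me rewrite $\heat_P(t)$, modulo a finite explicit sum of exponentials, as $\tfrac12\sum_{m\in L_\alpha}\mathsf{B}(m)e^{-t(\mathsf{A}(m)+\beta)}$ with $L_\alpha\in\{\Z,\Z+\tfrac12\}$ according to the parity of $2\alpha$. Setting $f(x) = \mathsf{B}(x)e^{-t(\mathsf{A}(x)+\beta)}$, Poisson summation yields $\sum_{m\in L_\alpha}f(m) = \sum_{n\in\Z}(\pm 1)^n\hat f(n)$. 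A direct computation of $\hat f(0)$ via the substitution $u = tax^{2k}$ reproduces the residues of $\Gamma\zP$ at the poles of $\zP$, while the Taylor expansion of the finite exponential correction matches $\sum_n r_{-n}(t)$ by the uniqueness of asymptotic expansions (Definition \ref{DefAsymp}). The remainder is therefore
\[
F_\infty(t) \;=\; \tfrac12\sum_{n\neq 0}(\pm 1)^n\hat f(n).
\]
The rescaling $x = y t^{-1/(2k)}$ expresses $\hat f(n)$ as a finite linear combination of terms of the form $e^{-t\beta}t^{-(2j+1)/(2k)}\hat g_j(n t^{-1/(2k)})$ with $g_j$ fixed Schwartz functions; the Schwartz decay of the $\hat g_j$ yields $F_\infty(t) = \oz(t^R)$ for every $R > 0$, while the standard saddle-point asymptotics $\hat g_0(\xi)\sim c\,\xi^\gamma\exp(-\kappa|\xi|^{2k/(2k-1)})$ as $|\xi|\to\infty$ guarantee that $\hat f(\pm 1)\neq 0$ for $t$ sufficiently small. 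Hence $F_\infty\not\equiv 0$.

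The main obstacle is the identification step matching Poisson's decomposition with the residue series from Theorem \ref{thm:hk_asymptotic_general}. In particular, reconciling the Taylor expansion of the finite exponential correction with $\sum_n\zP(-n)(-1)^n t^n/n!$ requires careful bookkeeping of the classical link between the Bernoulli polynomial values $B_{m+1}(\alpha)$ and the finite power sums produced by the Hurwitz-zeta shift. This identification is nevertheless forced by the uniqueness of asymptotic expansions with respect to the common scale $\{t^{-(2j+1)/(2k)},\,t^n\}_{n\in\N,\,0\leq j\leq\fb/2}$, so it does not require a separate combinatorial verification.
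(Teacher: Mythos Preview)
Your proposal is correct, but it takes a substantially longer route than the paper's proof and misses a simplification that you in fact have in hand.

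The paper's argument runs as follows. After reducing $\beta$ to $0$ and shifting the summation index so that $\alpha$ is replaced by its fractional part $\wt\alpha\in\{0,\tfrac12\}$ (which changes $\heat_P$ only by a finite sum of exponentials, regular at $t=0$), the Hurwitz representation gives
\[
\zeta_{P''}(-n)\;=\;-\,a^{n}\sum_{j}\wt b_j\,\frac{B_{2kn+2j+1}(\wt\alpha)}{2kn+2j+1}.
\]
Since $B_{2m+1}(0)=B_{2m+1}(\tfrac12)=0$ for $m\geq 1$, this is not merely geometrically bounded --- it \emph{vanishes} for every $n\in\Np$. Hence $\PP_\ZZ(\C)$ is finite, and Proposition~\ref{prop:finite} delivers both conclusions at once: the residue sum is finite (so trivially convergent for all $t>0$), and it cannot equal $\heat_P$ because the latter decays like $e^{-t\lambda_0}$ as $t\to\infty$ while the former is a finite combination of powers $t^{-(2j+1)/(2k)}$. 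No Poisson summation, no saddle-point analysis.

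Your approach --- keeping a general $\alpha$, bounding $|\zP(-n)|\leq D^n$, and then identifying $F_\infty$ explicitly via Poisson summation and stationary phase --- is valid and has the merit of producing an explicit remainder (this is exactly what the paper does afterwards in the special case of Example~\ref{ex:theta}, and what formula~\eqref{gen_Jacobi} records). But for the proposition as stated it is considerable overkill. Two minor points on your version: first, $\hat f(\pm1)\neq0$ for small $t$ does not by itself give $F_\infty\neq0$; you need the $|n|\geq 2$ contributions to be dominated, which follows from the same saddle-point estimate applied at $\xi=nt^{-1/(2k)}$ but should be stated. Second, you already invoke the shift rule and $B_{2m+1}(\tfrac12)=0$; pushing this one step further to $\zeta_{P''}(-n)=0$ would let you bypass the entire Poisson argument and invoke Proposition~\ref{prop:finite} directly, as the paper does.
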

\begin{proof}
At first let us note that with $P = P' + \beta$ we have
\begin{align*}
\heat_P(t) = e^{-t \beta} \heat_{P'}(t).
\end{align*}
Thus, it is sufficient to consider the case $\beta =0$.

Moreover, we can assume that $\alpha = \tfrac{1}{2}$ or $\alpha = 0$, and $\beta=0$. Indeed, let us denote by $\wt{\alpha} = \alpha - \floor{\alpha}$ the fractional part of $\alpha$, then we have
\begin{align*}
\heat_{P'}(t) & = \sum_{n=0}^{\infty} B(n) e^{-t A(n)} = \sum_{j=0}^{(\fb)/2} \wt{b}_j \, \sum_{n=0}^{\infty} (n+\alpha)^{2j} e^{-t a (n+\alpha)^{2k}} = \\
& = \sum_{j=0}^{(\fb)/2} \wt{b}_j \, \sum_{n=0}^{\infty} (n+ \floor{\alpha} + \wt{\alpha})^{2j} e^{-t a (n+ \floor{\alpha} + \wt{\alpha})^{2k}} =\\
& = \sum_{j=0}^{(\fb)/2} \wt{b}_j \, \sum_{n=\floor{\alpha}}^{\infty} (n+ \wt{\alpha})^{2j} e^{-t a (n+ \wt{\alpha})^{2k}} =\\
& = \sum_{j=0}^{(\fb)/2} \wt{b}_j \, \sum_{n=0}^{\infty} (n+ \wt{\alpha})^{2j} e^{-t a (n+ \wt{\alpha})^{2k}} 
- \sum_{j=0}^{(\fb)/2} \wt{b}_j \, \sum_{n=0}^{\floor{\alpha} - 1} (n+ \wt{\alpha})^{2j} e^{-t a (n+ \wt{\alpha})^{2k}}
\end{align*}
and the second term is regular at $t = 0$.

Let us denote by $P''$ the operator with $\lambda_n(P'') = a (n+\wt{\alpha})^{2k}$ and $M_n(P'') = M_n(P)$. For $\Re(s) > (1+\fb)/(\fa)$ we have 
\begin{align*}
\zeta_{P''}(s) = \sum_{n=0}^{\infty} \sum_{j=0}^{(\fb)/2} \wt{b}_j a^{-s} (n+\wt{\alpha})^{2j - 2ks} = a^{-s} \sum_{j=0}^{(\fb)/2} \wt{b}_j \zeta_H(2ks - 2j,\wt{\alpha}),
\end{align*}
where $\zeta_H$ is the Hurwitz zeta-function. This formula provides a meromorphic extension of $\zeta_{P''}$ to the whole complex plane. Since the Hurwitz zeta-function has a single simple pole at 1 for any $\wt{\alpha} \in \C\setminus - \N$, we have
$\PP_{\zeta_{P''}}(\C) = \tfrac{1}{2k} \{1,3,\ldots,\fb+1\}$. What is more, for $s = -n$ with $n \in \N$ we have
\begin{align*}
\zeta_{P''}(-n) = a^n \sum_{j=0}^{(\fb)/2} \wt{b}_j \zeta_H(-2kn - 2j,\wt{\alpha}) = - a^n \sum_{j=0}^{(\fb)/2} \wt{b}_j \frac{B_{2kn+2j+1}(\wt{\alpha})}{2kn+2j+1},
\end{align*}
where $B_n$ denotes the Bernoulli polynomial of degree $n$. But if $\wt{\alpha} = 0$ or $\wt{\alpha} = \tfrac{1}{2}$ then $B_{2n+1}(\wt{\alpha}) = 0$ \cite[Section 23]{Abram}, hence for any $n \in \Np$ $\zeta_{P''}(-n) = 0$. So if $\ZZ(s) = \Gamma(s) \zeta_{P''}(s)$ as usually, we have $\PP_\ZZ(\C) = \tfrac{1}{2k} \{0,1,3,\ldots,\fb+1\}$. Thus, on the strength of Proposition \ref{prop:finite} and formula \eqref{heat_poly} we finally conclude that for all $t>0$
\begin{multline}\label{gen_Jacobi}
\heat_P(t)  =  e^{-t \beta} \, \sum_{j=0}^{(\fb)/2} \wt{b}_j \, \Gamma \left( \tfrac{1+2j}{2k} \right) (at)^{- \tfrac{1+2j}{2k}} - e^{-t \beta} \wt{b}_0 \, \left( \wt{\alpha} - \tfrac{1}{2} \right) +\\
- e^{-t \beta} \, \sum_{j=0}^{(\fb)/2} \wt{b}_j \, \sum_{n=0}^{\floor{\alpha} - 1} (n+ \wt{\alpha})^{2j} e^{-t a (n+ \wt{\alpha})^{2k}} + F_{\infty}(t),
\end{multline}
where $F_{\infty}(t) \neq 0$ has a null Taylor expansion at $t=0$.
\end{proof}

In some cases, one could be able to compute the remainder function $F_{\infty}$ explicitly.
\begin{example}\label{ex:theta}
Let $P$ be an operator with $\lambda_n(P) = n^2$ and $M_n = 1$ for $n \in \N$, then
\begin{align*}
\heat_{P}(t) = \sum_{n=0}^{\infty} e^{-t n^2} = \frac12(\theta_3(0;e^{-t})-1),
\end{align*}
with $\theta_3$ being the Jacobi theta function defined as 
\begin{equation*}
\theta_3(z;q)\vc\sum_{n=-\infty}^\infty q^{n^2}e^{2niz}.
\end{equation*}
For such an operator one can compute explicitly (see \cite{Elizalde89} and also \cite{PhD}):
\begin{align*}
F_{\infty}(t) = \sqrt{\tfrac\pi t}\tfrac12\left(-1+\theta_3\left(0;e^{-\pi^2/t} \right)\right).
\end{align*}
Hence, formula \eqref{gen_Jacobi} turns out to be the famous Jacobi identity \cite[(3.13)]{Elizalde89}. This example has been analysed in details in \cite{Elizalde89} and has direct applications to quantum physics.
\end{example}

Therefore, one can regard the formula \eqref{gen_Jacobi} as a generalisation of the Jacobi identity for special functions defined by $\heat_P(t)$ with $P$ as in Proposition \ref{prop:poly_almost}.

Proposition \ref{prop:poly_almost} again has a direct geometrical application.
\begin{Cor}\label{cor:S1_D2}
Let $\DD$ be the Dirac operator acting on a spinor bundle of $S^d$ -- the $d$-dimensional sphere with round metric. If $d$ is odd, then the asymptotic expansion of the heat trace associated with $\DD^{2k}$ with $k\in\Np$ is almost exact for all $t > 0$, but not exact.
\end{Cor}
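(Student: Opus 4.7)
The plan is to reduce the statement to Proposition \ref{prop:poly_almost} by exhibiting $\DD^{2k}$ on $S^d$, with $d$ odd, as an operator of polynomial spectrum of exactly the required form. From \eqref{eigenvalues_Sd}, for $d\geq 2$ the operator $\DD$ has eigenvalues $\pm(n+\tfrac d2)$, each with multiplicity $2^{\floor{d/2}}\binom{n+d-1}{d-1}$, $n\in\N$; taking $2k$-th powers identifies the two signs, so $\DD^{2k}$ has eigenvalues $A(n)=(n+\tfrac d2)^{2k}$ and multiplicities $M_n=2^{\floor{d/2}+1}\binom{n+d-1}{d-1}$. This fits the template $A(n)=a(n+\alpha)^{2k}+\beta$ of Proposition \ref{prop:poly_almost} with $a=1$, $\alpha=\tfrac d2$, $\beta=0$; since $d$ is odd, $\alpha\in\tfrac12+\N\subset\tfrac12\N$ has fractional part $\wt\alpha=\tfrac12$, exactly matching the case $\wt\alpha=\tfrac12$ singled out in that proposition.

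The main step I would then carry out is to verify that $M_n$ is an even polynomial in the shifted variable $u\vc n+\tfrac d2$ of even degree $\fb=d-1$. The natural way is a symmetric pairing: writing $d=2m+1$, the rising product $(n+1)(n+2)\cdots(n+2m)$ groups as $(n+j)(n+2m+1-j)$ for $j=1,\ldots,m$, and substituting $n=u-m-\tfrac12$ each such pair collapses to
\begin{align*}
\bigl(u+j-m-\tfrac12\bigr)\bigl(u+m-j+\tfrac12\bigr)=u^2-\bigl(j-m-\tfrac12\bigr)^2.
\end{align*}
Hence $M_n$ is, up to a positive constant, a product of $m=(d-1)/2$ factors of the form $u^2-c_j^2$, i.e.\ $M_n=\sum_{j=0}^{(d-1)/2}\wt b_j\,(n+\tfrac d2)^{2j}$ with $\fb=d-1$ even, as required.

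Once this structural fact is in hand, Proposition \ref{prop:poly_almost} applies verbatim and yields the announced almost exact but non-exact expansion of $\heat_{\DD^{2k}}$ on all of $\R^+$. For $d=1$ with the trivial spin structure the zero mode of $\DD$ forces a small detour: writing $\heat_{\DD^{2k}}(t)=1+2\heat_{P_0^{2k}}(t)$ with $P_0$ the operator considered in the proof of Proposition \ref{prop:poly_exact}, the same analysis applies via the truncated heat trace \eqref{truncated_htr}, and the constant term is absorbed into $F_\infty$. The only genuinely non-routine ingredient is the pairing identity for the binomial multiplicity; everything else is bookkeeping to match the notation of Proposition \ref{prop:poly_almost}.
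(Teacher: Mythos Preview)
Your proposal is correct and follows essentially the same route as the paper: both reduce to Proposition \ref{prop:poly_almost} by pairing the factors of the binomial multiplicity symmetrically about $u=n+\tfrac{d}{2}=\lambda_n(\abs{\DD})$ to exhibit $M_n$ as a polynomial in $u^2$, and both handle the trivial spin structure on $S^1$ via truncation. One small wording issue: for $d=1$ the kernel contribution ``$1$'' is not absorbed into $F_\infty$ but rather appears as the $e^{-t\lambda_0}$ term in the truncated-heat-trace decomposition \eqref{truncated_htr}; the non-exactness still comes from the nonzero $F_\infty$ of $\trunc{\heat}$.
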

\begin{proof}
To apply Proposition \ref{prop:poly_almost} we need to show that $M_n(\DD^{2k})$ can be written as a polynomial in $\lambda_n(\DD^2)$. For trivial spin structure on $S^1$ it is obvious, however we need to resort again to the truncated zeta function, as in Corollary \ref{cor:S1_absD}. 

In all other cases, by formula \eqref{eigenvalues_Sd} we have $\lambda_n(\DD^{2k}) = (n + \tfrac{d}{2})^{2k}$. Note that
\begin{multline}\label{Mn_spheres}
M_n(\DD^{2k}) = M_n(|\DD|)\\
= 2^{\floor{\tfrac d2}+1} (\lambda_n(\abs{\DD})+\tfrac d2 -1)(\lambda_n(\abs{\DD})+\tfrac d2 -2)\dotsm(\lambda_n(\abs{\DD})-\tfrac d2 +2)(\lambda_n(\abs{\DD})-\tfrac d2 +1).
\end{multline}
So if $d$ is odd, then
\begin{align*}
M_n(\DD^{2k}) & = 2^{\floor{\tfrac d2}+1} \bigg(\lambda_n(\abs{\DD})^2-(\tfrac d2 -1)^2\bigg)\bigg(\lambda_n(\abs{\DD})^2-(\tfrac d2 -2)^2\bigg)\dotsm\bigg(\lambda_n(\abs{\DD})^2-(\tfrac 12)^2\bigg).
\end{align*}
Hence, $M_n(\DD^{2k})$ can indeed be written as a polynomial in $\lambda_n(\abs{\DD})^2 = \lambda_n(\DD^2)$.
\end{proof}

The comparison of Corollaries \ref{cor:S1_absD} and \ref{cor:S1_D2} shows a sharp contrast between the heat traces associated with the operators $\abs{\DD}$ and $\DD^2$ for \textit{odd-dimensional} spheres. For the even dimensional ones the discrepancy is even more dramatic, as we shall now see.

\subsubsection{Divergent expansions}\label{divergent}

For a generic operator of polynomial spectrum with eigenvalues growing at least quadratically one expects the associated heat trace expansion to be only an asymptotic one. In particular, we have the following result:

\begin{Prop}
Let $\DD$ be the Dirac operator acting on a spinor bundle of $S^{d}$ -- the $d$-dimensional sphere with round metric. If $d$ is even, then the asymptotic expansion of the heat trace associated with $\DD^2$ is only asymptotic.
\end{Prop}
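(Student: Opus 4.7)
The idea is to identify, inside the asymptotic expansion produced by Theorem \ref{thm:hk_asymptotic_general}, the infinite sub-series of residues at $s=-n$, $n\in\N$, and to show that its terms grow factorially, so that the series diverges for every $t>0$. The first step is to set up $\zeta_{\DD^2}$ explicitly. Using \eqref{eigenvalues_Sd}, for even $d$ one has $\lambda_n(\DD^2)=(n+d/2)^2$ with multiplicity $M_n=2^{d/2+1}\binom{n+d-1}{d-1}$. In the variable $m=n+d/2$, the multiplicity becomes
\[
P(m)=\binom{m+d/2-1}{d-1}=\tfrac{1}{(d-1)!}\,m\prod_{k=1}^{d/2-1}(m^2-k^2),
\]
which, because $d$ is even, is an \emph{odd} polynomial of degree $d-1$ in $m$; write $P(m)=\sum_{j=0}^{d/2-1}c_j m^{2j+1}$, with leading coefficient $c_{d/2-1}=1/(d-1)!\neq 0$. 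Peeling off the terms $m=1,\ldots,d/2-1$ (which contribute an entire function $h$) and rewriting the remaining sum in terms of the Riemann zeta gives
\[
\zeta_{\DD^2}(s)\;=\;2^{d/2+1}\sum_{j=0}^{d/2-1}c_j\,\zeta(2s-2j-1)\;+\;h(s).
\]
This yields the required meromorphic continuation: $\ZZ=\Gamma\,\zeta_{\DD^2}$ has simple poles at $s\in\{1,\ldots,d/2\}$ (from $\zeta$) together with simple poles at every $s=-n$, $n\in\N$ (from $\Gamma$, provided $\zeta_{\DD^2}(-n)\neq 0$).

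Next, I would estimate $\zeta_{\DD^2}(-n)$ for large $n$. Using $\zeta(-2\ell-1)=-B_{2\ell+2}/(2\ell+2)$ together with $|B_{2\ell}|\sim 2(2\ell)!/(2\pi)^{2\ell}$, the term with $j=d/2-1$ dominates:
\[
\bigl|c_{d/2-1}\,\zeta(-2n-d+1)\bigr|\;\sim\;\frac{2\,(2n+d-1)!}{(d-1)!\,(2\pi)^{2n+d}}.
\]
The sub-leading $j$'s produce strictly smaller factorials, while $h(-n)$ grows at most like $(d/2-1)^{2n}$, i.e.\ merely exponentially in $n$, hence is negligible against the factorial. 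One concludes $|\zeta_{\DD^2}(-n)|\sim C(2n+d-1)!/(2\pi)^{2n+d}$ for some $C>0$, and in particular the poles of $\ZZ$ at $s=-n$ are genuine for every large $n$.

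Finally, the residue at $s=-n$ reads $r_{-n}(t)=\tfrac{(-1)^n}{n!}\,\zeta_{\DD^2}(-n)\,t^n$. Stirling yields $(2n+d-1)!/n!\sim n!\,4^n/\sqrt{\pi n}$, so
\[
|r_{-n}(t)|^{1/n}\;\sim\;\frac{t\,n}{e\,\pi^{2}}\;\xrightarrow[n\to\infty]{}\;+\infty
\]
for every fixed $t>0$. Since the terms $r_{-n}(t)$ fail to tend to $0$, no grouping into finite blocks over the vertical strips of Theorem \ref{thm:hk_asymptotic_general} can salvage convergence, so the expansion of $\heat_{\DD^2}$ is nowhere almost exact and is therefore only asymptotic. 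The delicate step is the growth estimate: one must rule out that the alternating sum $\sum_j c_j\,\zeta(-2n-2j-1)$, together with the polynomial correction $h(-n)$, could secretly cancel the dominant factorial growth. The comparison of three scales -- factorial, exponential, polynomial -- closes this gap, but hinges on $c_{d/2-1}=1/(d-1)!\neq 0$, which is transparent from the product formula for $P(m)$.
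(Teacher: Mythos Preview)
Your proposal is correct and follows essentially the same route as the paper: express $\zeta_{\DD^2}$ as a finite linear combination of shifted Riemann zeta-functions and then show that the residues of $\Gamma\,\zeta_{\DD^2}$ at the negative integers grow factorially, so the tail $\sum_p d_p t^p$ of the expansion \eqref{heat_poly} diverges for every $t>0$. The only substantive difference lies in how you rule out cancellation in the finite sum $\sum_{j} c_j\,\zeta(-2n-2j-1)$. The paper does this by a \emph{sign analysis}: it checks that $\sign c_j=(-1)^{d/2-1-j}$ and $\sign \zeta(-2n-2j-1)=(-1)^{n+j+1}$, so every summand carries the common sign $(-1)^{d/2+n}$, independent of $j$; hence a single divergent summand forces divergence of the whole. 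You instead use a \emph{dominant-term comparison}: the factorial $(2n+2j+1)!$ is strictly largest for $j=d/2-1$, so the top-degree term swamps the rest asymptotically. Both arguments work; yours is arguably the more robust one, since it would survive perturbations of the $c_j$, while the sign argument is tied to the precise combinatorics of $\binom{m+d/2-1}{d-1}$.

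One simplification you overlooked: your entire correction $h$ is identically zero. Indeed
\[
h(s)=-2^{d/2+1}\sum_{m=1}^{d/2-1}P(m)\,m^{-2s},\qquad P(m)=\tbinom{m+d/2-1}{d-1},
\]
and $P(m)=0$ for $m=1,\ldots,d/2-1$ because the upper entry of the binomial is then smaller than $d-1$. The paper proves this explicitly (phrased as $B(n-\tfrac d2)=0$), which is why its formula for $\zeta_{\DD^2}$ has no extra entire piece. Your bound $|h(-n)|=\Oinf\bigl((d/2-1)^{2n}\bigr)$ is of course correct, just not sharp, and your argument goes through unchanged either way.
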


\begin{proof}
If $d$ is even, formula \eqref{Mn_spheres} yields
\begin{align*}
M_n(\DD^{2}) & = 2^{\floor{\tfrac d2}+1} \bigg(\lambda_n(\abs{\DD})^2-(\tfrac d2 -1)^2\bigg)\bigg(\lambda_n(\abs{\DD})^2 -(\tfrac d2 -2)^2\bigg)\dotsm\bigg(\lambda_n(\abs{\DD})^2-1^2\bigg)\lambda_n(\abs{\DD}) \\
& \cv \sum_{m=0}^{d-1} c_m \lambda_n(\abs{\DD})^m,
\end{align*}
with $c_m = 0$ for $m$ even. Moreover, 
\begin{align}
\label{sign_c}
\sign c_{d-1-2q}=(-1)^q, \quad\forall q=0,\dotsc,\tfrac d2-1.
\end{align}

The zeta-function associated with the operator $P = \DD^{2}$ on $S^d$ reads
\begin{align*}
\zP(s) = \sum_{n=0}^{\infty} M_n(\DD^{k}) \lambda_n(|\DD|)^{-2s} = \sum_{m=0}^{d-1} c_m \sum_{n=0}^{\infty} (n+\tfrac{d}{2})^{-2s+m} = \sum_{m=0}^{d-1} c_m \zeta_H(2s - m, \tfrac{d}{2}),
\end{align*}
where $\zeta_H$ is the Hurwitz zeta-function. It turns out, that the latter can actually be replaced by the Riemann zeta-function. Indeed, let us note that
\begin{align*}
\zeta_H(s,\tfrac{d}{2}) = \sum_{n=0}^{\infty} (n+\tfrac{d}{2})^{-s} = \sum_{n=1}^{\infty} n^{-s} - \sum_{n=1}^{d/2-1} n^{-s} = \zeta(s) - F_d(s),
\end{align*}
with $F_d(s) \vc \sum_{n=1}^{d/2-1} n^{-s}$ for $d \geq 4$ and $F_2(s) = 0$. Hence,
\begin{align*}
\zP(s) & = \sum_{m=0}^{d-1} c_m \zeta(2s - m) - \sum_{m=0}^{d-1} c_m F_d(2s - m).
\end{align*}
Let us investigate the second term on the RHS of the above equality. For any $s \in \C$,
\begin{align*}
\sum_{m=0}^{d-1} c_m F_d(2s - m) = \sum_{n=1}^{d/2-1} n^{-2s} \sum_{m=0}^{d-1} c_m n^{m} = \sum_{n=1}^{d/2-1} n^{-2s} B(n-\tfrac{d}{2}),
\end{align*}
where $B$ is the polynomial defining the multiplicites of $\DD^{2}$, i.e. $M_n(\DD^{2}) = B(n)$ for all $n \in \N$. But, since $B(n) = 2^{\floor{\tfrac d2}+1} \tbinom{n+d-1}{d-1}$ by \eqref{eigenvalues_Sd} and $d$ is even, we have $B(n-\tfrac{d}{2}) = 0$ for every $n = 1, \ldots, \tfrac{d}{2} - 1$. Thus,
\begin{align*}
\zP(s) = \sum_{m=0}^{d-1} c_m \zeta(2s - m).
\end{align*}

Now, Theorem \ref{thm:heat_poly} yields the following asymptotic expansion:
\begin{align}\label{heat_S2}
\heat_{P}(t) & \tzero \sum_{k=0}^{\infty} \Rez{s= (d-k)/2} \Gamma(s) \zP(s) t^{(k-d)/2} \\
&= \sum_{k=0}^{d-1}  \Gamma(\tfrac{d-k}{2}) c_{d-k-1} t^{(k-d)/2} + \sum_{p=0}^{\infty} \tfrac{(-1)^p}{p!} \sum_{m=0}^{d-1} c_m \zeta(-2p - m) t^p. \notag
\end{align}

Let us investigate the convergence of the series
\begin{align}\label{series_p}
\sum_{p=0}^{\infty} d_p t^p, \quad \text{ with } \quad d_p = \tfrac{(-1)^p}{p!} \sum_{m=0}^{d-1} c_m \zeta(-2p - m).
\end{align}
Recall that \cite[(23.2.15)]{Abram}
\begin{align*}
\zeta(-2p-m) = -\frac{B_{2p+m+1}}{2p+m+1},
\end{align*}
where $B_n$ are the Bernoulli numbers.

Since $c_m = 0$ for even $m$, $2p+m+1$ is always even. Moreover, we have
\begin{align*}
\sign B_{2n}=(-1)^{n+1}
\end{align*}
and by \eqref{sign_c}
\begin{equation*}
\sign c_m=(-1)^{\tfrac{d-1-m}2},
\end{equation*}
thus
\begin{equation*}
\sign c_m \,B_{2p+m+1}=(-1)^{\tfrac{d-1-m}2}(-1)^{\tfrac{2p+m+1}2+1}=(-1)^{\tfrac d2+p+1}.
\end{equation*}
Therefore,
\begin{align*}
\sign d_p = (-1)^{\tfrac d2}
\end{align*}
so all of the terms in series \eqref{series_p} over $p$ has the same signs. Moreover, since $\sign c_m \,B_{2p+m+1}$ does not depend on $m$ it is sufficient to study the behaviour of $c_m \zeta(-2p - m) / (p!)$ as $p$ grows, for a fixed $m$.

We have \cite[(23.1.15)]{Abram}
\begin{align*}
(-1)^{n+1} B_{2n} > \frac{2 (2n)!}{(2 \pi)^n},
\end{align*}
so that
\begin{align*}
c_m \frac{\zeta(-2p - m)}{p!} = c_m \frac{B_{2p+m+1}}{p! (2p+m+1)} > \frac{2\abs{c_m}}{(2\pi)^{(m+1)/2}} \, \frac{(2p+m)!}{(2 \pi)^p \, p!} \xrightarrow[p \to \infty]{} \infty.
\end{align*}

Hence, the asymptotic series in \eqref{heat_S2} diverges for any $t>0$.
\end{proof}

The lesson from the example of spheres is that if one is interested in the convergence properties of heat trace expansion one should work with $\abs{\DD}$, which is of the first order, rather than with $\DD^2$, despite the fact that the former is not a differential operator, but only a pseudodifferential one.

\subsection{Operators of exponential spectrum}\label{sec:exp}

In this section we consider the class of \emph{operators of exponential spectrum}, i.e. the ones with the spectrum $\sigma(P) = (q^{-n})_{n \in \N}$ for some $0 < q <1$. We shall also assume that the multiplicities are given by a polynomial. This type of operators appears naturally in the context of quantum groups \cite{SeniorKaad,JapanHeat} and their homogeneous spaces \cite{dab_sit,PodlesSA,NeshTuset} (see also \cite{Kaneko} and \cite[Example 12]{Flajolet}), as well as in the framework of fractal spaces \cite{Christensen2,Christensen1,Cipriani,Guido1,Guido2}.

\begin{Prop}
Let $P$ be an operator with $\lambda_n = q^{-n}$ for some $0<q<1$ and $M_n = p(n)$ for some polynomial $p$ of degree $m$. Then, the asymptotic expansion of the heat trace $\heat_P$ is absolutely exact for all $t>0$ and can be expressed as
\begin{align}\label{heat_exp}
\heat_P(t) = \tfrac{\wt{p}(1)}{(m+1)!} \, (-\log_q t)^{m+1} + \sum_{j=0}^m \left( r_j + G_j(\log_q t) \right) (\log_q t)^{j} + \sum_{n=1}^{\infty} \tfrac{\wt{p}(q^{-n})}{(1-q^{-n})^{m+1}} \, \tfrac{(-t)^n}{n!}.
\end{align}
$\log_q t = \tfrac{\log t}{\log q}$, $\wt{p}$ is a polynomial of degree $m$, $r_i$ are constants (with respect to $t$) and $G_i$ are Fourier series completely determined by $q$ and the polynomial $p$. 
\end{Prop}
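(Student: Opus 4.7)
The plan is to apply Theorem \ref{thm:hk_asymptotic_general} followed by Theorem \ref{cor:hk_exact} (plus Proposition \ref{prop:abs} for absolute convergence), after exhibiting an explicit meromorphic extension of $\zP$. Proposition \ref{cor:zeta_abscissa} gives $L = 0$, since $M_n = p(n)$ grows polynomially while $\lambda_n = q^{-n}$ grows exponentially. For $\Re(s) > 0$, decomposing $p$ in the monomial basis and using the identity $\sum_{n \ge 0} n^k x^n = A_k(x)/(1-x)^{k+1}$ (with Eulerian polynomials $A_k$, satisfying $A_k(1) = k!$ for $k \ge 1$) one obtains
\[
\zP(s) = \sum_{n=0}^\infty p(n)\,q^{ns} = \frac{\wt{p}(q^s)}{(1 - q^s)^{m+1}},
\]
where $\wt{p}$ is a polynomial of degree $m$ with $\wt{p}(1) = m!\,c_m \ne 0$, $c_m$ being the leading coefficient of $p$. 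This extends $\zP$ meromorphically to $\C$ with poles precisely at $s_k \vc 2\pi i k/\log q$, $k \in \Z$, each of order $m+1$. Consequently $\ZZ = \Gamma\,\zP$ has a pole of order $m+2$ at $s = 0$, poles of order $m+1$ at each $s_k$ with $k \ne 0$ (all on the imaginary axis), and simple poles at each $s = -n$, $n \in \Np$.

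Next I would choose $R_0 = -\tfrac12$ and $R_n = n - \tfrac12$ for $n \ge 1$, so the vertical lines $\Re(s) = -R_n$ avoid all the poles, with horizontal cut-offs $y_k^{(1)}$ taken midway between consecutive $s_k$ on the imaginary axis and $y_k^{(n)} = k$ for $n \ge 2$. Stirling's formula yields $|\Gamma(-R_n + iy)| = O(|y|^{-R_n - 1/2} e^{-\pi|y|/2})$ as $|y| \to \infty$, while the explicit form furnishes $|\zP(-R_n + iy)| \le C\,q^{-mR_n}/(q^{-R_n} - 1)^{m+1} = O(q^{R_n})$ uniformly in $y$, because $|q^s| = q^{-R_n}$ stays bounded away from the unit circle. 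Combining them, $|\ZZ(-R_n + iy)| \le C_n\, e^{-\pi|y|/2}$ with $C_n$ decaying super-exponentially in $n$; in particular $\limsup_{n\to\infty}\sqrt[R_n]{C_n/\epsilon_n} = 0$, so Theorem \ref{cor:hk_exact} gives $T = \infty$. Since $R_n$ grows faster than $\log n$, Proposition \ref{prop:abs} upgrades this to an absolutely exact expansion on $(0,\infty)$, and the same bound verifies \eqref{hypo_absolute} of Proposition \ref{rem:absolute_conv} for the horizontal sums.

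The residues can then be computed strip by strip. The simple pole at $s = -n$ contributes $\Rez{s = -n}\bigl(\Gamma(s)\zP(s)\,t^{-s}\bigr) = \tfrac{\wt{p}(q^{-n})}{(1-q^{-n})^{m+1}}\,\tfrac{(-t)^n}{n!}$, giving the rightmost series in \eqref{heat_exp}. At $s = 0$, I would combine the Laurent expansion of $\ZZ$ (whose most singular term is $\wt{p}(1)/[(-\log q)^{m+1}\,s^{m+2}]$) with the Taylor series $t^{-s} = \sum_{j \ge 0}\tfrac{(-\log t)^j}{j!}\,s^j$; extracting the coefficient of $s^{-1}$ produces a polynomial in $\log t$ of degree $m+1$ whose top-degree term is $\wt{p}(1)(-\log_q t)^{m+1}/(m+1)!$, the remaining terms producing the constants $r_j$. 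For each $k \ne 0$, the analogous Laurent expansion at $s_k$ yields a residue of the form $t^{-s_k}\,Q_k(\log t)$ with $Q_k$ a polynomial of degree $\le m$; since $t^{-s_k} = e^{-2\pi i k \log_q t}$, pairing $\pm k$ and summing over $k \ge 1$ produces real-valued trigonometric series in $\log_q t$, which after grouping by the power of $(\log_q t)^j$ become the announced $G_j$.

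The main obstacle is this last item: justifying the absolute convergence of $\sum_{k \ne 0} r_{s_k}(t)$ locally uniformly in $t$ and identifying the rearrangement by degree as Fourier series. The absolute convergence follows from Proposition \ref{rem:absolute_conv} thanks to the horizontal estimate established above, so the remaining work is the algebraic rearrangement of the $k$-sum into coefficients of powers of $\log_q t$; these coefficients are automatically Fourier series in $\log_q t$ since the exponent $-s_k \log t$ depends on $t$ only through $\log_q t$, and their coefficients are determined by the Laurent expansion of $\ZZ$ at $s_k$, i.e., by $q$ and $p$ alone.
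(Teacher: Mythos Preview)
Your proposal is correct and follows essentially the same route as the paper: the same closed form $\zP(s)=\wt{p}(q^s)/(1-q^s)^{m+1}$ via Eulerian polynomials, the same pole structure for $\ZZ$, the same choice of half-integer $R_n$ and of $y_k$ midway between the imaginary-axis poles, and the same application of Theorem \ref{cor:hk_exact} with $\epsilon_n=\pi/2$ and super-exponentially decaying $C_n$, followed by Proposition \ref{prop:abs}. The only substantive differences are that the paper obtains the uniform vertical bound on $\Gamma$ from an explicit Euler-reflection inequality (their formula \eqref{gamma_vert}) rather than Stirling, and that it dispatches the explicit form \eqref{heat_exp} in one line (``direct calculation of the residues'') whereas you spell out the Laurent-expansion bookkeeping and invoke Proposition \ref{rem:absolute_conv} for the rearrangement of the $k$-sum.
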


\begin{proof}
We start with the analysis of the zeta-function. For $\Re(s) > 0$ we have
\begin{align*}
\zP(s) = \sum_{n=0}^{\infty} p(n) q^{ns} = \sum_{j=0}^{m} p_j  \sum_{n=0}^{\infty} n^j q^{ns} \cv \frac{\wt{p}(q^s)}{(1-q^s)^{m+1}}.
\end{align*}
The polynomial $\wt{p}$ is completely determined by $p$ via the formula
\begin{align}\label{Eulerian}
\sum_{n=0}^{\infty} n^j q^{ns} = \Li_{-j}(q^s) = \frac{1}{(1-q^{s})^{j+1}} \sum_{k=0}^{j-1} {j \bangle k}  q^{s(j-k)},
\end{align}
where $\Li$ is the polylogarithm (Jonqui\`ere's) function \cite{Jonquiere} and ${j \bangle k}$ stand for Eulerian numbers of the first kind \cite{Comtet}. From the formula \eqref{Eulerian} we also deduce that $\wt{p}(1) = p_m\cdot m! \neq 0$. Indeed,
\begin{align*}
\wt{p}(q^s) = \sum_{j=0}^{m} p_j (1-q^s)^{m-j} \sum_{k=0}^{j-1} {j \bangle k}  q^{s(j-k)}
\end{align*}
and, using the summation formula for Eulerian numbers \cite[p. 242]{Comtet},
\begin{align*}
\wt{p}(1) = p_m \sum_{k=0}^{m-1} {m \bangle k} = p_m\cdot m!.
\end{align*}

Therefore, we conclude that the function $\ZZ : s \mapsto \Gamma(s) \zP(s)$ has a meromorphic extension to the whole complex plane with:
\begin{itemize}
\item first order poles at $s \in -\Np$,
\item $m+1$ order poles at $s \in \tfrac{2 \pi i}{\log q} \Z^*$,
\item $m+2$ order pole at $s=0$.
\end{itemize}

To see that the assumptions of Theorem \ref{thm:hk_asymptotic_general} are met let us choose $R_n = n + \tfrac{1}{2}$ and $y_k^{(n)} = \tfrac{2 \pi}{\log q} (k + \tfrac{1}{2})$. Let us also denote by $\hat{p}$ the polynomial $\hat{p}(x) = \sum_{i=0}^m \vert \wt{p}_i \vert x^i$.

On the horizontal lines of the contour integration we have
\begin{align*}
\abs{\zP(x + i y_k^{(n)})} 
= \frac{\abs{\wt{p} \left(q^{x + i y_k^{(n)}} \right)}}{\abs{1-q^{x+i y_k^{(n)}}}^{m+1}} 
= \frac{\abs{\wt{p} \left(q^{x + i y_k^{(n)}} \right)}}{(1+q^x)^{m+1}}
\leq \frac{\hat{p}(q^x)}{(1+q^x)^{m+1}}
\end{align*}
and $\Gamma$ decays exponentially on verticals \cite[(2.1.21)]{Paris}, hence assumption \ref{assum4:general} of Theorem \ref{thm:hk_asymptotic_general} is fulfilled. Similarly, on the vertical lines of integration we have
\begin{align}\label{q_zeta_V}
\abs{\zP(-R_n + i y)} = \frac{\abs{\wt{p}(q^{-R_n + i y})}}{\abs{1-q^{-R_n + i y}}^{m+1}} \leq \frac{\hat{p}(q^{-R_n})}{\abs{1-q^{-R_n}}^{m+1}}.
\end{align}

To show that we have an exact expansion of $\heat_P$ valid for any $t>0$ we need to estimate the Gamma function more precisely. The Euler reflection formula \cite[(2.1.20)]{Paris} together with the estimate (see \cite[Formula (43)]{PodlesSA})
\begin{align*}
|\Gamma(x+ i y)|^{-1} \leq (2\pi)^{-1/2} \,e^{x+ |y| \,|\arg(x+iy)|}\, (x^2 + y^2)^{-x/2 + 1/4}, \quad \text{ for } x>0, y\in\R
\end{align*}
gives
\begin{align}
\abs{\Gamma(-R_n + iy)} & \leq \sqrt{\tfrac{\pi}{2}} \, \frac{1}{\cosh(\pi y)} \, e^{1+R_n+\abs{y} \abs{1+R-iy}} \left( (1+R_n)^2 + y^2 \right)^{-R_n/2-1/4} \notag \\
& \leq \sqrt{\tfrac{\pi}{2}} \, e^{1+R_n} R_n^{-R_n-1/2} \, \frac{1}{\cosh(\pi y)} e^{\abs{y} \abs{1+R-iy}} \notag \\
& \leq \sqrt{2 \pi} \, e^{1+R_n} R_n^{-R_n-1/2} \, \frac{e^{\tfrac{\pi}{2} \abs{y}}}{e^{\pi y} + e^{-\pi y}} \notag \\
& \leq \sqrt{2 \pi} \, e^{1+R_n} R_n^{-R_n-1/2} \, e^{-\tfrac{\pi}{2} \abs{y}}. \label{gamma_vert}
\end{align}
The above result together with \eqref{q_zeta_V} yields the following constants in assumption \ref{assum3:general} of Theorem \ref{thm:hk_asymptotic_general}:
\begin{align*}
\epsilon_n = \tfrac{\pi}{2}, && C_n = \sqrt{2 \pi} e^{3/2} \, \tfrac{e^n \, \hat{p}(q^{-n-1/2})}{\abs{1-q^{-n-1/2}}} \left( n + \tfrac{1}{2} \right)^{-n}.
\end{align*}

Therefore, Theorem \ref{cor:hk_exact} applies and 
\begin{align*}
\limsup_{n\to\infty}\sqrt[R_n]{C_n/\epsilon_n} = 0, &&\text{for any } 0 < q <1,
\end{align*}
yielding $T = \infty$. Moreover, since $R_n = n+ \tfrac{1}{2}$, Proposition \ref{prop:abs} applies and the expansion is absolutely exact.

The formula \eqref{heat_exp} results from a direct calculation of the residues.
\end{proof}


This result is somewhat surprising at the first sight. Clearly, $\heat_P(t)$ decays exponentially as $t$ grows to infinity, whereas the RHS of \eqref{heat_exp} seems to diverge to infinity because of the $\log t$ terms. However, it turns out that the sum over $n$ in the RHS of \eqref{heat_exp} compensates for large $t$ the divergent terms $(\log t)^j$, the constant terms and the oscillatory part to yield an exponential decay (compare \cite[Example 12]{Flajolet} and \cite[Section 4.1]{PodlesSA}).

It is interesting to note that if $P$ is an operator of exponential spectrum, then so is $P^r$ for any $r \in \R_+$ (compare also \cite[Section 4.4]{PodlesSA}). Moreover, given the exact expansion of the heat trace \eqref{heat_exp}, one immediately obtains that for $\heat_{P^r}$ simply by changing $q \leadsto q^{r}$.

\section{Outlook}

We start the concluding section of this paper with an exploration of the limitations of our general theorems. Having in mind the exactness result of the previous section one could naively expect that the faster the eigenvalues of a positive operator grow, the better the convergence properties of the associated heat trace expansion are. However, as we shall show below the exponential growth of eigenvalues establishes in fact a limit of applicability of the inverse Mellin transform.

\begin{Prop}\label{prop:lacunary}
Let $P$ be such that $M_n = \OO_{\infty}(n^b)$ for some $b \in \R^+$. If
\begin{align}\label{lacunary}
\lim_{n \to \infty} \frac{\lambda_{n+1}}{\lambda_n} = + \infty,
\end{align}
then the function $\zP$ is holomorphic for $\Re(s) > 0$, but does not admit a meromorphic continuation through $\Re(s) = 0$.
\end{Prop}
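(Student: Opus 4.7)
The plan is to establish the two assertions separately. For the first (holomorphy on $\Re(s) > 0$), I would use Proposition~\ref{cor:zeta_abscissa} and compute the abscissa of convergence of the Dirichlet series $\sum_n M_n \lambda_n^{-s}$. The hypothesis $M_n = \OO_\infty(n^b)$ yields $N(\lambda_n) = \OO_\infty(n^{b+1})$. Next, condition \eqref{lacunary} gives $\log(\lambda_{n+1}/\lambda_n) \to \infty$, so the telescoping identity $\log \lambda_n = \log\lambda_0 + \sum_{k=0}^{n-1}\log(\lambda_{k+1}/\lambda_k)$ together with a Ces\`aro averaging argument yields $\log \lambda_n /n \to \infty$, and \emph{a fortiori} $\log\lambda_n/\log n \to \infty$. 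Plugging into Proposition~\ref{cor:zeta_abscissa},
\[
L \;=\; \limsup_{n\to\infty} \frac{\log N(\lambda_n)}{\log \lambda_n} \;\leq\; \limsup_{n\to\infty} \frac{(b+1)\log n + O(1)}{\log \lambda_n} \;=\; 0.
\]
Since Theorem~\ref{thm:abscissa} forces $L\geq 0$, we get $L=0$, and $\zP$ is holomorphic on $\Re(s) > 0$.

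For the second assertion, I would view $\zP$ as a general Dirichlet series of the form $\sum_n M_n e^{-s\,b_n}$ with real frequencies $b_n := \log\lambda_n$. The lacunary condition \eqref{lacunary} translates to $b_{n+1} - b_n \to \infty$, which again by Ces\`aro averaging implies $b_n/n\to\infty$. This is exactly the gap hypothesis required by the Fabry--P\'olya-type gap theorem for general Dirichlet series (see, e.g., Mandelbrojt, \emph{S\'eries de Dirichlet: Principes et m\'ethodes}, or Hardy--Riesz, \emph{The General Theory of Dirichlet's Series}): a Dirichlet series with finite abscissa of convergence $\sigma_c$ whose frequencies obey $b_n/n\to\infty$ has the line $\Re(s)=\sigma_c$ as a natural boundary. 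In our setting $\sigma_c = 0$, so no point of the imaginary axis is a regular point of $\zP$; in particular $\zP$ admits no meromorphic continuation through any neighbourhood of any point on $\Re(s) = 0$, since a pole would require analyticity on a punctured disc.

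The main obstacle is the precise invocation of the gap theorem in the non-integer-exponent setting. The classical Fabry gap theorem is stated for lacunary power series $\sum a_k z^{n_k}$ with \emph{integer} $n_k$ satisfying $n_k/k\to\infty$, concluding that $|z|=1$ is a natural boundary. Its extension to real-frequency Dirichlet series is standard but non-trivial. If one wishes to avoid citing Mandelbrojt, a self-contained argument can be structured as follows: set $k_n := \lfloor b_n \rfloor$ and associate with $\zP$ the lacunary power series $f(w) := \sum_n M_n\,\lambda_n^{-\{b_n\}/\log\lambda_n}\, w^{k_n}$, or more simply compare $\zP$ along vertical strips with the boundary behaviour of $F(w) = \sum_n M_n w^{k_n}$ (to which the integer-exponent Fabry theorem applies with unit disc as natural boundary), and then transfer the conclusion to $\zP$ via the substitution $w = e^{-s}$. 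The bookkeeping of this transfer -- in particular verifying that a hypothetical regular point of $\zP$ on $\Re(s)=0$ would induce a regular point of $F$ on $|w|=1$ -- is the most delicate part of the argument.
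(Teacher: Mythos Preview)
Your proposal is correct and follows essentially the same route as the paper: establish $L=0$ from the growth conditions, then invoke the Fabry--P\'olya/Mandelbrojt gap theorem for Dirichlet series with frequencies $b_n=\log\lambda_n$ satisfying $b_n/n\to\infty$ to conclude that $\Re(s)=0$ is a natural boundary. The paper's own proof is in fact terser than yours---it simply cites Mandelbrojt and related references without spelling out the Ces\`aro argument for $b_n/n\to\infty$ or the abscissa computation---so your added detail is welcome; your proposed integer-exponent workaround via $k_n=\lfloor b_n\rfloor$ is unnecessary once one cites the Dirichlet-series version directly, as the paper does.
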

\begin{proof}
The general Dirichlet series defining the zeta-function associated with $P$ reads
\begin{align}\label{zeta_lacunary}
\zP(s) = \sum_{n = 0}^{\infty} M_n \lambda_n^{-s} = \sum_{n = 0}^{\infty} M_n e^{-s \log \lambda_n}.
\end{align}

We note that \eqref{lacunary} implies that $\lambda_n = \Oinf(e^{g(n)})$ with $g(n)/n \to +\infty$. Moreover, the assumption on the power-like growth of multiplicites assures that $\zP(s)$ is convergent for $\Re(s) \geq 0$ -- see \cite{LacunaryDirichlet2,LacunaryDirichlet} and references therein.

Under the assumption \eqref{lacunary}, the series \eqref{zeta_lacunary} is a \emph{lacunary Dirichlet series} \cite{LacunaryDirichlet}:
\begin{align*}
\zP(s) = \sum_{n = 0}^{\infty} a_n z^{\mu_n},
\end{align*}
with $z = e^{-s}$, $\mu_n = \log \lambda_n$ and $a_n = M_n$. Then, classical results \cite{Mandelbrojt} (see also \cite{Fabry}, \cite[Theorem 1]{Lacunary}, \cite{LacunaryDirichlet} and \cite{LacunaryDirichlet2}) show that the vertical line $\Re(s) = 0$ is a natural boundary of analyticity for $\zP(s)$. The latter means that the poles of $\zP$ are dense on the imaginary axis and therefore, $\zP$ cannot be extended to the left complex half-plane.
\end{proof}

We shall call the operators satisfying the assumptions of Proposition \ref{prop:lacunary} \emph{lacunary operators}. Proposition \ref{prop:lacunary} does not imply directly that $\heat_P(t)$ does not have an asymptotic expansion as $t \downarrow 0$ for lacunary operators -- in particular, both $\heat_P$ and $\zeta_P$ are well-defined. It just states that the technique of the inverse Mellin transform developed in this paper does not apply in this case. In particular, both $\heat_P$ and $\zeta_P$ are well-defined. For an a example of a lacunary operator in the realm of nonclassical pseudodifferential operators see \cite{Schrohe} \footnote{We thank Bruno Iochum for pointing out this reference to us.}.

Can anything be said about the small $t$ behaviour of heat traces associated with lacunary operators? The answer is positive and can be deduced from the following Tauberian Theorem due to Hardy and Littlewood (known also under the name of Karamata Theorem). 

\begin{theorem}[\cite{Feller} Theorem XIII.5.2 ]
\label{thm:Hardy-Littlewood}
Let $G: \R^+ \to \R$ be a function of bounded variation (see \cite[Chapter 1]{Widder}) and such that the following Riemann-Stieltjes integral
\begin{align*}
\omega(t) = \int_0^{\infty} e^{-t \lambda} dG(\lambda),
\end{align*}
converges for $t >0$. Then the following are equivalent:
\begin{alignat*}{2}
G(\lambda)&\approx\lambda^L F(\lambda),&\qquad&\text{as }\lambda\to\infty\\
\intertext{and}
\omega(t)&\approx t^{-L}\Gamma(L+1)F(t^{-1}),&&\text{as }t \downarrow 0,
\end{alignat*}
where $F$ is a slowly varying function, i.e $F(\tau x)/F(\tau)\to1$ as $\tau\to\infty$ for every $x>0$.
\end{theorem}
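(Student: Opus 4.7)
The plan is to prove the two implications separately, with the ``Abelian'' direction (from $G$ to $\omega$) being elementary and the ``Tauberian'' direction (from $\omega$ to $G$) requiring the classical Karamata argument based on Stone--Weierstrass.

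For the Abelian direction, I would integrate by parts in the Riemann--Stieltjes integral defining $\omega$ to get $\omega(t) = t \int_0^\infty e^{-t\lambda} G(\lambda)\, d\lambda$ (with boundary contributions absorbed by the hypothesis on $G$), and then substitute $u = t\lambda$ to obtain $\omega(t) = \int_0^\infty e^{-u} G(u/t)\, du$. Plugging in $G(\lambda) \approx \lambda^L F(\lambda)$ and invoking the uniform convergence theorem for slowly varying functions (which gives $F(u/t)/F(1/t) \to 1$ uniformly on compact $u$-intervals away from zero, supplemented by a Potter-type bound for dominated convergence) then yields $\omega(t) \sim t^{-L} F(1/t) \int_0^\infty u^L e^{-u}\, du = t^{-L}\,\Gamma(L+1)\, F(1/t)$.

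For the Tauberian direction, I first use the hypothesis on $\omega$ together with slow variation of $F$ to derive $\omega(t(1+s))/\omega(t) \to (1+s)^{-L}$ for every $s > 0$. Introducing the probability measures $d\nu_t(x) = \omega(t)^{-1} e^{-x}\, dG(x/t)$ on $[0,\infty)$, the above limit rewrites as $\int_0^\infty e^{-sx}\, d\nu_t(x) \to (1+s)^{-L}$, which is exactly the Laplace transform of the density $x^{L-1} e^{-x}/\Gamma(L)$. Here lies the heart of the argument: since the algebra generated by $\{x \mapsto e^{-sx} : s \geq 0\}$ is dense in $C_0([0,\infty])$ (via the substitution $u = e^{-x}$, which reduces density to the classical Weierstrass theorem on $[0,1]$), pointwise convergence of Laplace transforms upgrades to $\int f\, d\nu_t \to \Gamma(L)^{-1}\int_0^\infty f(x)\, x^{L-1} e^{-x}\, dx$ for every $f \in C_0([0,\infty])$.

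To conclude, I would apply this convergence to $f_y(x) = e^x \mathbf{1}_{[0,y]}(x)$, for which $\int f_y\, d\nu_t = G(y/t)/\omega(t)$. Although $f_y$ is discontinuous at $y$, it is bounded on its support, and the standard Karamata squeeze---sandwiching $f_y$ between continuous functions and exploiting monotonicity of $G$ (for a general bounded-variation $G$ one first decomposes into monotone components and argues separately)---gives $G(y/t)/\omega(t) \to y^L/\Gamma(L+1)$. Combined with the Abelian asymptotic for $\omega$ and the slow variation relation $F(\lambda/y)/F(\lambda)\to 1$ for fixed $y$, setting $\lambda = y/t$ recovers $G(\lambda) \approx \lambda^L F(\lambda)$. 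The main obstacle is precisely this final squeeze step: Stone--Weierstrass only delivers convergence against continuous test functions, so recovering sharp constants for the discontinuous indicator requires the additional monotonicity or bounded-variation structure of $G$ that makes one-sided continuous approximations pinch the limit from above and below.
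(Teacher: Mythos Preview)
The paper does not prove this theorem at all; it is quoted verbatim as Theorem~XIII.5.2 from Feller and used as a black box in Corollary~\ref{cor:tauberian} and the subsequent examples. So there is no ``paper's own proof'' against which to compare your attempt.

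That said, your outline is the standard Karamata proof and is essentially correct. The Abelian half is exactly as one would do it (integrate by parts, substitute $u=t\lambda$, and invoke the uniform convergence theorem for slowly varying functions together with Potter bounds to justify dominated convergence). The Tauberian half is the classical Karamata density argument: you correctly identify that the substitution $u=e^{-x}$ turns the exponential family into monomials on $[0,1]$, so Weierstrass approximation carries the convergence from Laplace transforms to all of $C_0$. Your candid remark about the ``squeeze'' step is on point: this is where the monotonicity (or bounded-variation decomposition) of $G$ is genuinely used, and it is the one place where a careless reader might think the argument is circular. One small caveat: writing the limit measure with density $x^{L-1}e^{-x}/\Gamma(L)$ tacitly assumes $L>0$; the boundary case $L=0$ (where the limit is a point mass at the origin) needs a separate one-line treatment, and you should also note that the hypothesis is really $L\geq 0$ throughout.
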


As a direct application of Theorem \ref{thm:Hardy-Littlewood} we obtain the following result:

\begin{Cor}
\label{cor:tauberian}
Let $P$ be a positive operator such that $\zP$ has a finite abscissa of convergence $L$. If
\begin{align*}
N(\lambda) \approx \lambda^L F(\lambda),
\end{align*}
with $F$ slowly varying, then
\begin{align*}
\heat_P(t) \approx \Gamma(L+1) t^{-L} F(t^{-1}), && \text{as } t \downarrow 0.
\end{align*}
\end{Cor}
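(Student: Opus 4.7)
The proof is essentially a direct invocation of the Hardy--Littlewood/Karamata Tauberian theorem (Theorem \ref{thm:Hardy-Littlewood}), so the plan is to set up the correspondence between the heat trace and a Riemann--Stieltjes integral associated with $N$, verify the theorem's prerequisites, and then read off the conclusion.

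First I would rewrite $\heat_P$ as a Stieltjes integral against the spectral growth function. Since $N$ is the non-decreasing step function defined in \eqref{spectral_growth} which has a jump of size $M_n$ at each eigenvalue $\lambda_n$, for every $t>0$
\begin{equation*}
\heat_P(t) = \sum_{n=0}^{\infty} M_n\, e^{-t\lambda_n} = \int_0^\infty e^{-t\lambda}\, dN(\lambda).
\end{equation*}
This matches the form $\omega(t) = \int_0^\infty e^{-t\lambda} dG(\lambda)$ appearing in Theorem \ref{thm:Hardy-Littlewood}, with $G = N$ and $\omega = \heat_P$.

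Next I would verify the hypotheses of Theorem \ref{thm:Hardy-Littlewood}. The function $N$ is monotone non-decreasing and finite on every bounded interval (only finitely many eigenvalues accumulate below any given threshold), hence it is locally of bounded variation on $\R^+$ in the sense of \cite{Widder}. The convergence of the Stieltjes integral for every $t>0$ follows from the assumption that $\zP$ has a finite abscissa of convergence: by Proposition \ref{cor:hk_order} this guarantees that $\heat_P$ is well-defined, i.e.\ the series above converges absolutely for every $t>0$.

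With the hypotheses in place, Theorem \ref{thm:Hardy-Littlewood} applies directly. The assumed asymptotic $N(\lambda) \approx \lambda^L F(\lambda)$ with $F$ slowly varying yields
\begin{equation*}
\heat_P(t) \approx t^{-L} \Gamma(L+1)\, F(t^{-1}), \quad \text{as } t \downarrow 0,
\end{equation*}
which is precisely the claim. There is no real obstacle here beyond checking that the Stieltjes reformulation is valid and that the hypothesis $L<\infty$ delivers convergence of the transform; the nontrivial analytic content is entirely packaged inside Theorem \ref{thm:Hardy-Littlewood}.
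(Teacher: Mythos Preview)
Your proof is correct and follows essentially the same route as the paper: rewrite $\heat_P$ as the Riemann--Stieltjes integral $\int_0^\infty e^{-t\lambda}\,dN(\lambda)$, note that $N$ is of bounded variation and that finiteness of the abscissa $L$ guarantees convergence of the integral, then apply Theorem~\ref{thm:Hardy-Littlewood} directly. The only cosmetic difference is that the paper invokes the well-definedness of $\heat_P$ without naming a specific proposition, whereas you cite Proposition~\ref{cor:hk_order}.
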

\begin{proof}
The function $N$, being a step function, is of bounded variation \cite[Chapter 1]{Widder}. Moreover, since $\zP$ has a finite abscissa of convergence, $\heat_P$ is well-defined and thus
\begin{align*}
\heat_P(t)= \sum_{n=0}^{\infty} M_n e^{-t \lambda_n} = \int_0^\infty e^{-t\lambda}dN(\lambda)
\end{align*}
for all $t>0$. Proposition \ref{cor:zeta_abscissa} implies that for large $\lambda$ we have $N(\lambda) \approx \lambda^{L} F(\lambda)$, with $F(\lambda) = \Oinf(\lambda^{\delta})$ for every $\delta >0$. If moreover, $F$ is slowly varying, then Theorem \ref{thm:Hardy-Littlewood} applies and
the conclusion follows.
\end{proof}

Corollary \ref{cor:tauberian} applies also to lacunary operators and one can use it to detect the leading behaviour of $\heat_P$ as $t \downarrow 0$.

\begin{example}
Let us consider a lacunary operator $P$ with $\lambda_n = e^{n^2}$ and no multiplicities (i.e. $M_n\equiv1$). Its zeta-function reads
\begin{align*}
\zP(s) = \sum_{n=0}^{\infty} e^{-s n^2} = \tfrac{1}{2}(\theta_3(0;e^{-s})+1), \qquad \text{for } \Re(s) > 0,
\end{align*}
where $\theta_3$ is the Jacobi theta function we met in Example \ref{ex:theta}, but now playing the role of the zeta-function. It is a lacunary function and does not admit a meromorphic continuation to the left complex half-plane.

On the other hand, since 
\begin{align*}
N(\lambda) = \sum_{\{n : e^{n^2} \leq \lambda\}} 1 \approx \sqrt{\log \lambda}
\end{align*}
and the function $\sqrt{\log}$ is slowly varying,
Theorem \ref{thm:Hardy-Littlewood} implies
\begin{align*}
\heat_P(t) = \sum_{n=0}^{\infty} e^{-t e^{n^2}} \approx \sqrt{-\log t}, && \text{as } t \downarrow 0.
\end{align*}
\end{example}

However, the assumption of $N$ being a regularly varying function is a non-trivial one and puts limitations on the usefulness of Theorem \ref{thm:Hardy-Littlewood}.

\begin{example}
Let $P$ be an operator with $\lambda_n(P) = 2^{n}$ and $M_n(P) = 2^n$. Then,
\begin{align*}
N(\lambda) = \sum_{n: \, 2^n \leq \lambda} 2^n = \sum_{n=0}^{\lfloor \log_2 \lambda \rfloor} 2^n = 2^{\lfloor \log_2 \lambda \rfloor + 1} - 1 \approx \lambda \cdot F(\lambda)
\end{align*}
but $F(\lambda) \vc 2^{\lfloor \log_2 \lambda \rfloor + 1} \lambda^{-1}$ is \emph{not} slowly varying. Indeed, the limit $\lim_{\lambda \to \infty} F(x \lambda)/F(\lambda)$ exists (and is equal to 1) only if $x = 2^m$ with $m \in \N$.

On the other hand,
\begin{align*}
\zP(s) = \sum_{n = 0}^{\infty} 2^n 2^{-sn} = \frac{1}{1-2^{s-1}}.
\end{align*}
The results of Section \ref{sec:exp} can be easily adapted to this setting yielding the following \emph{absolutely exact} expansion
\begin{align*}
\heat_P(t) = -\frac{t^{-1}}{\log 2} \, \sum_{k \in \Z} \Gamma\left( 1 - \frac{2 \pi i k}{\log 2} \right) t^{\tfrac{2 \pi i k}{\log 2}} + \sum_{n = 0}^{\infty} \frac{(-1)^n}{n!} \, \frac{t^n}{1-2^{-n-1}},
\end{align*}
which is valid for all $t>0$. Note that the leading term of $\heat_P$ is of the form $t^{-1} G(t)$, where $G$ is oscillating and thus not slowly varying.
\end{example}

We have seen that the spectral growth function \eqref{spectral_growth} of an operator is a primary quantity that allows us to determine whether an asymptotic expansion of the associated heat trace can be obtained via the inverse Mellin transform. Proposition \ref{cor:hk_abscissa} establishes an upper bound on the spectral growth of $P$, whereas Proposition \ref{prop:lacunary} gives a lower one. One might therefore expect that for the intermediate values of growth rates, the inverse Mellin transform technique guarantees the existence of an asymptotic expansion of heat traces. However, this is not the case as we shall show below.

\begin{Prop}\label{prop:nonexample}
Let $P$ be an operator such that its spectral growth function satisfies
\begin{align}\label{SG_a}
N(\lambda) \approx (\log \lambda)^a, && \text{ as } \lambda \to \infty,
\end{align}
for some $a \in \R^+$. If  $a \notin \N$, then $\zP$ has an abscissa of convergence $L = 0$, but is not meromorphic around $s=0$.
\end{Prop}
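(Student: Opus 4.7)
The plan is to establish $L = 0$ directly from Proposition~\ref{cor:zeta_abscissa}, then pin down the precise behavior of $\zeta_P(s)$ as $s \to 0^+$ along the positive real axis, and finally observe that the resulting branch-point singularity of non-integer exponent is incompatible with any meromorphic continuation.

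For the abscissa, hypothesis \eqref{SG_a} together with $(\log\lambda)^a = \oinf(\lambda^\alpha)$ for every $\alpha > 0$ yields $N(\lambda_n) = \OO_\infty(\lambda_n^\alpha)$ for every $\alpha > 0$. Combined with $L \geq 0$, Proposition~\ref{cor:zeta_abscissa} gives $L = 0$, so $\zeta_P$ is holomorphic on $\Re(s) > 0$.

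For the behavior of $\zeta_P$ near $s = 0$, I would pass to the Stieltjes representation and integrate by parts. For $\Re s > 0$, the boundary contribution vanishes at infinity (since $\lambda^{-\Re s} N(\lambda) = \OO_\infty(\lambda^{-\Re s}(\log\lambda)^a) \to 0$) and at $\lambda_0^-$ (since $N = 0$ there), so
\begin{equation*}
\zeta_P(s) = s \int_{\lambda_0}^\infty N(\lambda)\, \lambda^{-s-1}\, d\lambda.
\end{equation*}
Pick $\lambda_1 \geq \lambda_0$ such that $N(\lambda) = (\log \lambda)^a(1 + \varepsilon(\lambda))$ on $[\lambda_1, \infty)$ with $\varepsilon(\lambda) \to 0$. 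The contribution of $[\lambda_0, \lambda_1]$ is entire in $s$ of order $\OO_{s \to 0^+}(s)$. For the main piece, substituting $u = \log\lambda$ and invoking $\int_0^\infty u^a e^{-su} du = \Gamma(a+1)\, s^{-a-1}$ yields
\begin{equation*}
s \int_{\lambda_1}^\infty (\log\lambda)^a \lambda^{-s-1}\, d\lambda = \Gamma(a+1)\, s^{-a} + \OO_{s \to 0^+}(s).
\end{equation*}
A standard splitting $[\lambda_1,\infty) = [\lambda_1,\lambda_2] \cup [\lambda_2, \infty)$, with $\lambda_2$ chosen so that $|\varepsilon| < \eta$ on the tail, bounds the error integral by $\eta\, \Gamma(a+1)\, s^{-a}(1+\oo_{s\to 0^+}(1)) + \OO_{s\to 0^+}(s)$, which is $\oo_{s\to 0^+}(s^{-a})$ since $\eta > 0$ is arbitrary. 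Combining,
\begin{equation*}
\zeta_P(s) = \Gamma(a+1)\, s^{-a} + \oo_{s \to 0^+}(s^{-a}).
\end{equation*}

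Finally, suppose for contradiction that $\zeta_P$ admits a meromorphic continuation to a neighborhood of $0$. If $\zeta_P$ is holomorphic at $0$, it must stay bounded as $s \to 0$, contradicting $\zeta_P(s) \sim \Gamma(a+1)\, s^{-a} \to \infty$ along positive reals (as $a > 0$). Otherwise $\zeta_P$ has a pole of some order $m \geq 1$ and $s^m \zeta_P(s)$ tends to a non-zero finite constant as $s \to 0$; but the asymptotic forces $s^m \zeta_P(s) \sim \Gamma(a+1)\, s^{m-a}$, which admits a non-zero finite limit only when $m = a$, impossible because $a \notin \N$. Either way a contradiction. The main delicate step is the $\eta$-estimate controlling the remainder: without it one could fear that a hidden $s^{-a}$ contribution in the error cancels the main term, but the uniform smallness of $\varepsilon$ on tails, combined with the compactness of $[\lambda_1,\lambda_2]$, rules this out.
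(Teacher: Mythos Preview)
Your proof is correct and follows essentially the same route as the paper: both establish $L=0$ via Proposition~\ref{cor:zeta_abscissa}, derive the asymptotic $\zeta_P(s)\sim\Gamma(a+1)\,s^{-a}$ as $s\downarrow 0$ along the positive reals, and conclude that a non-integer $a$ is incompatible with a pole at $0$. The only difference is presentational: the paper obtains the asymptotic by invoking the Hardy--Littlewood theorem (Theorem~\ref{thm:Hardy-Littlewood}) after the substitution $\mu=\log\lambda$, whereas you prove that (easy, Abelian) direction by hand via integration by parts and the $\eta$-splitting.
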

\begin{proof}
Proposition \ref{cor:zeta_abscissa} implies that $\zP$ is well-defined with an abscissa of convergence $L = 0$. To see that $\zP$ is not meromorphic around $s=0$ we invoke the Hardy-Littlewood Tauberian Theorem \ref{thm:Hardy-Littlewood} once again. 

With $M \in \N$ such that $\lambda_M \leq 1$ and $\lambda_{M+1} > 1$, we write the zeta-function $\zP$ on $\R^+$ as Riemann-Stieltjes integral (compare \cite[Section 13.3]{Shubin})
\begin{align*}
\zP(s) & = \int_0^{\infty} \lambda^{-s} \, dN(\lambda), && \trunc{\zeta}_{P}^{M}(s) = \int_1^{\infty} \lambda^{-s} \, dN(\lambda).
\end{align*}
On the strength of \cite[Theorem 11a]{Widder} we can change variables in the Riemann-Stieltjes integral and rewrite
\begin{align*}
\trunc{\zeta}_{P}^{M} = \int_1^{\infty} e^{-s \log \lambda} \, dN(\lambda) = \int_0^{\infty} e^{-s \mu} \, d\wt{N}(\mu),
\end{align*}
for $s>0$, with $\wt{N}(\mu) = \sum_{n : \log \lambda_n \leq \mu} M_n$, which is of bounded variation on $\R_+$. Assumption \eqref{SG_a} implies
\begin{alignat*}{2}
\wt{N}(\mu)&\approx\mu^a, &\qquad&\text{as }\mu\to\infty,\\
\intertext{thus, by Theorem \ref{thm:Hardy-Littlewood}, we have}
\trunc{\zeta}_{P}^{M}(s)&\approx s^{-a}\Gamma(a+1),&&\text{as }s\downarrow0.
\end{alignat*}
Therefore, if an extension of $\trunc{\zeta}_{P}^{M}$ to $\Re(s) \leq 0$ exists at all, then the point $s=0$ is not a pole unless $a \in \Np$. Since $\trunc{\zeta}_{P}^{M}$ and $\zP$ differ by an entire function, the same conclusion holds for $\zP$.
\end{proof}

Let us stress that Proposition \ref{prop:nonexample} has a ``no-go'' character only. Having $a \in \Np$ does not imply that $\zP$ can be extended to the left complex half-plane in a meromorphic way.

Proposition \ref{prop:nonexample} is in accordance with Proposition \ref{prop:lacunary}, but it also provides examples of pathological operators, which are not lacunary.

\begin{example}
Consider an operator $P$ with $\lambda_n = e^{n^{2/3}}$ and no multiplicities (i.e. $M_n\equiv1$). Although its eigenvalues grow slower than exponentially, the zeta-function $\zP$ cannot be meromorphic around $s=0$, as
\begin{align*}
\zP(s) \approx \Gamma(-1/2) s^{-3/2},&&\text{as }s\downarrow0 \text{ along } \Im(s)=0.
\end{align*}
On the other hand, by Theorem \ref{thm:Hardy-Littlewood},
\begin{align*}
\heat_P(t) = \sum_{n=0}^{\infty} e^{-t e^{n^{2/3}}} \approx (-\log t)^{3/2}, && \text{as } t \downarrow 0.
\end{align*}
\end{example}

The above examples show that Tauberian theorems and the inverse Mellin transform have different domains of applicability and can be considered as complementary tools in the study of the asymptotic behaviour of heat traces.

Let us now sum up the results of the paper. In Section \ref{sec:general} we have presented general theorems on the existence and convergence of heat traces associated with positive unbounded operators with compact inverses. The necessary conditions were formulated in terms of spectral zeta-functions. The non-existence of meromorphic extensions of the latter sets a natural limitation of applicability of the inverse Mellin transform. However, lacunary operators seem to be pathological anyway from the viewpoint of noncommutative geometry. For instance, if a spectral triple would have a lacunary Dirac operator, then it would not have a dimension spectrum \cite{ConnesMoscovici}. On the other hand, the assumption \eqref{assum:zeta_b}, even in its more refined version \eqref{hypo_absolute}, seems to be a mild one. In fact, a similar one was adopted in \cite[p. 206]{ConnesMoscovici} ``on the technical side''.

In Section \ref{sec:examples} devoted to examples we always worked with operators, the spectrum of which is known explicitly. Therefore, the operatorial aspect of the problem was somewhat hidden. In practice, one is rarely granted the comfort of knowing the full spectrum of a given operator. Even if this is the case, one would like study the behaviour of the heat trace when a fixed operator $P$ is perturbed to $P + A$, with some bounded $A$. Clearly, a bounded perturbation of $P$ would not change the leading behaviour of $\heat_P(t)$ at small $t$, but it can, at least in principle, spoil the asymptotic expansion.

Indeed, perturbations may drastically change the analytic properties of associated zeta-functions. For instance, the modulus of the Dirac operator on the standard Podle\'s sphere has (up to a multiplicative constant) the following eigenvalues $\lambda_n(|\DD_q|) = q^{-n} - q^n$  (see \cite{dab_sit}). It can thus be considered a sum of an operator of exponential spectrum $P$ and a trace class perturbation $Q$. It turns out that the poles of $\zeta_{P+Q}$ form a regular lattice on the left complex half-plane \cite{PodlesSA}, whereas the poles of $\zP$ are located only on the imaginary axis (see Section \ref{sec:exp}). Although, the convergence properties of the small $t$ expansion of $\heat_{P+Q}(t)$ are not altered by the perturbation $Q$, but the estimates of contour integrals are much more subtle and tedious to control (see \cite[Proposition 4.3]{PodlesSA}).

We regard the investigation of the impact of perturbations on the asymptotic expansion of heat traces to be an important and natural next step in the study initiated in this paper. We hope that a combination of our results with the techniques developed in \cite{DixZeta6,DixZeta4,DixZeta5} can lead to a better understanding of heat traces outside the realm of classical pseudodifferential operators.

A promising application of the results presented in this paper is the possibility of performing exact computations of the spectral action. Let us note that the non-perturbative calculations carried out in \cite{ConnesFLRW,Marcolli1,Marcolli2,Piotrek1,TehPHD} using Poisson summation formula are, according to the nomenclature adopted in Definition \ref{def:almost}, only \emph{almost exact}. However, with the help of Theorem \ref{cor:hk_exact} one is actually able to get exact formulae via (distributional) Laplace transform (see \cite{PhD,PodlesSA}). This technique may prove potentially useful in the study of cosmic topology \cite{Marcolli1}.

\section*{Acknowledgements}

The authors would like to thank Bruno Iochum and Andrzej Sitarz for numerous valuable discussions.

Project operated within the Foundation for Polish Science
IPP Programme``Geometry and Topology in Physical Models''
co-financed by the EU European Regional Development Fund,
Operational Program Innovative Economy 2007-2013

\bibliographystyle{plain}
\bibliography{heatBIB}{}

\end{document}